\documentclass[12pt]{article}   
\usepackage{arxiv}

\usepackage{orcidlink,thumbpdf,lmodern}

\usepackage{xcolor}
\usepackage{tikz}
\usepackage{float}
\usetikzlibrary{positioning, fit, calc, shapes, arrows, matrix}
\usepackage{subcaption}
\usepackage{amsthm,amsmath,amsfonts,amssymb,amsbsy}
\usepackage{tabularray}
\UseTblrLibrary{booktabs}
\usepackage{etoolbox}
\usepackage{natbib}

\makeatletter
\newcommand\code{\bgroup\@makeother\_\@makeother\~\@makeother\$\@codex}
\def\@codex#1{{\normalfont\ttfamily\hyphenchar\font=-1 #1}\egroup}
\makeatother

\let\proglang=\textsf
\newcommand{\pkg}[1]{{\fontseries{m}\fontseries{b}\selectfont #1}}

\newcommand{\depth}{\texttt{depth\ }}
\newcommand{\lati}{\texttt{lat\ }}
\newcommand{\longi}{\texttt{long\ }}
\newcommand{\stat}{\texttt{stations\ }}
\newcommand{\magn}{\texttt{mag\ }}
\newcommand{\dimX}{d_x}
\newcommand{\dimT}{d_t}
\newcommand{\xX}{\mathbf{x}}
\newcommand{\xt}{t}
\newcommand{\xT}{\mathcal{T}}
\newcommand{\xE}{\mathbb{E}}

\newcommand{\xR}{\mathbb{R}}
\newcommand{\bs}{\boldsymbol}
\newcommand{\proc}[3]{
    \ifstrempty{#3}%
    {%
        \ifstrempty{#2}%
        {%
            #1
        }{%
            #1_{#2}
        }%
    }{%
        (#1_{#2})_{#3}
    }%
}
\newcommand{\slogt}{\Psi} 

\newcommand{\tikzxmark}{%
\tikz[scale=0.23] {
    \draw[line width=0.8,line cap=round] (0,0) to [bend left=14] (1,1);
    \draw[line width=0.7,line cap=round] (0.2,0.95) to [bend right=15] (0.8,0.05);
}}

\newcommand{\tikzcmark}{%
\tikz[scale=0.23] {
    \draw[line width=0.7,line cap=round] (0.25,0) to [bend left=10] (1,1);
    \draw[line width=0.8,line cap=round] (0,0.35) to [bend right=1] (0.23,0);
}}

\theoremstyle{plain}
\newtheorem{theorem}{Theorem}[section]

\newtheorem{proposition}[theorem]{Proposition}

\theoremstyle{definition}

\theoremstyle{remark}
\newtheorem*{remark}{Remark}

\title{ SLGP: An R Package for Spatial Conditional Density Estimation}

\author{Athénaïs Gautier\\
\thanks{Version 2.0.0 of the \pkg{SLGP} package, available from CRAN at \url{https://CRAN.R-project.org/package=SLGP}.}\\   
    DTIS, ONERA, Université Paris-Saclay, 91120, Palaiseau, France\\
    E-mail: \url{athenais.gautier@onera.fr}}

\begin{document}
\maketitle


\begin{abstract}
The \pkg{SLGP} package provides nonparametric estimation of \emph{fields} of probability densities indexed by covariates, from unevenly sized samples with few or no replicates. A Spatial Logistic Gaussian Process applies a logistic density transformation to a finite-rank Gaussian process over a joint index-response domain, yielding a prior over conditional densities that adapts in location, shape, and modality. The contribution of this paper is the
implementation: a finite-rank Random-Fourier-Feature representation that keeps the transformed process numerically tame, a log-concave likelihood that makes the fit a convex problem, and a family of quadrature shortcuts for the normalising integrals that turn a per-observation cost into a per-grid cost.
Three estimation modes are available: MAP, Laplace, and full MCMC. We demonstrate the package throughout on the classical Fiji-Tonga earthquake catalogue, estimating how the distribution of hypocentre depths deforms across the region, and on a discrete-response variant.
\end{abstract}

\keywords{conditional density estimation, Gaussian processes, nonparametric Bayes, Random Fourier Features, \proglang{R}}

\section{Introduction}

Estimating a single probability density from data is classical. Estimating a family of densities $p(\cdot \mid \xX)$ that varies smoothly with covariates $\xX$, from samples that are scattered unevenly across the index space and may contain few or no replicates, is not. This is the setting the \pkg{SLGP} package addresses. Figure~\ref{fig:QuakesScatter} further illustrates two features typical of the problems we target. First, the data are spread unevenly across $\xX$: some $\xX$ instances (or bins) hold a handful of observations, others hundreds, and at an arbitrary query point there may be none at all. Second, the shape of the response distribution drifts with $\xX$, not only in terms of mean and variance but also in modalities or skewness, so no single parametric family fits everywhere. Forming a histogram per bin can be an arbitrary task, requires binning the index, discards the smoothness across $\xX$, and breaks down exactly where data are scarce. We want instead a single object that returns a density at any $x$, borrowing strength from neighboring locations. 

\begin{figure}[H]
\centering
    \includegraphics[width=\linewidth]{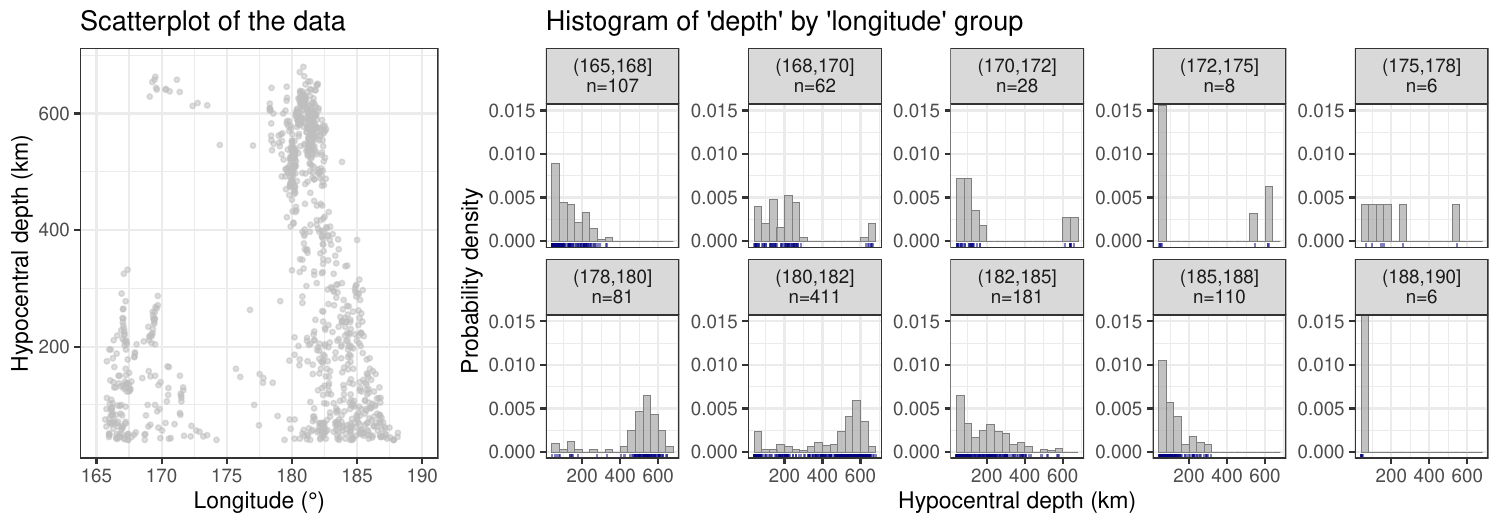}
\caption{A challenging setting in the \code{quakes} dataset: hypocentre \depth against \longi (left), and empirical distributions of \depth within longitude bands (right). Sample sizes per band vary by an order of magnitude, and the distribution of \depth changes shape.}
\label{fig:QuakesScatter}
\end{figure}

Many tools estimate distributions that change with a covariate, but each leaves part of this problem unsolved. Finite mixture models \citep{rojas_conditional_2005, mclachlan_finite_2004} and their Bayesian nonparametric extensions, including Dirichlet process mixtures and related infinite mixture models \citep{jain_split-merge_2004, walker_sampling_2007, papaspiliopoulos_retrospective_2008, ferguson_bayesian_1973, escobar_bayesian_1995}, provide highly flexible conditional densities but typically express the dependence on $\xX$ through mixing weights and often benefit from replicated observations. Conditional kernel density estimators \citep{fan_estimation_1996, hall_methods_1999, silverman_density_1986} are conceptually simple but degrade rapidly when observations are sparse in the covariate space. Similar limitations arise for more general kernel-based conditional distribution estimators \citep{murray_gaussian_2008, rudi_psd_2021, muzellec_learning_2022, donner_efficient_2018}, whose computational cost generally increases with the number of distinct covariate locations. Distributional kriging \citep{aitchison_statistical_1982, egozcue_hilbert_2006, menafoglio_universal_2013, menafoglio_kriging_2014, talska_compositional_2018} interpolates probability distributions over space but typically assumes replicated observations and forms of stationarity that are often unrealistic in heterogeneous settings. Other approaches include generalized lambda distribution regression \citep{zhu_emulation_2020, zhu_surrogate_2023}, which provides a flexible semiparametric family for predominantly unimodal distributions, shape-constrained distributional regression \citep{guntuboyina_nonparametric_2018}, which gains statistical efficiency under strong structural assumptions, and neural conditional density estimators \citep{papamakarios_masked_2017, rothfuss_conditional_2019, papamakarios_neural_2019}, which handle high-dimensional covariates but typically require large training sets and offer limited interpretability or uncertainty quantification. Logistic Gaussian process priors \citep{leonard_density_1978,lenk_logistic_1988,tokdar_posterior_2007,tokdar_bayesian_2010} instead place a genuinely nonparametric prior on a single density by transforming a Gaussian process, while the Spatial Logistic Gaussian Process \citep{gautier_continuous_2021} extends this construction to an entire field of conditional densities indexed by $\xX$.\\
On the software side, several mature \proglang{R} packages cover neighbouring ground. The \pkg{np} package \citep{hayfield_np_2008} implements kernel conditional density and distribution estimation with mixed data types. \pkg{hdrcde} \citep{hyndman_estimating_1996} provides conditional density estimation and highest-density-region graphics. The packages \pkg{gamlss} \citep{rigby_gamlss_2005} and \pkg{bamlss} \citep{umlauf_bamlss_2021} fit distributional regression models in which the parameters of a chosen response family depend flexibly on covariates. Finally \pkg{quantreg} \citep{quantreg} estimates individual conditional quantiles. These tools either impose a parametric response family, target one functional at a time, or require local data abundance. \\
To our knowledge, existing readily available implementations do not simultaneously accommodate scattered and unevenly sized samples, the absence of replicates, and unrestricted variation in distributional shape and modality. That is the gap filled by the SLGP model, and the \pkg{SLGP} package makes it readily accessible in practice.

In line with the objective to demonstrate the potential of SLGP modeling, we adopt as a running example the classical \code{quakes} dataset shipped with base \proglang{R} \citep{r_core_team}. It records $1000$ seismic events of body-wave magnitude above $4.0$ that occurred since 1964 in the Fiji-Tonga region, one of the most seismically active areas in the world; the catalogue originates from the Harvard PRIM-H project. Each event carries its epicentre (\lati, \longi), its hypocentre \depth (between 40 and 680\,km), its magnitude \magn, and the number of \stat that detected it. For the sake of readable slice-based figures, we start with a one-dimensional index and estimate the field of densities of \depth along an east-west transect, indexed by \longi. Section~\ref{subsec:2Dfield} then exploits the full epicentre (\lati, \longi) as a two-dimensional index, a setting where per-bin histograms become untenable but the SLGP applies unchanged. This is an ideal stress test for spatially indexed density estimation. As one moves across the region, the conditional distribution of \depth transitions from unimodal and shallow, to strongly bimodal where shallow and deep seismicity coexist: a regular deformation of shape, spread and modality. Moreover, the sampling is naturally heterogeneous: epicentres cluster along the active structures, so that some cells of the index space contain hundreds of events while others contain none, and little events share the exact same epicentre: there are few replicates. Finally, the ground truth is easy to appraise to the naked eye in a simple scatterplot (Figure~\ref{fig:QuakesScatter}), so readers can validate the fitted field visually against a well-understood geophysical structure.

Before addressing the details and technicalities, Figure~\ref{fig:teaser1} and Figure~\ref{fig:teaser2} show what the package ultimately produces on this dataset. From the $1000$ scattered events, a single call fits the entire field of conditional depth densities across the arc, and a single fitted object then yields densities, quantiles,  moments or the posterior probability of a deep event at any location. The estimate recovers the shallow seismicity near the trench and the deep Wadati-Benioff plane to the west, including in poorly monitored longitudes where no local histogram could be formed \citep{isacks_seismology_1968, frohlich_deep_2006}.

The remainder of the paper explains how this is achieved: the finite-rank representation (Section~\ref{section:practicalElements}) that makes the model tractable, the convex likelihood that makes the fit reliable, and the quadrature that makes it fast, followed by a guided tour of the package (Section~\ref{section:applications}).

\begin{figure}[H]
    \centering
    \begin{subfigure}[t]{\linewidth}
        \centering
        \includegraphics[width=0.9\linewidth]{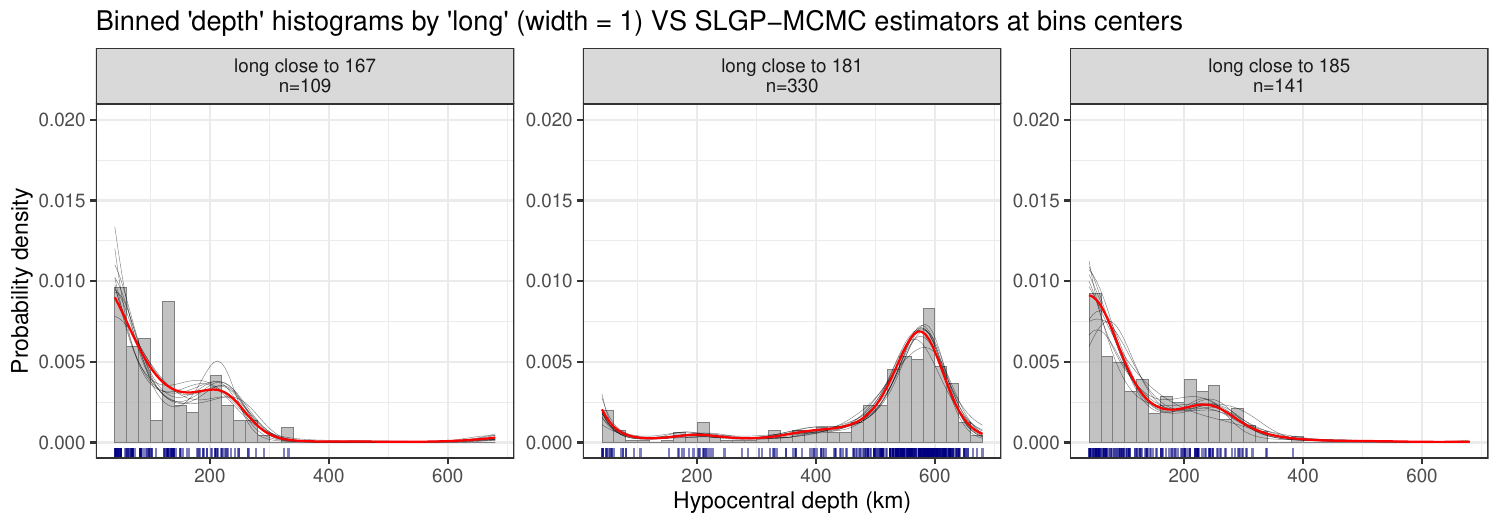}
        \caption{Simultaneous density estimation: posterior predictive depth densities at three longitudes (solid, with credible bands) against binned observations.}
    \label{fig:teaser1a}
  \end{subfigure}

  \begin{subfigure}[t]{\linewidth}
    \centering
    \includegraphics[width=0.9\linewidth]{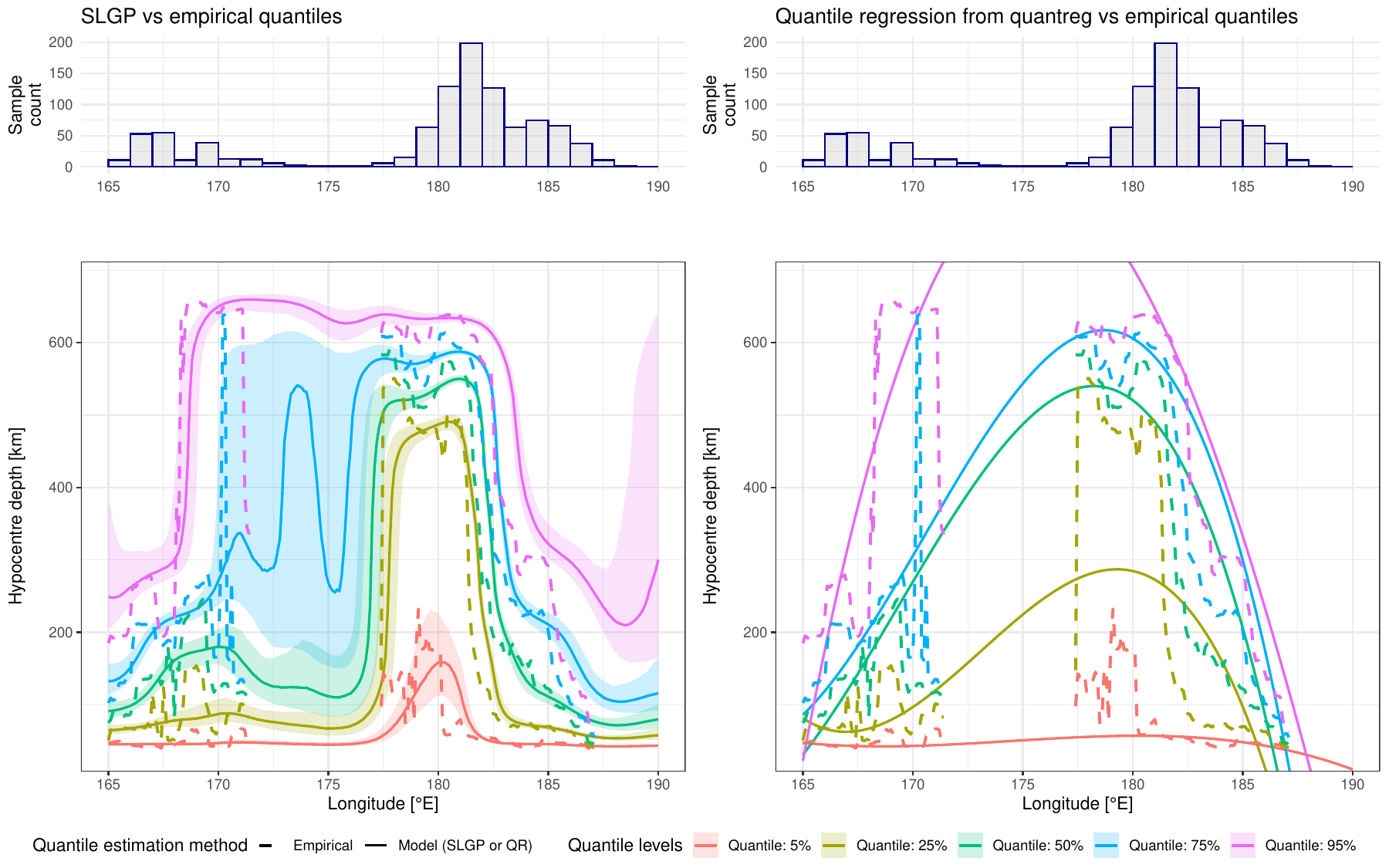}
    \caption{Conditional quantiles of depth across longitude, with posterior uncertainty: median value and 10th-90th percentiles ribbon.}
    \label{fig:teaser1b}
  \end{subfigure}

  \begin{subfigure}[t]{\linewidth}
    \centering
    \includegraphics[width=0.9\linewidth]{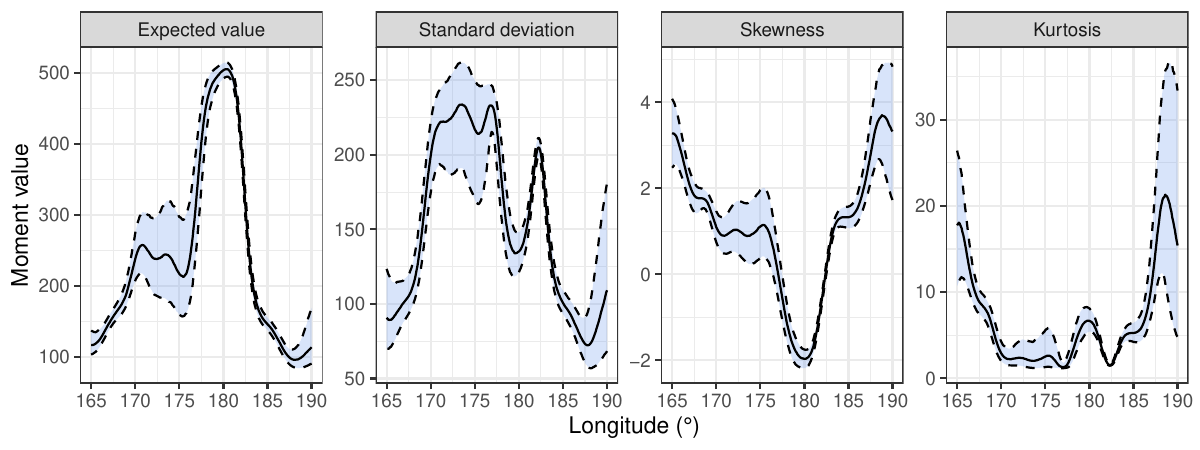}
    \caption{Conditional moments (mean, s.d., skewness, kurtosis) across longitude with uncertainty: median value and 10th-90th percentiles ribbon.}
    \label{fig:teaser1c}
  \end{subfigure}
\caption{One-dimensional index (\code{depth \textasciitilde{} long}): from $1000$ scattered earthquakes, a single SLGP fit yields the full conditional density field (a) and, from the same fitted object, conditional quantiles (b) and moments (c), each with posterior uncertainty. No binning of the data is required.}
\label{fig:teaser1}
\end{figure}

\begin{figure}[H]
\centering
    \includegraphics[width=0.7\linewidth]{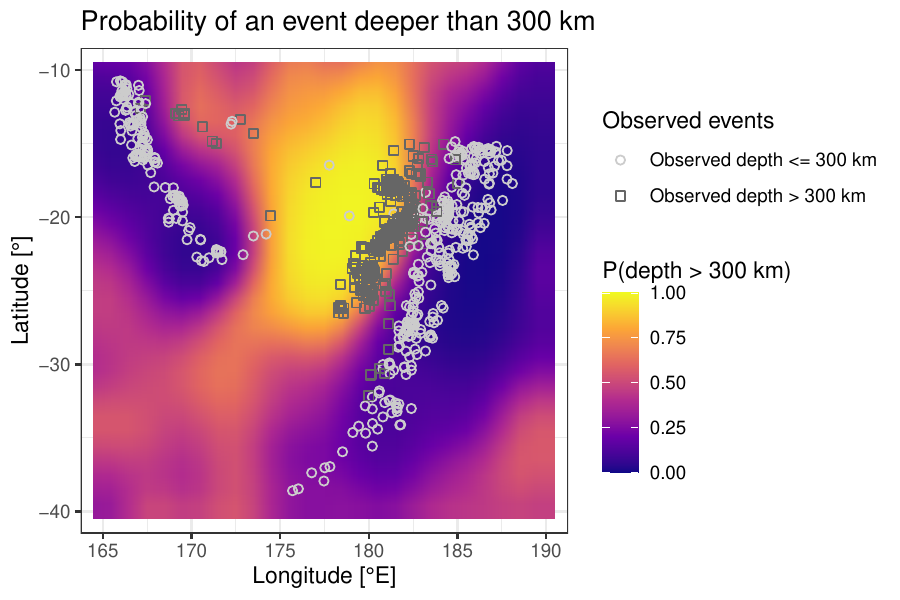}
\caption{Two-dimensional index (\code{depth \textasciitilde{} long + lat}): the posterior probability of a deep event, $P(\text{depth} > 300\,\text{km} \mid \text{long}, \text{lat})$, over the region. This exceedance field derives from the same class of fitted object as the summaries above, now indexed by the full epicentre location.}
\label{fig:teaser2}
\end{figure}
\vspace{-18pt}

\section{The Spatial Logistic Gaussian Process, or SLGP}
\label{section:SLGP}
\subsection{A prior over indexed probability density functions}

The SLGP model, related to \citep{tokdar_bayesian_2010} central to the present contribution, as a continuation of \citep{gautier_goal-oriented_2021, gautier_continuous_2021, gautier_modelling_2023}, is itself a spatial generalization of the Logistic Gaussian Process models, which were established and studied in \citep{lenk_logistic_1988, lenk_towards_1991, leonard_density_1978}.  The SLGP for spatially dependent density estimation builds upon a \textit{well-behaved} GP $\proc{Z}{\xX, t}{(\xX, t) \in D \times \xT}$  and defines the stochastic process obtained from applying the \textit{spatial logistic density} transformation to $Z$ as follows:

\begin{equation}
	\label{eq:informalSLGP}
	\slogt[Z](\xX, t)= \dfrac{e^{Z_{\xX, t}}}{\int_\xT e^{Z_{\xX, u}} \,du } \text{ for all } (\xX, t) \in D \times \xT
\end{equation}
Here and throughout the document, we consider a compact and convex response space $\xT \subset \xR^{\dimT}$ with $\dimT \geq 1$ and we further assume that $\xT$ has a positive Lebesgue measure. We also consider that $D \subset \xR^{\dimX}$ with $\dimX \geq 1$. 

At any fixed $\xX$, $\slogt[Z](\xX, \cdot)$ hence returns a random function that is, by construction, positive and integrates to one. Hence, a SLGP can be (informally) seen as a field of random pdfs. The precise mathematical nature of the objects involved (whether in terms of random measures or densities, and fields thereof) requires careful treatment, for which we refer the reader to \citep{gautier_continuous_2021}.

Here, we adopt a more intuitive approach and assume that the objects are well-behaved enough to speak of SLGP as random fields of densities. We also show in Figure~\ref{fig:PriorSLGP} three draws from a SLGP prior over the running example's domain: fields of densities for \depth, indexed by \longi. Since a SLGP models directly the field as a whole, drawing one realisation of it yields a collection of pdfs indexed by $\xX$. Under mild conditions on the transformed GP, these pdfs will vary smoothly when $\xX$ changes.

\begin{figure}[H]
\centering
    \includegraphics[width=\linewidth]{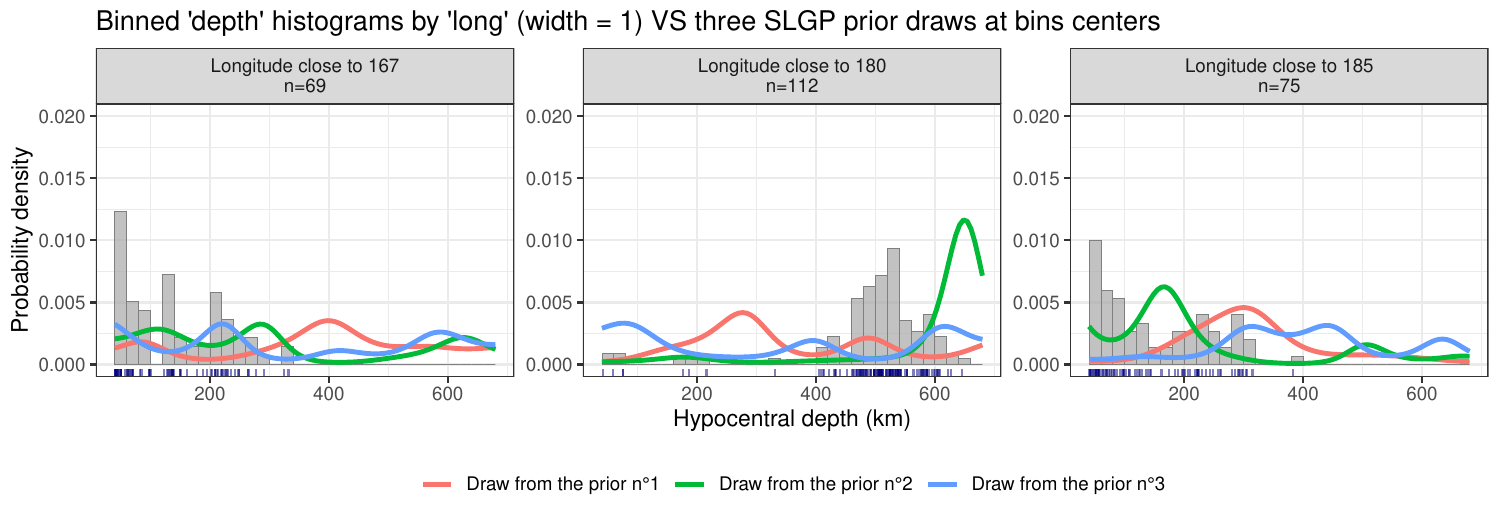}
\caption{Prior densities of \depth given \longi at three representative slices. Binned histograms of nearby observations are shown, since replicates are too scarce for a local histogram.}
\label{fig:PriorSLGP}
\end{figure}

Since these are draws from the prior distribution, they do not yet incorporate information from the actual data. Therefore, it is natural that they do not match the empirical histograms. However, the prior samples display a reasonable level of variability and structure, suggesting that the model would be well-suited for density estimation after incorporating data through posterior inference.

\subsection{Conditioning on data}

We focus on implementation and practicalities. For this purpose, we consider that our observations are obtained by independent sampling of a reference probability density field $p_0$. By that, we mean that our dataset consists of $n$ couples of locations and observations $ \{(\xX_i, \xt_i)\}_{1 \leq i \leq n}$, where the $\xX_i$ are in $[0, 1]^{\dimX}$. Moreover, we assume the $t_i$'s are obtained by independent sampling of random variables $T_i$ with respective densities $p_0(\xX_i, \cdot)$. The random vector of observations is denoted by $\mathbf{T}= (T_i)_{1 \leq i \leq n}$, while we use $\mathbf{\xt} = (\xt_i)_{1 \leq i \leq n}$ for a realisation of $\mathbf{T}$.

Under the assumption that the observations follow the SLGP model and are independent, we obtain a density of observations conditioned on the underlying GP:
\begin{equation}
	\label{eq:int_post}
	\pi \left( \mathbf{T}= \xt \vert Z \right) =  \prod\limits_{i=1}^n \slogt[Z](\xX_i, t_i) =  \prod\limits_{i=1}^n \dfrac{e^{\proc{Z}{\xX_i, t_i}{}}}{\int_{ \xT } e^{\proc{Z}{\xX_i, u}{}} \,du } 
\end{equation}

We insist on the fact that this equation derives naturally from the independence assumption, and that we made no assumptions relative to the index $\xX$. This likelihood is valid for sample locations irregularly spaced, with small sample sizes and possibly few or no replicates.

\begin{remark}
 Note that Equation~\ref{eq:int_post} also applies when the indexing variable $\xX$ is not compactly supported, or even not numeric, as long as one can define a covariance kernel on its domain (and hence a GP and an SLGP).
\end{remark}
 	
This formulation allows us to incorporate observed data into the SLGP framework and carry out Bayesian inference, as it was displayed in Figure~\ref{fig:teaser1}.

\section{Implementation in practice}
\label{section:practicalElements}

Applying Spatial Logistic Gaussian Process (SLGP) models to real-world inference comes with several challenges. Each of the four subsections below states one obstacle and the design choice that resolves it: the infinite-dimensional response integral, addressed by a finite-rank representation and a Fourier-type basis chosen for its behaviour under the exponential (Section 3.1), the choice of an inference paradigm, made reliable by the log-concavity of the likelihood (Section 3.2), the cost of the normalising integrals, reduced by a family of quadrature shortcuts (Section 3.3), and the selection of the GP hyper-parameters (Section 3.4). Together they are what the package contributes over a naive transcription of the model.

\subsection{Addressing the dimensionality of the problem}

The first challenge for our model lies in the fact that the integrals in Equation~\ref{eq:int_post} involve values of $Z$ over the whole response domain. This infinite dimensional object makes likelihood-based computations cumbersome. 

\subsubsection{Finite rank GPs and their use for SLGP modelling}

We propose a way to reduce the dimensionality by considering only finite rank GPs. A GP $Z$ on a generic domain $S$ is called a \emph{finite rank} GP if there exists $p \in \mathbb N$, and a family of functions 	$\left( f_{i} \right)_{1\leq i \leq p}$ on $S$ such that: 
	\begin{equation}
		Z(s) = \sigma \sum_{j=1}^p f_{j}(s) \varepsilon_j   , \ \forall s \in S
	\end{equation}
	where $\bs \varepsilon = \left( \varepsilon_1, ..., \varepsilon_p \right)^\top$ is a random vector of $p$ i.i.d. $\mathcal{N}(0, 1)$ random variables.

\begin{remark}
    Note that from now on, we will use the notation $\varepsilon$ (resp. $\bs \varepsilon$ ) to denote the random variable (resp. random vector), and $\epsilon$ (resp. $\bs \epsilon$) for fixed values and realisations thereof.
\end{remark}

In particular, we will be interested in finite-rank GPs defined over $D \times \xT = [0, 1]^{\dimX+\dimT} $, and as such consider function $\left( f_{i} \right)_{1\leq i \leq p}$ on $[0, 1]^{\dimX+\dimT}$. This yields finite-rank GPs that we can write as follows:
	\begin{equation}
		\label{eq:GPextension}
		Z(\xX, \xt) =  \sigma \sum_{j=1}^p f_{j}(\xX, \xt) \varepsilon_j = \sigma \bs \varepsilon^\top \bs F_{}(\xX, \xt)  , \ \forall \xX \in D, \ \xt\in \xT, \ \
	\end{equation}
	where $\bs F_{}(\xX, \xt):=\left( f_{j}(\xX, \xt) \right)_{1\leq j \leq p}$ is the vector of basis functions evaluated at $(\xX, \xt)$ and $\bs \varepsilon$ is a random vector of $p$ i.i.d. $\mathcal{N}(0, 1)$ random variables.

\begin{remark}
	When the $f_{i}$ are $L^2$ orthonormal, this coincides with the Karhunen-Loève expansion of the GP. However, for most kernels, this expansion is not analytically known, and our application does not benefit from the basis functions being orthonormal.
\end{remark}

\begin{remark}
	A finite rank GP defined as in Equation~\ref{eq:GPextension} has the following covariance kernel:
	\begin{equation}
		\text{Cov}\left( Z(\xX, \xt) , Z(\xX', \xt') \right) = \sigma^2  \sum_{j=1}^p f_{j}(\xX, \xt) f_{j}(\xX', \xt') 
	\end{equation}
\end{remark}

The posterior distribution of $\bs \varepsilon $, given data $\{(\xX_i, t_i)\}_{i=1}^n$ can be derived from Equation~\ref{eq:int_post}. Denoting by $\phi_p$ the pdf of the $p$-variate standard normal distribution, we have:
\begin{equation}
		\label{eq:GPposterior}
	\pi[\bs \epsilon \vert  \mathbf{T} = \mathbf{t} ] \propto   \phi_p\left( \bs\epsilon \right)  \prod\limits_{i=1}^n \dfrac{e^{\sigma \sum_{j=1}^p \epsilon_j f_{j}(\xX_i, t_i)}}{ \int_\xT e^{\sigma \sum_{j=1}^p \epsilon_j f_{j}(\xX_i, u)}  \,du}
\end{equation}

\subsubsection{Basis functions considered}

Because the SLGP exponentiates the latent process, the choice of basis is a
question of numerical stability before it is a question of approximation power.
A basis that is suitable in a linear model can be unusable inside $\exp(\cdot)$: Figure~\ref{fig:badBasisFun} shows finite-rank draws built from Legendre polynomials swinging between near-flat and spuriously spiked densities and leaving floating-point range as the rank grows. This rules out the obvious orthonormal-polynomial choice and motivates the bounded, oscillatory Fourier-type bases that we adopt, and that repeated numerical experiments identified as particularly well behaved under the transformation. 

\begin{figure}[!h]
    \centering
    \includegraphics[width=0.95\linewidth]{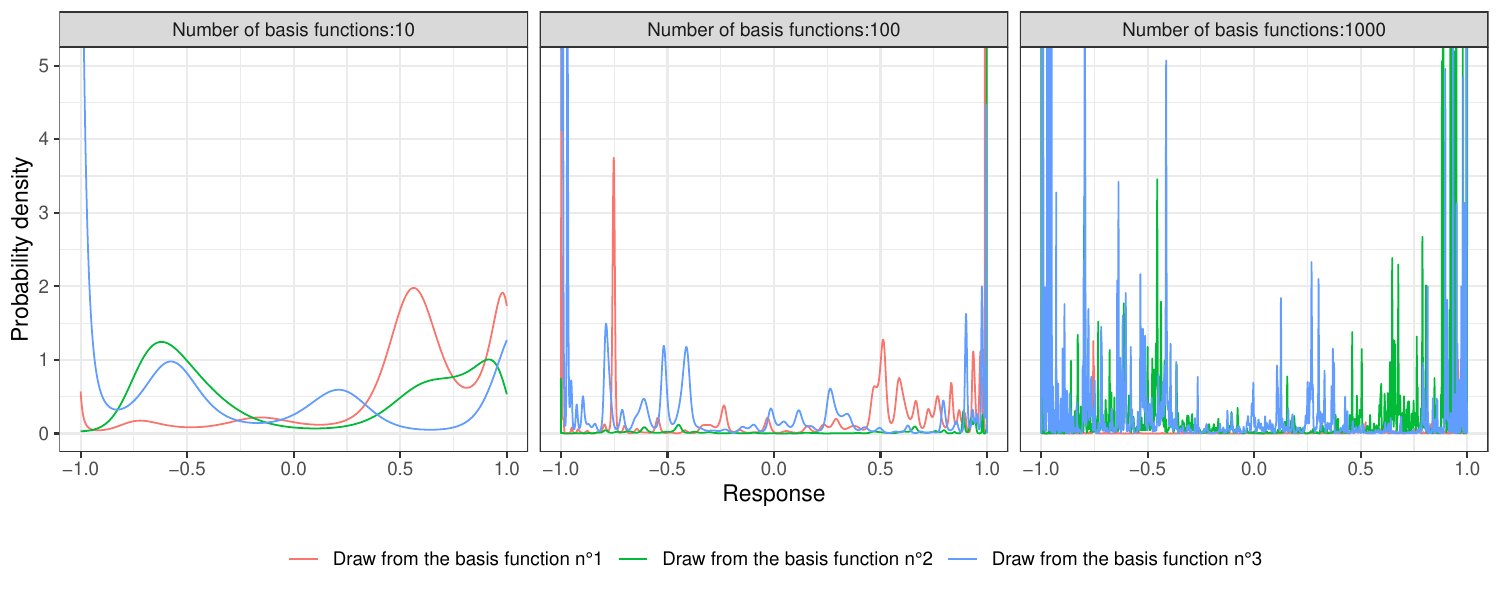}
\caption{Three draws of a finite-rank SLGP indexed at the same value of $\xX$, with basis functions being based on the Legendre Polynomials of order up to $p-1$.}
\label{fig:badBasisFun}
\end{figure}

\vspace{-6pt}

From here on, and to simplify notations, we focus on the setting where
$\xT = [0,1]$, i.e. $\dimT = 1$, the general case follows by replacing
$\dimX+1$ with $\dimX+\dimT$ throughout.

\vspace{-6pt}
\paragraph{Leveraging inducing points}

In \cite{tokdar_towards_2007}, the authors focus on density estimation. To this extent, they rely on LGPs (the non-spatialised counterpart of SLGPs). In their work, they propose to address the dimensionality by considering a reasonable number of inducing points. Instead of working with a GP $Z=\proc{Z(t)}{}{t\in \xT} \sim \mathcal{GP}(0, k)$, their models approximate it by $W(t) := \xE[Z(t) \vert Z(t_1), ..., Z(t_p)]$ (for $p \geq 1$, $t, t_1, ..., t_p \in \xT$).
\begin{proposition}
    The GP $W(t) := \xE[Z(t) \vert Z(t_1), ..., Z(t_p)]$ (for $p \geq 1$, $t, t_1, ..., t_p \in \xT$) is a finite rank GP over $\xT$.
\end{proposition}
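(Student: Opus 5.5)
The plan is to write the conditional expectation explicitly via Gaussian conditioning, and then rewrite it in the form of Equation~\ref{eq:GPextension} with $p$ basis functions and i.i.d.\ standard normal weights.

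Let $\mathbf{Z}_p := (Z(t_1), \dots, Z(t_p))^\top$, let $K := (k(t_i,t_j))_{1\le i,j\le p}$ be its covariance matrix, and let $\mathbf{k}(t) := (k(t,t_1),\dots,k(t,t_p))^\top$. Since $(Z(t), \mathbf{Z}_p)$ is a centred Gaussian vector, the standard conditioning formula gives
\begin{equation*}
W(t) = \mathbf{k}(t)^\top K^{+} \mathbf{Z}_p ,
\end{equation*}
where $K^{+}$ is the Moore--Penrose pseudo-inverse. It reduces to $K^{-1}$ when the $t_i$ are distinct and $k$ is strictly positive definite. The pseudo-inverse version holds in general because $\mathbf{Z}_p$ lies almost surely in the range of $K$. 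Hence $W$ is linear in the Gaussian vector $\mathbf{Z}_p$, so it is a centred GP whose sample paths lie in the span of $k(\cdot,t_1),\dots,k(\cdot,t_p)$.

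Next, whiten $\mathbf{Z}_p$. Take a square root $K = LL^\top$, for instance $L = K^{1/2}$ from the spectral decomposition, which is valid even when $K$ is singular. Then there is a vector $\bs\varepsilon \sim \mathcal{N}(0, I_p)$, possibly defined on an enlarged probability space in the degenerate case, such that $\mathbf{Z}_p = L\bs\varepsilon$ almost surely. In the nondegenerate case one can simply take $\bs\varepsilon = L^{-1}\mathbf{Z}_p$. Substituting gives $W(t) = \mathbf{k}(t)^\top K^{+} L \bs\varepsilon$. Setting $\sigma = 1$ (or any $\sigma>0$ absorbed into the functions) and defining
\begin{equation*}
f_j(t) := \bigl(L^\top K^{+} \mathbf{k}(t)\bigr)_j, \quad 1 \le j \le p,
\end{equation*}
yields $W(t) = \sigma\sum_{j=1}^p f_j(t)\varepsilon_j$. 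This is exactly the finite-rank form, with the same $p$.

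As a sanity check, the covariance is $\sum_j f_j(t)f_j(t') = \mathbf{k}(t)^\top K^{+} K K^{+} \mathbf{k}(t') = \mathbf{k}(t)^\top K^{+}\mathbf{k}(t')$. This matches the Nyström/inducing-point kernel and agrees with the covariance formula in the remark above.

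The only delicate step is the degenerate case, where $K$ is singular because of repeated points or a degenerate kernel. There the issue is to justify both the pseudo-inverse conditioning formula and the existence of an i.i.d.\ standard normal $\bs\varepsilon$ with $\mathbf{Z}_p = L\bs\varepsilon$. I would handle it by diagonalising $K = U\Lambda U^\top$ and noting that the components of $U^\top\mathbf{Z}_p$ attached to zero eigenvalues vanish almost surely. On the remaining $r \le p$ eigen-directions one can normalise directly. This gives a representation of rank $r \le p$, which can be padded with $f_j \equiv 0$ (and independent dummy normals) to reach exactly $p$ terms.
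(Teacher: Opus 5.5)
Your proof is correct and follows the paper's argument. Gaussian conditioning gives $W(t)=k_p(t)^\top K^{-1}\mathbf{Z}_p$, and whitening $\mathbf{Z}_p$ by $K^{-1/2}$ yields $f_j(t)=(K^{-1/2}k_p(t))_j$, which is exactly what your $L^\top K^{+}\mathbf{k}(t)$ reduces to when $K$ is invertible. Your pseudo-inverse and enlarged-probability-space treatment of singular $K$ is a valid extension of a case the paper simply excludes by assuming $K$ positive definite.
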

\begin{proof}
Let us consider a Gaussian random vector of size $p$, $\mathbf{Z}_p:=(Z(t_1), ..., Z(t_p))^\top$. Let us also denote $K = (k(t_i, t_j))_{1 \leq i, j \leq p}$ the covariance matrix of the chosen design and $k_p(t) = (k(t_i, t))_{1 \leq i \leq p}  (t \in \xT) $. Then, assuming $K$ is positive definite, for any $t\in \xT$:

\begin{align}
    W(t) &= k_p(t)^\top K^{-1} \mathbf{Z}_p = k_p(t)^\top  K^{-1/2} \mathbf{X}_p
\end{align}
where $\mathbf{X}_p = K^{-1/2} \mathbf{Z}_p $ is multivariate standard normal. 

Therefore, setting $f_i(t)$ to be the $i$-th coordinate of the vector $k_p(t)^\top K^{-1/2} $ yields that $W(t) = \sum_{i=1}^p X_i f_i(t)$ where the $X_i$'s are i.i.d. $\mathcal{N}(0, 1)$. 
\end{proof}

We implemented this approach to defining finite-rank SLGPs but we believe that conditioning GPs does not scale well, as the number of inducing points required would blow-up with dimensionality increase. Since the approach requires to compute the inverse square root of a covariance matrix, the numerical stability of the approach would suffer from an increase of the size of this matrix.\\

Instead, we leverage Fourier-type bases, i.e. collections of: $(\cos(\bs \omega^\top [\xX, t]), \sin(\bs \omega^\top [\xX, t]))$, for varying $\bs \omega$'s in $ \xR^{\dimX+1}$. We implement different ways to select the angular frequencies $\bs \omega$. The first one being inspired from the discrete Fourier basis, while the next two are variations within the Random Fourier Feature framework. We also let users provide their custom set of frequencies to be used.

\paragraph{Other frequency sets}
Besides the Random Fourier Features described below, which are the default and the only ones used in this paper, the package implements a multivariate extension of the discrete Fourier basis, in which the frequencies form a regular grid, and a space-filling variant of RFF, in which they are chosen by a
space-filling design with respect to the spectral density rather than sampled
independently. Users may also supply their own set of frequencies. We refer to
the package documentation for these variants, and to Section~\ref{section:benchmark} for the effect of the number of frequencies.

\paragraph{Random Fourier Features}
The framework of Random Fourier Features \citep{rahimi_random_2008,rahimi_weighted_2009} yields another way to construct finite rank GPs that ``resemble'' GPs with a prescribed (stationary and continuous) covariance kernel $k$. Indeed, any such kernel satisfies the hypothesis for the Bochner theorem, a result that relates positive definite, continuous, stationary kernels to the Fourier transform of a finite, positive measure. Whenever the considered measure admits a density $s(\boldsymbol \xi)$ with respect to Lebesgue measure, this gives rise to the so-called Fourier duality of spectral densities and covariance functions. This result is also known as the Wiener-Khinchin theorem \citep{khinchin_korrelationstheorie_1934}: 
\begin{equation}
	\label{eq:WKtheor}
	k_0([\xX, t] -[\xX', t']) = \int_{\xR^{\dimX+1}} s(\boldsymbol \xi) e^{i 2 \pi \boldsymbol \xi^\top [\xX - \xX', t-t']} \,d\boldsymbol \xi
\end{equation}
\begin{equation}
	s(\boldsymbol \xi) = 	\int_{\xR^{\dimX+1}} k_0(\boldsymbol r) e^{ - i 2 \pi \boldsymbol \xi^\top \boldsymbol r} \,d \boldsymbol r
\end{equation}

The Wiener-Khinchin integral (Equation~\ref{eq:WKtheor}) can be re-written as:

\begin{align}
    k_0([\xX, t] -[\xX', t']) =
 	\int_{\xR^{\dimX+1}} s(\boldsymbol \xi) &\left[ \cos  2 \pi 
 \boldsymbol \xi^\top [\xX, t] \cos  2 \pi 
 \boldsymbol \xi^\top [\xX', t']  \right. + \left.\sin  2 \pi 
 \boldsymbol \xi^\top [\xX, t] \sin  2 \pi 
 \boldsymbol \xi^\top [\xX', t'] \right]  \,d\boldsymbol \xi\label{eq:FF1}
\end{align}
The essential element of the approach of Random Fourier Features
\citep{rahimi_random_2008, rahimi_weighted_2009} is the realisation that the
Wiener-Khinchin integral~(\ref{eq:WKtheor}) can be approximated by Monte
Carlo sums. We assume throughout that $k_0$ is normalised so that
$k_0(\bs 0) = 1$, in which case $s$ is a probability density on
$\xR^{\dimX+\dimT}$ and can be sampled from.

Consider $q$ i.i.d. draws $\bs \xi_1, \dots, \bs \xi_q$ from $s$, and define
the rank-$2q$ feature map
\begin{equation}
\label{eq:familyFF}
\bs \varphi([\xX, t]) = \frac{1}{\sqrt{q}}
\Big[
\cos(2\pi \bs \xi_1^\top[\xX, t]), \dots, \cos(2\pi \bs \xi_q^\top[\xX, t]),
\sin(2\pi \bs \xi_1^\top[\xX, t]), \dots, \sin(2\pi \bs \xi_q^\top[\xX, t])
\Big]^\top .
\end{equation}
Then, by Equation~\ref{eq:FF1},
$\mathbb{E}_{\bs \xi_1, \dots, \bs \xi_q}
 \big[ \bs \varphi([\xX, t])^\top \bs \varphi([\xX', t']) \big]
 = k_0([\xX, t] - [\xX', t'])$, which gives
\begin{equation}
\label{eq:kernelRFF}
k([\xX, t], [\xX', t']) = \sigma^2 k_0([\xX, t] - [\xX', t'])
\approx k_{RFF}([\xX, t], [\xX', t'])
:= \sigma^2 \bs \varphi([\xX, t])^\top \bs \varphi([\xX', t']) .
\end{equation}
Moreover, for $\bs \varepsilon$ a $2q$-variate standard normal vector, the
process $Z_{RFF, \xX, t} = \sigma \bs \varphi([\xX, t])^\top \bs \varepsilon$
is a finite-rank GP of rank $p = 2q$, with mean zero and covariance kernel
$k_{RFF}$. In the package, $q$ is the argument \code{nFreq} and $p = 2q$ is
the rank reported by \code{print()}.

\begin{remark}
    In the literature it is common to encounter another basis expansion. Indeed, it is also possible to samples $p$ random phases $U_i \sim \mathcal{U}([0, 2\pi])$ and to use a basis using only cosines:
    
\begin{equation}
    \tilde \varphi(\mathbf y)=\sqrt{\frac{2}{p}} \left[  \cos( 2\pi\boldsymbol \xi_1^\top \mathbf y + U_1), \cos( 2\pi\boldsymbol \xi_2^\top \mathbf y + U_2),...,\cos(2\pi \boldsymbol \xi_p^\top \mathbf y + U_p)\right],
\end{equation}

Philosophically, there is no fundamental difference between these two basis, since $\sin(t+\pi/2)=\cos(t)$. Here, we select the parametrisation in sines/cosines due to its improved behaviour compared to collections of the type $(\cos(2\pi\boldsymbol \xi^\top \mathbf y+ U)$ \citep{sutherland_error_2015}.
\end{remark}

\subsection{Estimation schemes considered and implemented}

Once a family of basis functions is selected, we need to select a suitable estimation paradigm. We propose and implement three estimation strategies. These methods' sanity are underlined by the following result:

\begin{theorem}
	\label{th_log-concavity}
    For fixed data $\mathbf{t}$ the likelihood function $\bs \epsilon \mapsto \mathcal{L}( \bs \epsilon \vert \mathbf{t} ) =  \prod\limits_{j=1}^n \dfrac{e^{\sigma \sum_{i=1}^p \epsilon_i f_{i}(\xX_j, t_j)}}{ \int_\xT e^{\sigma  \sum_{i=1}^p \epsilon_i f_{i}(\xX_j, u)}  \,du} $ is log-concave.	Equivalently, the negative log-likelihood function $\bs \epsilon \mapsto \ell( \bs \epsilon \vert \mathbf{t} )$
	is convex.
\end{theorem}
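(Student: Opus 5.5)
The plan is to write the negative log-likelihood as a sum over observations and show each summand is convex in $\bs\epsilon$. Taking logarithms gives
\begin{equation*}
\ell(\bs\epsilon \vert \mathbf{t}) = \sum_{j=1}^n \left[ -\sigma \bs\epsilon^\top \bs F(\xX_j, t_j) + \log \int_\xT e^{\sigma \bs\epsilon^\top \bs F(\xX_j, u)} \,du \right].
\end{equation*}
The first term in each bracket is linear in $\bs\epsilon$, hence convex. Since a sum of convex functions is convex, it suffices to prove that, for fixed $\xX$, the map $\bs\epsilon \mapsto A_{\xX}(\bs\epsilon) := \log \int_\xT e^{\sigma \bs\epsilon^\top \bs F(\xX, u)}\,du$ is convex. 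This is a log-partition (cumulant generating) function, and I would show its convexity directly.

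The most economical route is Hölder's inequality. Take $\bs\epsilon, \bs\epsilon' \in \xR^p$ and $\lambda \in [0,1]$, and write $g(u) = e^{\sigma\bs\epsilon^\top \bs F(\xX,u)}$ and $h(u) = e^{\sigma\bs\epsilon'^\top \bs F(\xX,u)}$. Then
\begin{equation*}
\int_\xT g^\lambda h^{1-\lambda}\,du \le \left(\int_\xT g\,du\right)^{\lambda}\left(\int_\xT h\,du\right)^{1-\lambda},
\end{equation*}
which is Hölder's inequality with exponents $1/\lambda$ and $1/(1-\lambda)$; the endpoint cases $\lambda \in \{0,1\}$ are trivial. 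Taking logarithms yields $A_{\xX}(\lambda\bs\epsilon + (1-\lambda)\bs\epsilon') \le \lambda A_{\xX}(\bs\epsilon) + (1-\lambda)A_{\xX}(\bs\epsilon')$, which is convexity.

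As a cross-check, one could instead differentiate under the integral sign. The Hessian of $A_{\xX}$ is $\sigma^2$ times the covariance matrix of $\bs F(\xX, U)$, where $U$ has density proportional to $e^{\sigma\bs\epsilon^\top \bs F(\xX,u)}$ on $\xT$. A covariance matrix is positive semidefinite, so the Hessian is too.

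The main obstacle is ensuring the integrals are finite and positive, so that the logarithm is well defined. This also justifies differentiation under the integral if the Hessian argument is used. I would settle this by assuming the basis functions $f_i(\xX,\cdot)$ are bounded and measurable on $\xT$. That holds for the Fourier-type features and the inducing-point basis considered above, since $\xT$ is compact and the kernel is continuous. Together with $\xT$ having positive Lebesgue measure, each integral then lies in $(0,\infty)$ and the argument goes through. Log-concavity of $\mathcal{L}$ is equivalent to convexity of $\ell = -\log\mathcal{L}$, which completes the argument.
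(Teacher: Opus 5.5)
Your proof is correct, but its main step takes a different route from the paper's. The paper's proof is essentially your cross-check. It differentiates twice under the integral sign and groups the observations by their $K$ distinct covariate values $\tilde\xX_k$, with multiplicities $n_k$. It then identifies the Hessian of $\ell$ as $\sum_{k} n_k\,\mathrm{Cov}\bigl(\bs F(\tilde\xX_k, Y_k)\bigr)$, with $Y_k$ drawn from the current density at $\tilde\xX_k$, so the Hessian is positive semidefinite.

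Your primary argument instead gets convexity of the log-partition function $A_{\xX}$ directly from H\"older's inequality with exponents $1/\lambda$ and $1/(1-\lambda)$. This needs no differentiability, and hence no justification for exchanging derivatives and integrals. It only needs $\int_\xT e^{\sigma\bs\epsilon^\top \bs F(\xX,u)}\,du$ to be finite and positive for every $\bs\epsilon$. You also state the regularity assumption the paper leaves implicit when it asserts twice differentiability: bounded measurable $f_i$ on a compact $\xT$ of positive measure.

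Your version is therefore shorter and more elementary. The paper's version buys the explicit gradient and Hessian formulas that the package reuses: the closed-form Hessian for the Laplace approximation, and the aggregation over distinct covariate values behind the quadrature shortcuts. As a minor point, your Hessian correctly carries the factor $\sigma^2$, which the paper's appendix omits by absorbing $\sigma$ into the basis.
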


\begin{proof}[Sketch of proof]
	Note that $\ell( \bs \epsilon \vert \mathbf{t} )$ is twice differentiable with respect to $\bs \epsilon$. Its gradient and Hessian are derived in Appendix~\ref{Appendix:convexity}, and the latter is positive-semidefinite.
\end{proof}

In light of this result, we propose three approaches, summarised in
Table~\ref{table:implementations}.

\paragraph{Markov Chain Monte Carlo (MCMC)}

This method explores the posterior by drawing from it, and enables full posterior inference of $\bs \varepsilon$ given the data. Its drawbacks are the usual ones: hyper-parameters of the sampler to tune, outputs to monitor, and a high computational cost, since each step evaluates the posterior of Equation~\ref{eq:GPposterior}. Our setting benefits from the posterior being a log-concave $\bs \varepsilon$, which ensures rapid convergence of the algorithms considered \citep{dwivedi_log-concave_2018}, and our implementation relies on the No-U-Turn Sampler \citep{hoffman_NUTS_2014} of \pkg{rstan}
\citep{stan_development_team_rstan_2024}, which requires no manual adjustment of the sampling parameters. The cost of gradient evaluations remains, which leads us to also propose faster techniques.

\paragraph{Maximum A Posteriori (MAP)}

MAP estimation returns the mode $\bs \epsilon^*$ of the posterior. It is the fastest scheme we propose, but it yields a non-probabilistic estimate and hence no uncertainty quantification. The optimisation is also performed through \pkg{rstan}, by a quasi-Newton method that exploits the convexity established above.
    
\paragraph{MAP coupled with Laplace approximation} 
The Laplace approximation returns $N$ draws from the multivariate normal distribution centred at the mode of the posterior in Equation~\ref{eq:GPposterior}, with precision matrix the Hessian of the negative log-posterior at the mode, which the package evaluates in closed form (Appendix~\ref{Appendix:convexity}). It is a middle ground between MCMC and MAP: it captures the curvature of the posterior at a small multiple of the cost of a point estimate.

Each scheme returns one or several $\bs \epsilon$, and the corresponding
predictions of the density field are obtained by the same expression,
\begin{equation}
\label{eq:predictors}
\proc{\hat Y}{\xX, t}{}^{(i)} = \dfrac{e^{\sigma \sum_{j=1}^p f_j(x_{1}, ..., x_{\dimX}, t) \epsilon_{i, j} }}{\int_\xT e^{\sigma \sum_{j=1}^p f_j(x_{1}, ..., x_{\dimX}, u) \epsilon_{i, j}} \,du },
\end{equation}
with a single $\bs \epsilon^*$ for MAP, and $N$ draws for Laplace and MCMC.

In most use cases Laplace is a reasonable choice: it quantifies uncertainty and, as measured in Section~\ref{subsec:cost}, costs two to three times a MAP fit against several hundred times for MCMC.

\begin{table}[H]
\centering
\begin{tblr}{
  colspec = {Q[l,2.75cm] Q[c,3.5cm] Q[c,3.5cm] Q[c,4.0cm]}, 
  row{1} = {font=\bfseries}, 
  column{1} = {font=\bfseries}, 
  cells = {valign=m, halign=c}, 
  vline{2-6} = {1}{},
  vline{-} = {2-6}{},
  hline{1} = {2-6}{},
  hline{2-6} = {-}{},
}
                          & MCMC & MAP & MAP with Laplace \\
Probabilistic predictions & {\tikzcmark\\Full post. sampling\\} & {\tikzxmark\\None\\} &  {\tikzcmark\\Normal approximation}   \\
Speed          & {Slowest\\$\approx 500-3000\times$ MAP} &   {Fastest\\reference}  & {Fast\\$\approx 2-3\times$ MAP} \\
Suitable for large $p$    &{\tikzcmark\\}&{\tikzcmark\\}&{\tikzxmark\\Needs to invert Hessian} \\
Typical use case    & {For uncertainty quantification and exact inference.}& {Fast point estimate.\\Start point for MCMC.}&{Compromise between the other two methods.} \\
\end{tblr}
\caption{\label{table:implementations}
Summarising the specifics and suggested use cases of each estimation method.}
\end{table}

\vspace{-18pt}

\subsection{Approximation paradigms for the integrals in the SLGP}
Every estimation mode evaluates the normalising integral many times, and a naive implementation of the likelihood computes one such integral per observation. We can, however, leverage the structure introduced to make this far cheaper.  By utilizing the finite rank representation, we are able to pre-calculate the basis functions at designated nodes. This preparation streamlines the computational process, allowing for rapid substitution of $\bs \varepsilon$ values and subsequent calculation of the requisite integrals. Moreover, the integrand depends on the covariate alone, so the work scales with the number of \emph{distinct} covariate values, not with the sample size, and because the integrand is smooth, its value at one location can be interpolated from a fixed grid. The schemes below (exact, nearest-neighbour and weighted-nearest-neighbour) trade accuracy for speed.
For the sake of brevity and given that the multivariate normal distribution is well known, we choose to center the remaining of our discussion on the intricacies of the likelihood function. The likelihood, which underpins the posterior distribution, is:

\begin{align}
    \mathcal{L}(\bs{\epsilon} \vert  D) &= \prod\limits_{i=1}^n \dfrac{\exp \{\sigma \boldsymbol \epsilon^\top  \boldsymbol F(\xX_i, t_i)\}}{ \int_0^1 \exp \{\sigma \boldsymbol \epsilon^\top  \boldsymbol F(\xX_i, u)\}  \,du} 
\end{align}

Assuming that $(u_i)_{i=1}^m$ are nodes of a regular grid of $[0, 1]$, our proposed quadrature is:

\begin{align}
\label{eq:quadratureInt}
    \mathcal{L}(\bs{\epsilon} \vert  D) & \approx \prod\limits_{i=1}^n \dfrac{\exp \{\sigma \boldsymbol \epsilon^\top  \boldsymbol F(\xX_i, t_i)\}}{ \frac{1}{m} \sum^{m}_{j=1} \exp \{\sigma \boldsymbol \epsilon^\top  \boldsymbol F(\xX_i, u_j)\} }
\end{align}

A first step in ensuring numerical stability of our implementation is to work with the negative log-likelihood rather than with the likelihood itself, as is standard in Bayesian inference. Following our approach and notations from the proof of Theorem~\ref{th_log-concavity}, we need to evaluate:
  
	\begin{align}
		\ell( \bs \epsilon \vert \mathbf{t} ) \approx
		 - \sigma \bs{\epsilon}^{\top} \left(\sum_{i=1}^n \bs F(\xX_i, t_i) \right)+ \sum_{k=1}^K n_k \log \left( \displaystyle \frac{1}{m} \sum^{m}_{j=1} \exp \{\sigma \boldsymbol \epsilon^\top  \boldsymbol F(\tilde \xX_k, u_j)\} \right)
	\end{align}

Evaluating each of the $K$ terms in the second sum can be expensive on large
datasets. We therefore propose three strategies, summarised in
Table~\ref{table:implementationIntegrals}: an \emph{exact} evaluation of the sum at each $\tilde \xX_k$, a \emph{nearest-neighbour} approximation, which
discretises the space of $\xX$ and reuses the integral of the closest
precomputed node, and an \emph{interpolated} approximation, recommended by
default, which interpolates the values at the vertices of the hypercube
surrounding $\tilde \xX_k$.

These three approaches allow users to adapt computations to their needs, depending on whether precision or speed is the priority. They are summarised in Table~\ref{table:implementationIntegrals}, and Figure~\ref{fig:main} of Appendix~\ref{Appendix:approxIntegral} illustrates where each of them evaluates the basis functions. For the equations behind these approximation schemes, we refer the reader to that same appendix.

\begin{table}[H]
\centering
\begin{tblr}{
  colspec = {Q[l,2.75cm] Q[c,3.5cm] Q[c,3.5cm] Q[c,4cm]}, 
  row{1} = {font=\bfseries}, 
  column{1} = {font=\bfseries}, 
  cells = {valign=m, halign=c}, 
  vline{2-6} = {1}{},
  vline{-} = {2-6}{},
  hline{1} = {2-6}{},
  hline{2-6} = {-}{},
}
                          & No approximation & NN approximation & WNN approximation \\
Estimation quality & {\tikzcmark\\Exact} & {\tikzxmark\\Piecewise constant} &  {$\sim$\\Interpolation}   \\

Estimation speed  & {Slowest} &   Fastest  & Fast \\

Typical use case  & { With few different indices $\xX$ } & {To get the fastest estimate} & {Compromise between the other two methods.} 
\end{tblr}
\caption{\label{table:implementationIntegrals}
Summarising the specifics of each integral approximation scheme.}
\end{table}

\vspace{-18pt}

\subsection{Tackling the hyper-parameters in SLGP estimation.}
\label{subsec:hyperparam}
Yet another problem posed is that in almost all realistic cases, GPs depend on some unknown hyperparameters that need to be estimated. This issue could be addressed in a Bayesian way, by specifying a prior on the hyperparameters. While a principled approach to selecting suitable priors have been proposed and studied, \citep{berger_objective_2001, fuglstad_constructing_2019}, these generally do not allow for Theorem~\ref{th_log-concavity} to hold. As this log-concavity is crucial to the efficiency of our estimation procedure, we rather focus on heuristics and simple estimation procedures.

Assuming that the GP is parametrised by a variance parameter $\sigma^2$ as well as as well as $\dimX+1$ length-scale parameters $\rho_1, ..., \rho_{\dimX}, \rho_t$ through: 
\begin{equation}
	\label{eq:GPextensionPara}
	\proc{Z}{\xX, t}{} = \sigma \sum_{j=1}^p f_j(x_1 / \rho_1, ..., x_{\dimX} / \rho_{\dimX}, t/\rho_t) \varepsilon_j , \ \forall \xX \in D, \ t\in \xT
\end{equation}
we propose using a heuristic for $\sigma^2$ and either another heuristic or a grid search for the $\rho$'s.

\textbf{Variance heuristic: } This part consists in numerically selecting a variance that ensures numerical stability of the SLGP. In our case, we typically select $\sigma^2$ such that $\mathbb{E}[\max_{\xX \in D} ( \max_{t \in \xT} \proc{Z}{\xX, t}{} - \min_{t \in \xT} \proc{Z}{\xX, t}{}) ] \leq 5$. This heuristic controls the dynamic range of the field: at any given index, the ratio between the largest and the smallest value of the estimated density is then typically at most $e^5 \approx 148$. This step is not performed conditionally on any data, and is here to ensure that we specified a model that does not present numeric overflow.

Motivating examples underlying this heuristic are available in Appendix, Figure~\ref{fig:Heuristic_var}, where the impact of $\sigma$ on the prior's expressiveness and stability can be visualised. Smaller values of $\sigma$ yield distribution fields that resemble the uniform, while the larger values produce sharply peaked fields that might produce numerical instabilities. It appears that ensuring $\max_{\xX \in D} ( \max_{t \in \xT} \proc{Z}{\xX, t}{} - \min_{t \in \xT} \proc{Z}{\xX, t}{}) \approx 5$ yields a compromise between expressiveness and stability.

\textbf{Length-scale selection: } As for the length-scales, we propose two approaches: a heuristic one, motivated by extensive model fitting over varied datasets, and a MAP-based approach.

The former is motivated by the fact that in practice, and especially in small data regime, a useful rule of thumb is to set each length-scale to about 10–15\% of the corresponding domain size.  Otherwise, a grid search over candidate values, followed by selection based on the profile posterior is advised. To this end, we specify a prior $\pi$ over $\boldsymbol \rho$, and focus on the joint posterior:

\begin{equation}
	\label{eq:int_post_joint_finite}
\pi( \boldsymbol \epsilon; \boldsymbol \rho \vert \mathbf{T}_n = \mathbf{t}_n) \propto \pi(\boldsymbol \rho) \phi( \boldsymbol \epsilon) \prod\limits_{i=1}^n
	\dfrac{e^{\sigma \sum_{j=1}^p f_j(x_{1, i} / \rho_{1}, ..., x_{\dimX, i} / \rho_{\dimX}, t_i/\rho_t) \epsilon_j }}{\int_\xT e^{\sigma \sum_{j=1}^p f_j(x_{1, i} / \rho_1, ..., x_{\dimX, i} / \rho_{\dimX}, u/\rho_t) \epsilon_j} \,du } 
\end{equation}
where, in light of the results from \cite{van_der_vaart_adaptive_2009}, we suggest an inverse Gamma prior for $\boldsymbol \rho$.

Optimising the joint prior for a fixed $\bs \rho$ is a convex problem that yields a unique optimal predictor $\bs \epsilon^*(\bs  \rho) = \arg\max_{\bs \epsilon \in \xR^p}  \pi( \boldsymbol \epsilon; \boldsymbol \rho \vert \mathbf{T}_n = \mathbf{t}_n) $. We then identify the optimal length-scales as the maximisers of the profile posterior: $\bs \rho^* \in \arg\max_{\bs \rho \in (0, \infty)^{\dimX+1}}  \pi( \bs \epsilon^*(\bs  \rho); \boldsymbol \rho \vert \mathbf{T}_n = \mathbf{t}_n) $. In practice, this outer optimisation over $\boldsymbol \rho$ is carried out via a simple grid search. We further illustrate the impact of length-scale selection in the Appendix~\ref{app:lengthscale_optim}.

\section{Guidelines for the users, application to the Fiji-Tonga earthquakes}
\label{section:applications}

All subsequent codes are executed relying on the \proglang{R} \pkg{SLGP} package \citep{SLGP_R}, and a detailed markdown document to reproduce all the Figures and numerical development in this document is available on the author's Github page \citep{gautier_athenais_github_imple}. The dataset is loaded with \code{data("quakes")}, and some visuals are produced with the \pkg{ggplot2} package \citep{ggplot2}. SLGP models are manipulated through the standard generics provided by the package (\code{print()}, \code{summary()}, \code{plot()}, \code{predict()}, \code{simulate()}, \code{coef()}, \code{nobs()}, \code{update()} and \code{timing()}, the last reporting the cost of a fit). \code{summary()} also accepts \code{diagnostics = TRUE}, which evaluates the fitted field on the data or on held-out observations, and we distinguish the default visuals from those produced with \pkg{ggplot2}.

\subsection{Density field estimation with a SLGP}
\subsubsection{Prior induced by a SLGP model}
To build intuition about the behavior of a SLGP prior, we begin by visualising some realisations. Specifically, we approximate a Matérn 5/2 kernel using 200 sine and 200 cosine basis functions in a RFF framework. The length-scales are set to 15\% of the range of the inputs, ensuring a reasonable degree of smoothness while capturing variability across the domain. The variance is automatically determined based on the heuristic we proposed earlier. Below is an \proglang{R} code snippet illustrating how to generate such a prior, followed by the resulting illustration in Figure~\ref{fig:PriorSLGP}:

\begin{verbatim}
R> library(SLGP); data("quakes"); df <- quakes 
R> range_response <- c(40, 680); range_x <- c(165, 190)
R> modelPrior <- slgp(depth~long, # Formula to specify predictors VS response
+                     data=df, method="Prior",
+                     basisFunctionsUsed = "RFF", interpolateBasisFun="WNN", 
+                     hyperparams = list(lengthscale=c(0.15, 0.15), sigma2=1), 
+                     sigmaEstimationMethod = "heuristic", # rewrites sigma2
+                     predictorsLower= range_x[1], predictorsUpper= range_x[2],
+                     responseRange= c(40, 680), seed=1, opts=list(ndraws = 3),
+                     opts_BasisFun = list(nFreq=200, MatParam=5/2))
\end{verbatim}

The fitted object has a \texttt{print()} method giving a short description of the model: its formula, estimation method, basis and rank, response range, data dimensions and hyperparameters. 

\begin{verbatim}
R> print(modelPrior)
Spatial Logistic Gaussian Process (SLGP) model
  Formula        : depth ~ long
  Response       : depth  in [40, 680]
  Covariate(s)   : long
  Estimation     : Prior
  Basis          : RFF  (rank p = 400)
  Length-scales  : 0.15, 0.15
  Variance sigma2: 1.124
  Data           : 1000 obs, 605 distinct covariate value(s)
  Fitting time   : 1.08s
  Coefficient draws: 3
\end{verbatim}

To understand how the SLGP prior behaves, we visualise the three draws it carries. The \texttt{plot()} method for \texttt{SLGP} objects gives a quick look, in the style of Figure~\ref{fig:LAPLACE_draws}. The \texttt{predict()} method returns the density field on arbitrary data, from which we produce a richer figure.

\begin{verbatim}
R> dfGrid <- data.frame(expand.grid(seq(range_x[1], range_x[2], 1), 
+                           seq(range_response[1], range_response[2],, 101)))
R> colnames(dfGrid) <- c("long", "depth")
R> predPrior <- predict(modelPrior, newdata = dfGrid, type="density")
\end{verbatim} 

The result is Figure~\ref{fig:PriorSLGPggplot}. The full plotting code is available in the replication material, here we only show the \code{predict()} call that produces the underlying data.

\begin{figure}[H]
    \centering
    \includegraphics[width=\linewidth]{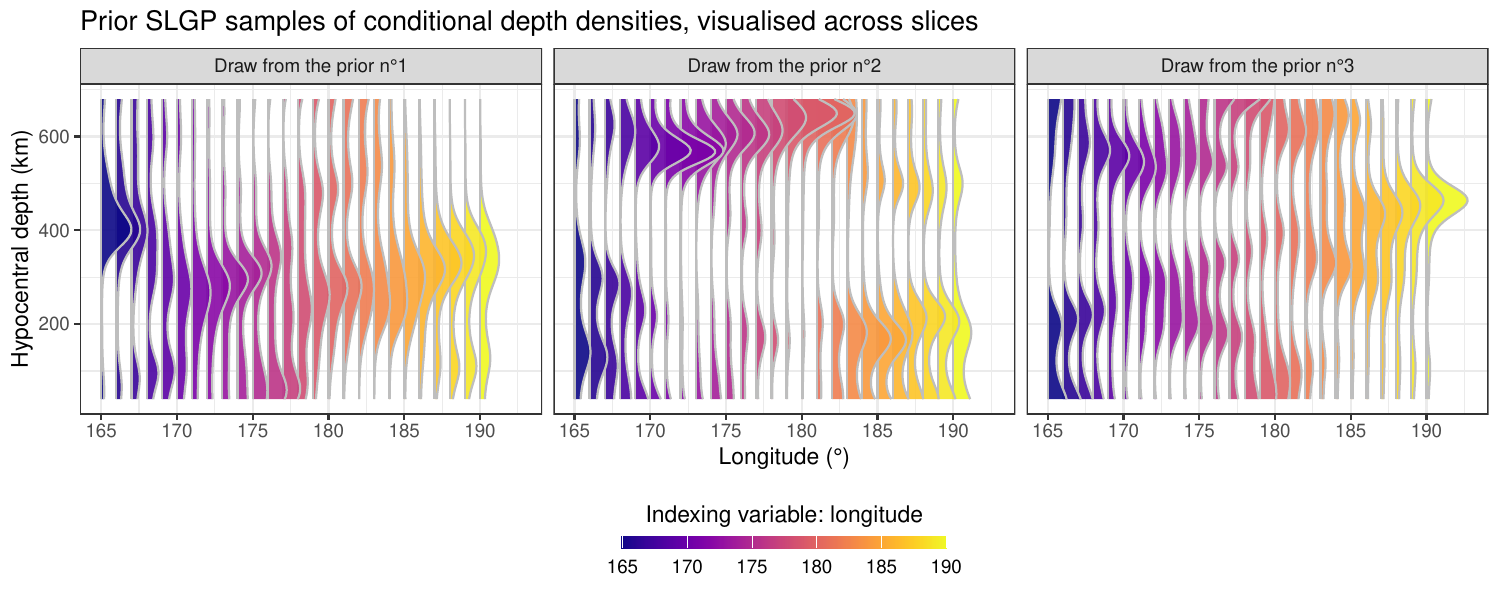}
\caption{Three draws from the SLGP prior using RFF with a Matérn 5/2 kernel. The data for this figure are generated with the \code{predict()} method, and the figure itself with \pkg{tidyr}, \pkg{ggplot2} and \pkg{viridis}.}
\label{fig:PriorSLGPggplot}
\end{figure}

Representing the pdfs along slices as we do allows for visualising the changes in shape and modalities in the generated field, by displaying pdfs generated for a SLGP for various values of $\xX$. The observed changes give encouraging insight into the SLGP's abilities to model a wide range of distributions. Other draws of SLGP's prior exhibit different changes in shape or modalities, illustrating the flexibility of our model.

We chose the Matérn 5/2 kernel to strike a balance between smoothness and variability. Other kernels induce different degrees of regularity, which influence the types of densities favoured by the SLGP. We compare three additional kernel choices: the exponential kernel, a special case of the Matérn class ($\nu = 1/2$) that yields less smooth realisations; the Matérn 3/2 kernel; and the Gaussian kernel, which leads to infinitely differentiable realisations. This is done through the following code, and the resulting realizations are illustrated in Figure~\ref{fig:OtherPriorSLGP}. We only show the code used to train the SLGPs, the resulting Figure~\ref{fig:OtherPriorSLGP} are obtained using the \texttt{predict()} method and advanced visual packages.

\begin{verbatim}
R> common_args <- list( formula = depth ~ long, data = df, method = "Prior",
+                    basisFunctionsUsed = "RFF", interpolateBasisFun = "WNN",
+                    hyperparams = list(lengthscale = c(0.15, 0.15), sigma2 = 1),
+                    sigmaEstimationMethod = "heuristic", opts=list(ndraws=1),
+                    predictorsLower = range_x[1], predictorsUpper = range_x[2],
+                    responseRange = range_response, seed = 1)
R> # One model per smoothness: Matérn 1/2 (exponential), 3/2 and Inf (Gaussian)
R> priors <- lapply(c(1/2, 3/2, Inf), function(nu)
+    do.call(slgp, c(common_args,
+                    list(opts_BasisFun = list(nFreq = 200, MatParam = nu)))))
\end{verbatim}

\begin{figure}[H]
    \centering
    \includegraphics[width=\linewidth]{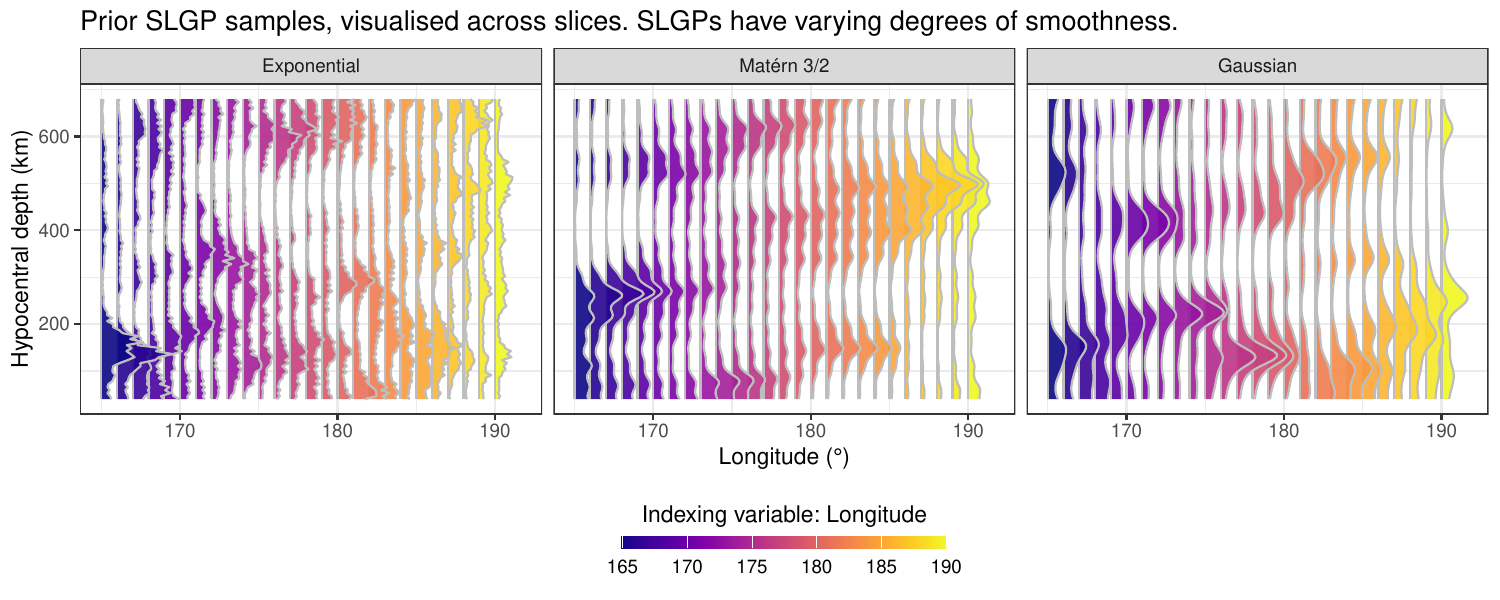}
\caption{Three draws of SLGPs induced by GPs with various kernels.}
\label{fig:OtherPriorSLGP}
\end{figure}
\vspace{-18pt}

\subsubsection{Maximum a Posteriori estimation }

Now that the prior has been defined, we can proceed with estimating the spatially varying density field. The fastest estimation scheme provided in our package is the MAP estimation procedure, it yields a single point estimate. However, this comes at the cost of limited inferential depth: MAP does not support uncertainty quantification, as it focuses solely on identifying the mode of the posterior distribution rather than characterizing its full shape.

The following code demonstrates how to update an existing SLGP model using the MAP estimation scheme. Alternatively, a MAP model can be fitted directly, using a call similar to that used to define \code{modelPrior}, with the data supplied and \code{method = "MAP"}.

\begin{verbatim}
R> modelMAP <- update(modelPrior, newdata = df, method = "MAP")
\end{verbatim}

The \texttt{summary()} method includes fit diagnostics, here the log-posterior at the mode.

\begin{verbatim}
R> summary(modelMAP)
Summary of a Spatial Logistic Gaussian Process (SLGP) model
    [[model description as above, not shown here for compactness]]


Estimation diagnostics
  Fitting time      : 0.94s   (setup 0.55s, estimation 0.39s)
  Coefficient draws : 1
  Log-posterior     : 655.461 (at mode for MAP/Laplace)
  Optimiser         : converged
\end{verbatim}

The fitted object has a basic \texttt{plot()} method. As with the prior, the output of \texttt{predict()} can be used instead to create a richer \pkg{ggplot2} figure (c.f. replication material).

\begin{figure}[H]
    \centering
    \includegraphics[width=0.95\linewidth]{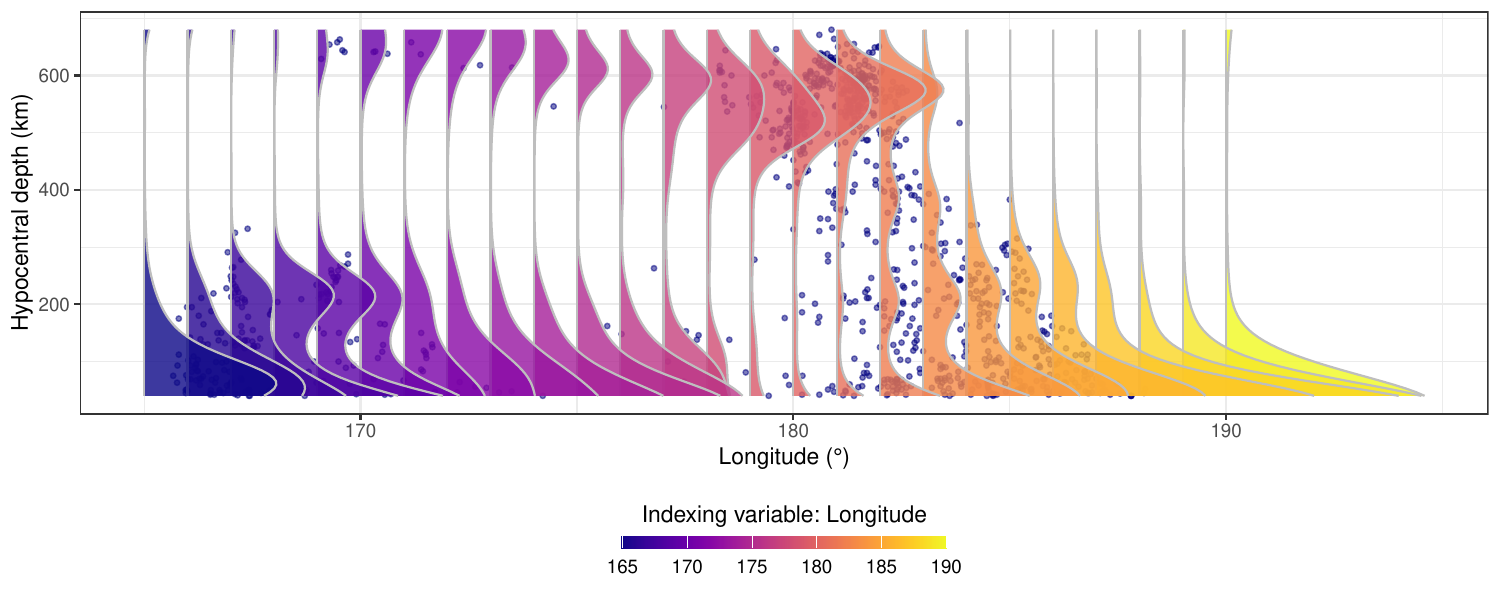}
\caption{MAP-based SLGP predictive densities of \depth across values of \longi.}
\label{fig:SLGPMAP}
\end{figure}
\vspace{-6pt}

Figure~\ref{fig:SLGPMAP} presents the predictive densities produced by the SLGP model, based on a heuristic length-scale set to 15\% of the input range. In Appendix~\ref{app:lengthscale_optim}, we explore a more principled approach by optimizing this length-scale through grid search, under an inverse gamma prior.

While MAP inference provides only a single estimate, its outputs remain interpretable and useful. Figure~\ref{fig:MAP_vs_hist} compares the MAP-predicted density slices to empirical histograms computed from the raw data.  Note that this comparison is provided for illustrative purposes only: one of the key motivations behind the development of SLGP modeling is to avoid the need for arbitrary binning of data prior to density estimation.

\vspace{-6pt}
\begin{figure}[H]
\centering
    \includegraphics[width=0.99\linewidth]{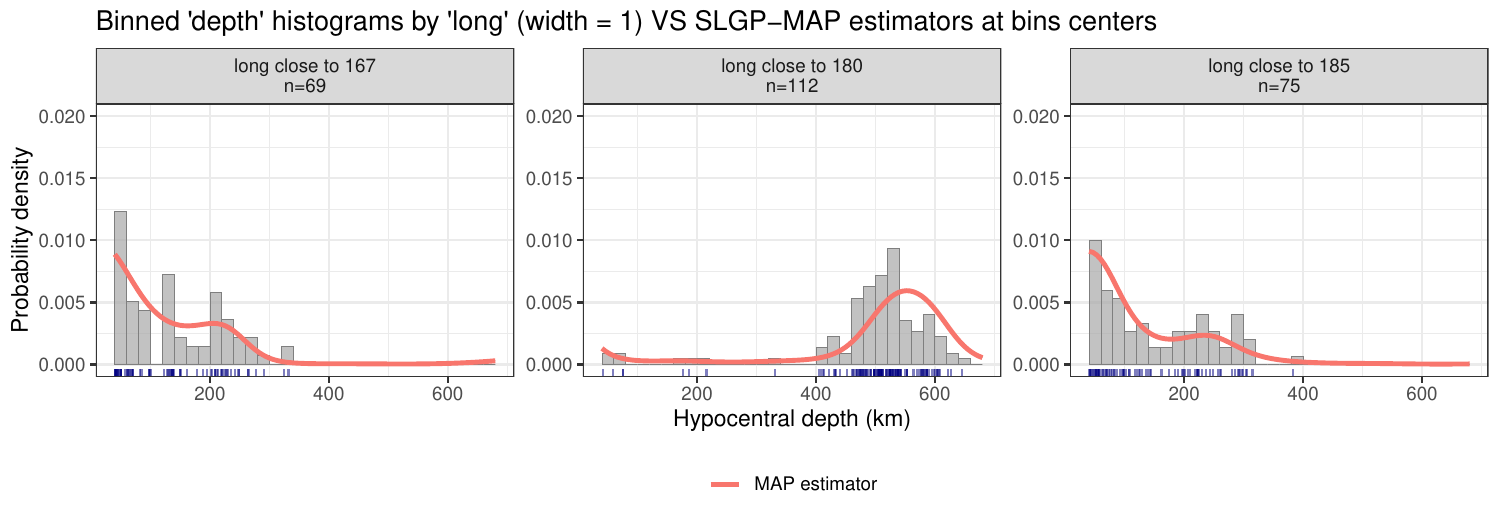}
\caption{SLGP MAP estimates of the conditional density of \depth versus empirical histograms, at three \longi slices.}
\label{fig:MAP_vs_hist}
\end{figure}
\vspace{-6pt}

It is nonetheless encouraging to observe that the SLGP model reproduces several key features of the data. This illustrates the effectiveness of SLGPs even under a fast, point-estimate-based inference strategy.

\subsubsection{Uncertainty quantification: Laplace approximation and Markov Chain Monte Carlo-based estimation}

To enable uncertainty quantification, we can integrate the MAP approach with Laplace approximation. It refines our estimation strategy by approximating the posterior distribution with a multivariate Gaussian. This method strikes a balance between the full Bayesian inference of MCMC and the computational efficiency of MAP estimation. By leveraging both the gradient and Hessian of the posterior, it captures essential curvature information, providing a more informed approximation of the posterior landscape. The following code demonstrates how to re-train our SLGP model, shows the resulting summary. The posterior mean density and ten posterior draws are displayed in Figure~\ref{fig:LAPLACE_draws}.

\vspace{-6pt}
\begin{verbatim}
R> modelLaplace <- update(modelMAP, newdata = df, method="Laplace")
R> summary(modelLaplace)
Summary of a Spatial Logistic Gaussian Process (SLGP) model
  [[model description as above, not shown here for compactness]]

Estimation diagnostics
  Fitting time      : 1.84s   (setup 0.51s, estimation 1.33s)
  Coefficient draws : 1000
  Log-posterior     : 655.461 (at mode for MAP/Laplace)
  Hessian nugget    : none (invertible)
  Hessian condition : 121.5
R> plot(modelLaplace, draw = c("mean", 1:10), panels = TRUE)
\end{verbatim}

\vspace{-6pt}
\begin{figure}[H]
\centering
    \includegraphics[width=\linewidth]{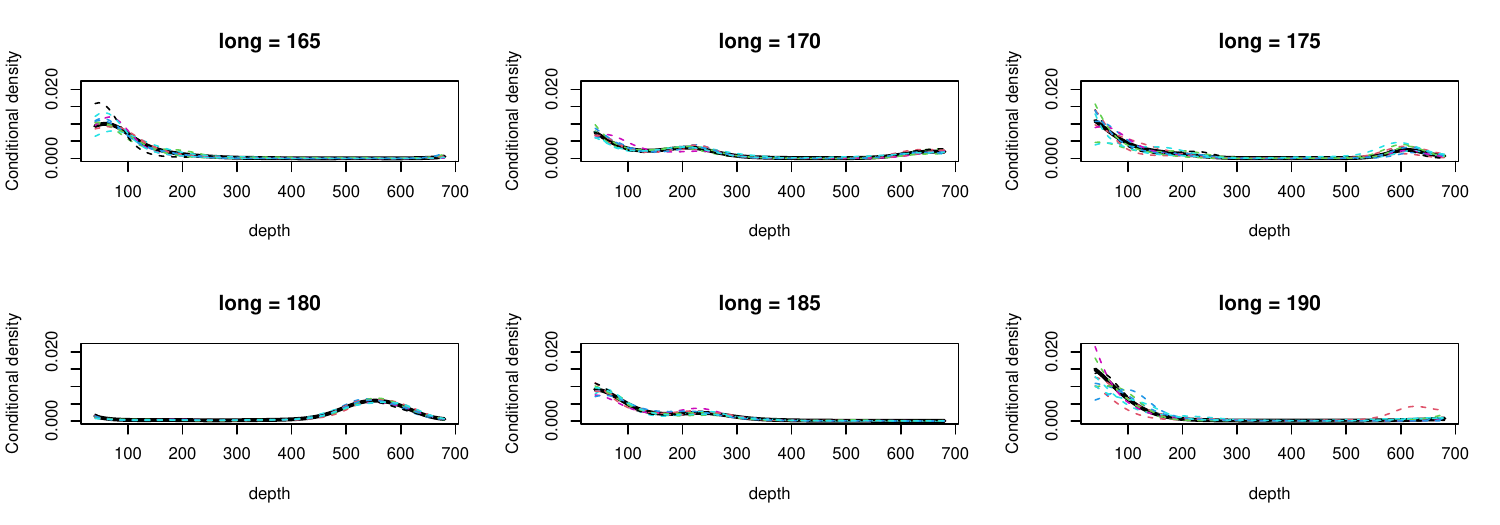}
\caption{Predictive conditional density of \depth at selected \longi values
under the Laplace approximation, with posterior draws superimposed.}
\label{fig:LAPLACE_draws}
\end{figure}

Using MCMC provides access to full Bayesian inference, offering posterior samples from the distribution of the SLGP model. This is the approach previously used in the paper to produce Figure~\ref{fig:teaser1}. The following code demonstrates how to re-train our SLGP model using the MCMC estimation-based scheme:
\begin{verbatim}
R> modelMCMC  <- update(modelMAP,  newdata = df,  method="MCMC",
+                       opts = list(stan_chains=2, stan_iter=1000), seed=1)
\end{verbatim}

MCMC provides a rich representation of uncertainty in the estimated density. The posterior mean and draws against the binned observations are those already shown in Figure~\ref{fig:teaser1a} of the introduction, obtained from the \code{predict()} method and \pkg{ggplot2}.

One of the advantages of the SLGP framework is that it predicts entire probability density functions (PDFs) over space. This opens the door to a wide range of nonlinear inferences on the estimated field. In particular, we can compute and visualize functionals of the predicted densities, such as moments (mean, variance, skewness, etc.) or quantiles.

In our current implementation, we provide predictions for both centered and uncentered moments of the estimated PDFs at each location. These derived quantities can themselves be interpreted as spatial fields and are informative summaries of the underlying random process. Importantly, when using probabilistic inference schemes like Laplace approximation or MCMC, the uncertainty in the SLGP predictions naturally propagates to these functionals, and we can also quantify uncertainty on the functionals. This is illustrated in Figure~\ref{fig:teaser1c} in the introduction, where we use the MCMC-trained SLGP model and the following code to compute the first four moments of the conditional depth distribution along the transect.

\begin{verbatim}
R> dfX <- data.frame(long=seq(range_x[1], range_x[2], 1))
R> # Uncentered moments : the mean
R> predMean <- predict(modelMCMC, type= "moments", newdata = dfX, 
+                      power=c(1), centered=FALSE) 
R> # Centered moments: variance, Kurtosis and Skewness
R> predVar <- predict(modelMCMC, type= "moments", newdata = dfX, 
+                     power=c(2, 3, 4), centered=TRUE) 
\end{verbatim}

We also support the joint prediction of quantiles at arbitrary levels. We display the code used to generate the quantiles and show the resulting plot in Figure~\ref{fig:teaser1b}. In this figure, we conduct a sanity check by comparing SLGP-based quantile estimates with those obtained using classical quantile regression \citep{koenker2005quantile}, as implemented in the \proglang{R} package \pkg{quantreg} \citep{quantreg}, and also empirical quantiles on a rolling window.

\begin{verbatim}
R> pred <- predict(modelMCMC, type = "quantiles", newdata = dfX,
+                  probs = c(5, 25, 50, 75, 95)/100)
\end{verbatim}

The three approaches mostly agree in the data-rich part of the domain, but differences arise where the data are sparse or the conditional distribution is bimodal: there, the SLGP quantiles track the rolling-window empirical ones more closely. The \pkg{quantreg} curves also cross one another in places, as any method estimating quantiles one level at a time without an underlying density may. Being read off the estimated density, the SLGP quantiles are non-crossing by construction.

Everything so far used a continuous response and a scalar index, a setting chosen only because it makes slice-based figures readable. The next two subsections relax each in turn: Section~\ref{subsec:discrete} takes a discrete response and shows how prior knowledge about the observation process enters through a trend, and Section~\ref{subsec:2Dfield} indexes by the full epicentre location.

\subsection{Discrete probability estimation with a SLGP model}
\label{subsec:discrete}

The SLGP is not restricted to continuous responses. The normalising integral is
computed by quadrature, and aligning the quadrature nodes with a discrete support turns the model into an estimator of conditional probability mass functions. The normalisation is then exact rather than approximate, since the integral reduces to a finite sum. The discrete-output formulation, its relation to multinomial logistic regression and to multiclass GP classification, and its use in sequential design are developed in \citet{gautier_discrete_2026}, here we simply show how the package
handles it. The \code{quakes} dataset contains a natural discrete response: the event magnitude \magn, which is reported on the regular grid $\{4.0, 4.1, \dots, 6.4\}$. We display the corresponding data in Figure~\ref{fig:QuakesScatterDiscrete}

\begin{figure}[H]
\centering
    \includegraphics[width=0.95\linewidth]{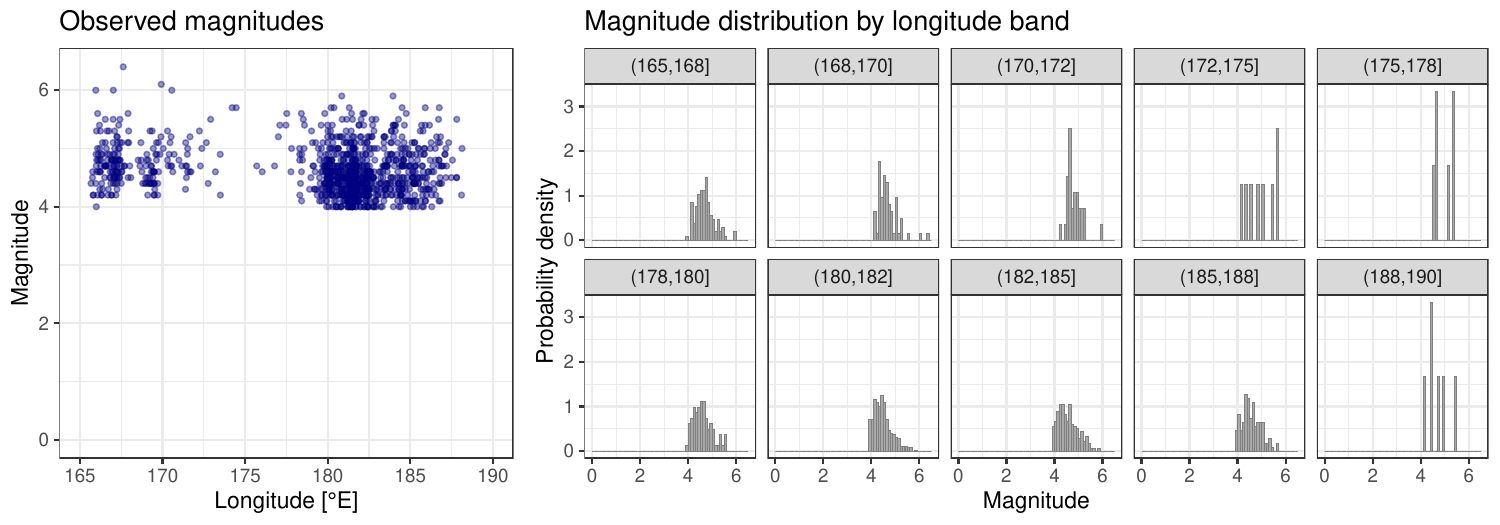}
\caption{Event magnitude \magn against \longi in the \code{quakes} catalogue (left; the dashed line marks the completeness threshold of $4.0$), and empirical magnitude distributions within longitude bands (right).}
\label{fig:QuakesScatterDiscrete}
\end{figure}

This variable also illustrates a second point. The dataset only records events of magnitude at least \(4.0\): smaller events are not reliably detected by the seismic network, so they are absent from the data. The observed distribution of \texttt{mag} is thus the distribution of magnitude conditional on exceeding the completeness threshold of \(4.0\). We use this as an opportunity to demonstrate the SLGP trend: rather than simply restricting the response support to \([4.0, 6.5]\), we model the full range \([0, 6.5]\) and encode the detection mechanism as prior knowledge through a trend function that suppresses probability mass below the threshold.

The trend enters the SLGP as an additive term on the latent Gaussian process, $Z_{\xX, t} = m(\xX, t) + \sigma \sum_j f_j(\xX, t) \varepsilon_j$, so that it shifts the prior mean of the log-density: before any data are seen, the prior expects the density to be proportional to $e^{m(\xX, t)}$. Here a trend that is constant in $\xX$ and takes a large negative value below the threshold suffices,
\begin{equation*}
m(\magn) = -10\,\times\mathbf{1}_{\{\magn < 4.0\}},
\end{equation*}
so that it suppresses mass below the threshold by four to five orders of
magnitude without forbidding it outright, and leaves the density unconstrained
above it. The same device is used in \citet{gautier_discrete_2026} to encode a
varying support.

The trend is supplied to \code{slgp()} as a function of a data frame containing the response and covariate columns. One implementation detail matters: the trend is evaluated on the response in its original units, so the breakpoint is the physical value $4.0$ rather than its image in $[0, 1]$.

\begin{verbatim}
R> trend_fun <- function(df){return(ifelse(df$mag < 4, -10, 0))}
R> modelMAGdisc <- slgp(mag~long, 
+                     data=df, method="MAP", discrete=TRUE, nIntegral = 66,
+                     basisFunctionsUsed = "RFF", interpolateBasisFun="WNN", 
+                     hyperparams = list(lengthscale=c(0.15, 0.15), sigma2=1), 
+                     sigmaEstimationMethod = "heuristic", # rewrites sigma2
+                     predictorsLower= range_x[1], predictorsUpper= range_x[2],
+                     responseRange= c(0, 6.5), trend=trend_fun, seed=1,
+                     opts_BasisFun = list(nFreq=250, MatParam=5/2))
R>  plot(modelMAGdisc, draw = "mean", panels = TRUE)
\end{verbatim}
\vspace{-6pt}

The resulting conditional probability mass function is shown in Figure~\ref{fig:HistMAPDiscrete1}. 
Figure~\ref{fig:HistMAPDiscrete2} gives another visualisation at three longitudes, assembled with \pkg{ggplot2} from the output of \code{predict()}.

\vspace{-6pt}
\begin{figure}[H]
\centering
    \includegraphics[width=0.9\linewidth]{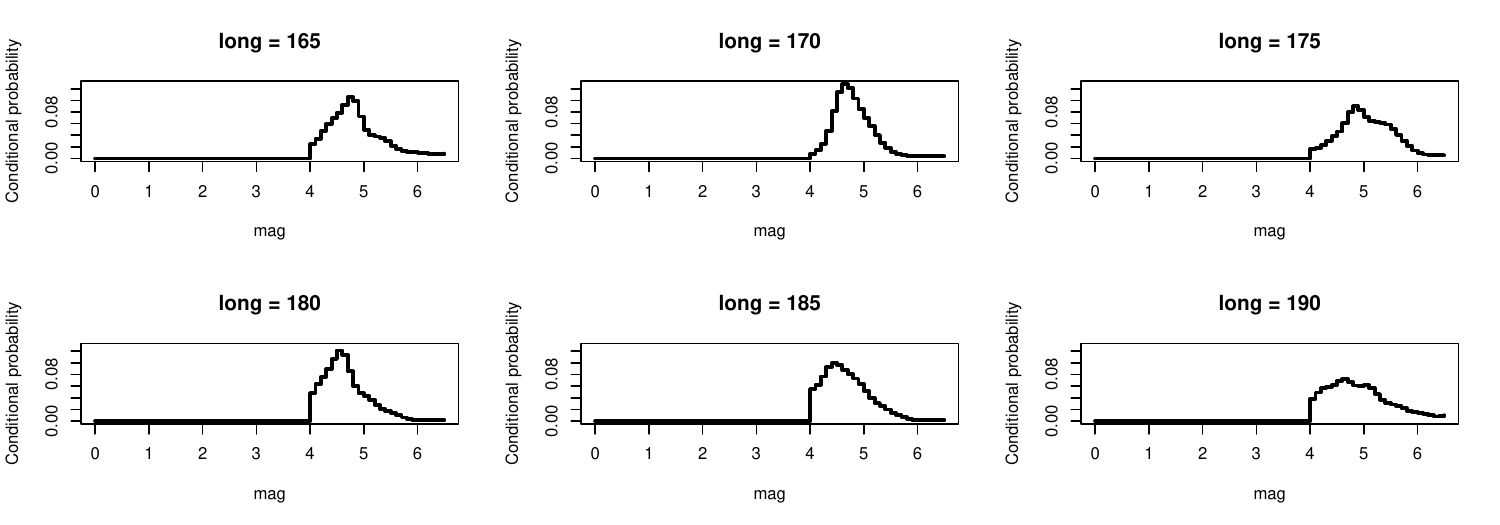}
\caption{Fitted conditional probability mass function of \magn at six longitudes, from the base \code{plot()} method. The trend suppresses probability mass below $4.0$.}
\label{fig:HistMAPDiscrete1}
\end{figure}

\vspace{-6pt}
\begin{figure}[H]
\centering
    \includegraphics[width=0.925\linewidth]{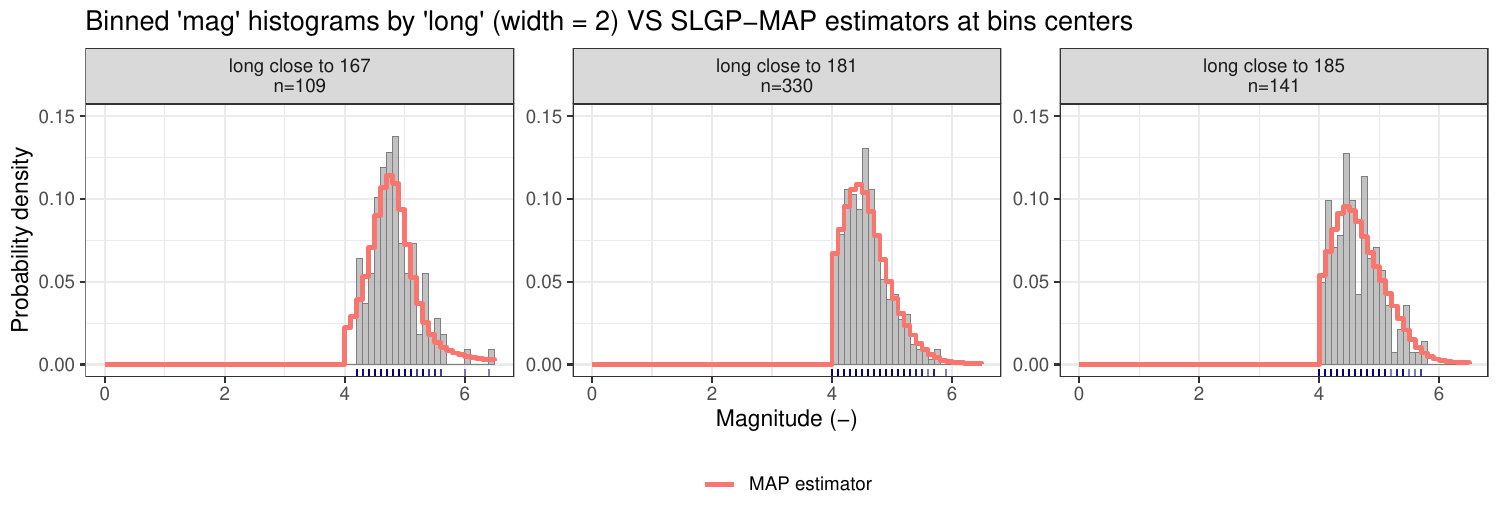}
\caption{SLGP-MAP conditional pmf of \magn (step curve) versus the empirical magnitude frequencies, at three \longi slices.}
\label{fig:HistMAPDiscrete2}
\end{figure}
\vspace{-6pt}

Two things are worth retaining from this example. First, a discrete response requires no separate model: setting \code{discrete = TRUE} and aligning \code{nIntegral} with the size of the support turns the normalising integral into a finite sum. Second, a trend is a convenient way of encoding what is known about the observation process. Here it separates cleanly what the data can speak to (magnitudes above the completeness threshold) from what they cannot (magnitudes below it), instead of silently conflating the two by truncating the support.

\subsection{A bi-dimensional index and continuous response application}
\label{subsec:2Dfield}

While the one-dimensional setting is convenient for slice-based figures, it is also within reach of many conditional density estimators. The distinctive strength of the SLGP is that nothing in its construction is tied to a scalar index. We now exploit the full epicentre location \texttt{(long,\ lat)} as a two-dimensional index and estimate the conditional distribution of \texttt{depth} over the region. This is a regime where per-cell histograms become untenable, since binning a two-dimensional index with 1000 scattered events leaves most cells empty, yet the SLGP applies unchanged.

To obtain an honest assessment of the fit rather than a purely visual one, we adopt a train/test split: the model is trained on a random 75\% of the events and validated on the remaining 25\%. The left panel of Figure~\ref{fig:figureQuakes2DMAPcenter} shows the two subsets over the region.

Fitting the two-dimensional model requires only that the index-related arguments grow accordingly: the formula becomes \texttt{depth\ \textasciitilde{}\ long\ +\ lat}, \texttt{predictorsLower} and \texttt{predictorsUpper} are length two, and the length-scale vector has one entry per index dimension plus one for the response. The estimation scheme, basis and quadrature are otherwise identical to the one-dimensional case.

\vspace{-6pt}
\begin{verbatim}
R> set.seed(1); id_train <- sample(seq_len(nrow(df)), size = 750)
R> df_train <- df[id_train, ]; df_test  <- df[-id_train, ]
R> modelMAP2D <- slgp(depth ~ long + lat,
+                     data = df_train, method = "MAP",
+                     basisFunctionsUsed = "RFF", interpolateBasisFun = "WNN",
+                     sigmaEstimationMethod = "heuristic",  seed = 1,
+                     predictorsLower = c(165,-40), predictorsUpper = c(190, -10),
+                     responseRange = range_response,
+                     opts_BasisFun = list(nFreq = 250, MatParam = 5/2))
R> summary(modelMAP2D)
Summary of a Spatial Logistic Gaussian Process (SLGP) model
  Formula        : depth ~ long + lat
  Response       : depth  in [40, 680]
  Covariate(s)   : long, lat
  Estimation     : MAP
  Basis          : RFF  (rank p = 500)
  Length-scales  : 0.15, 0.15, 0.15
  Variance sigma2: 0.6967
  Data           : 750 obs, 748 distinct covariate value(s)

Estimation diagnostics
  Fitting time      : 14.9s   (setup 9.67s, estimation 5.28s)
  Coefficient draws : 1
  Log-posterior     : 634.894 (at mode for MAP/Laplace)
  Optimiser         : converged
\end{verbatim}

Before inspecting the fitted field, we briefly check that it is calibrated on data it has never seen. Because the SLGP returns a full predictive distribution at every location, we can apply the probability integral transform (PIT): for each held-out event we evaluate the predictive CDF at the observed depth. This is also a first use of the \code{type = "cdf"} mode of \code{predict()}, which returns $F(t \mid \xX)$ on any set of (index, response) pairs rather than the density. If the model is well calibrated, these PIT values should be approximately uniform on $[0, 1]$.

We assess this both visually, with the PIT histogram and uniform Q-Q plot of Figure~\ref{fig:SLGPfitting2DPIT}, and with two complementary goodness-of-fit tests against the uniform. The Kolmogorov-Smirnov statistic is the largest vertical gap between the empirical and uniform CDFs, and so reacts to a localised departure, while the Cramér-von Mises statistic integrates the squared gap over \([0,1]\), and so reacts to a diffuse, systematic one. 

\begin{verbatim}
R> pit_pred <- predict(modelMAP2D, type = "cdf", newdata = df_test)
R> ks.test(pit_pred$cdf_1, "punif", 0, 1) # D = 0.0625, p = 0.282
R> library(goftest); cvm.test(pit_pred$cdf_1) # omega2 = 0.3467, p = 0.100
\end{verbatim}

\vspace{-6pt}
\begin{figure}[H]
\centering
    \includegraphics[width=0.75\linewidth]{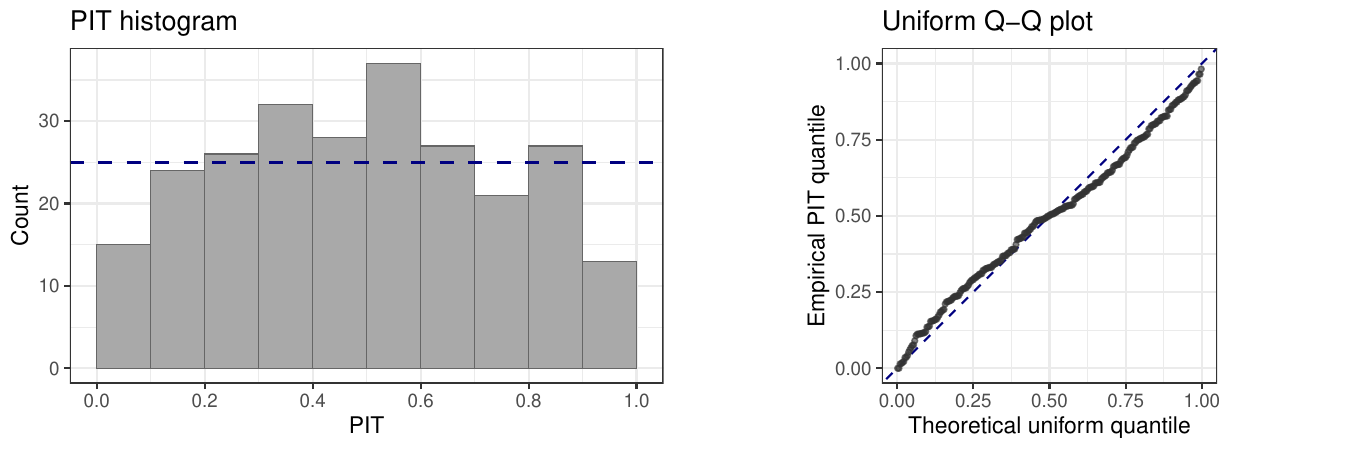}
\caption{Left: PIT histogram of the held-out depths (dashed line: expected count under uniformity). Right: uniform Q-Q plot of the PIT values.}
\label{fig:SLGPfitting2DPIT}
\end{figure}
\vspace{-6pt}

Both tests fail to reject uniformity at 10\% level, the PIT histogram is reasonably close to flat and the points of the Q-Q plot lie near the diagonal, confirming that the fitted field is reasonably calibrated on held-out data. We complement these assessment with a visual check. We select six rectangles, pool the observations falling inside each of them into a histogram, and show the SLGP conditional density evaluated at the center of the rectangles (Figure~\ref{fig:figureQuakes2DMAPcenter}).

\begin{figure}[H]
\centering
    \includegraphics[width=\linewidth]{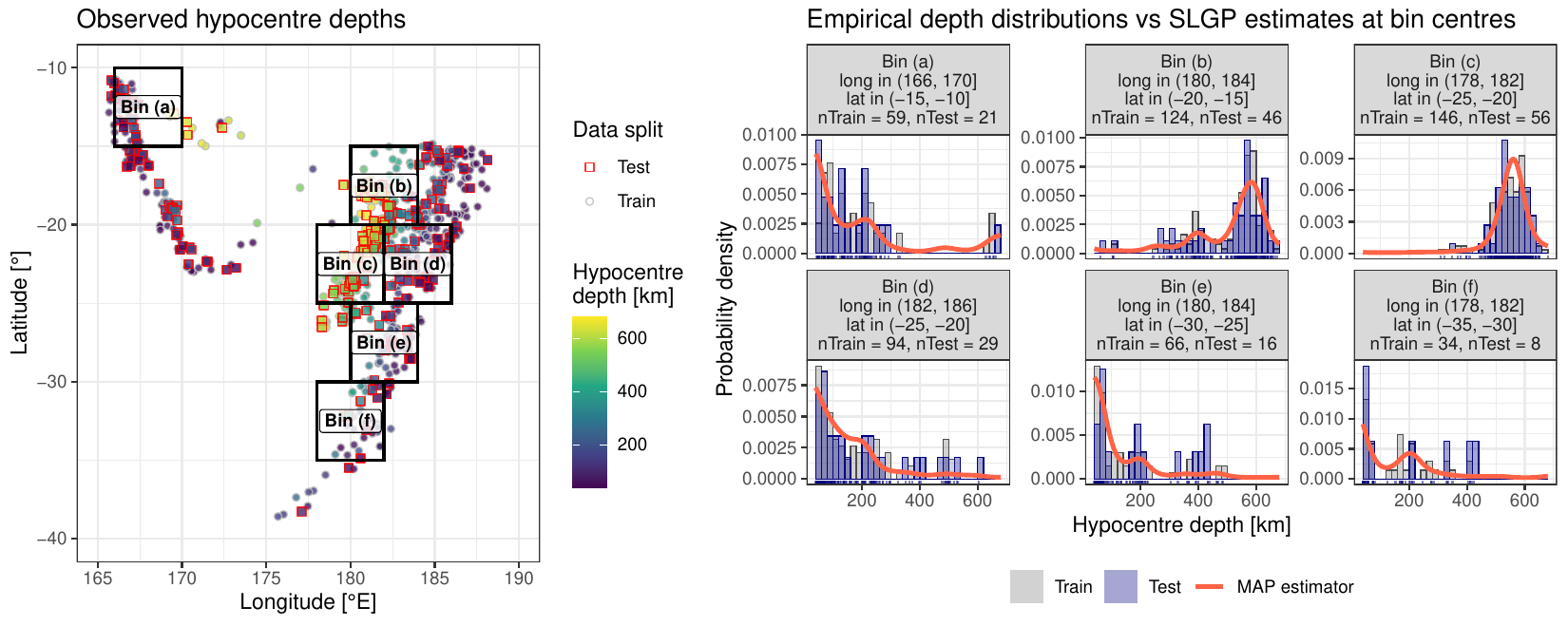}
\caption{Empirical depth distributions (grey) versus the SLGP conditional density evaluated at the centre of each rectangle (red).}\label{fig:figureQuakes2DMAPcenter}
\end{figure}
\vspace{-6pt}

The agreement is good in the well-sampled rectangles (a)-(c) and looser in (d)-(f). Part of this discrepancy is an artefact of the comparison rather than a failure of the model: the histogram pools every observation inside a fairly large and arbitrary rectangle, whereas the SLGP curve is evaluated at the single point at its centre. If the conditional distribution varies appreciably across the rectangle, the two objects need not coincide even for a perfectly specified model.

Since the SLGP estimates the entire conditional distribution at each location, the probability \(P(\text{depth} > u \mid \text{long}, \text{lat})\) of an event deeper than a threshold \(u\) is read directly from the predictive CDF, for any threshold. Figure~\ref{fig:teaser2} maps this probability at \(u = 300\,\)km, isolating the deep Wadati-Benioff seismicity.

\begin{verbatim}
R> grid2D_cdf <- expand.grid(long=seq(165, 190), lat=seq(-40, -10), depth=300)
R> pred_cdf <- predict(modelMAP2D, type = "cdf", newdata = grid2D_cdf)
R> grid2D_cdf$exceed <- 1 - pred_cdf$cdf_1
\end{verbatim}

This is how Figure~\ref{fig:teaser2}, shown in the introduction, is produced. This single map summarises a genuinely spatial, distribution-valued object: at every location it reports not a point prediction but a probability derived from the full conditional law, and it does so continuously across the region, including where events are sparse. We stress the reading is most reliable where the epicentres are dense: in the poorly sampled corners of the domain the field is an extrapolation of the smooth GP prior and should be interpreted with the corresponding caution.

Finally, the fitted object is a full generative model, not only a density evaluator. \texttt{simulate()} draws synthetic responses at chosen locations by inverse-transform sampling from the predictive CDF, which is useful for propagating distributional uncertainty into downstream computations.

\begin{verbatim}
R> sims <- simulate(modelMAP2D, nsim = 1, newdata = df[, c("long", "lat")])
\end{verbatim}

Figure~\ref{fig:simulation2D} shows one such synthetic catalogue, drawn at the epicentres of the observed events. It reproduces the main structure of the region (shallow events along the eastern arc, deep events concentrated in the western band) without reproducing any individual observation.

\vspace{-6pt}
\begin{figure}[H]
\centering
    \includegraphics[width=0.85\linewidth]{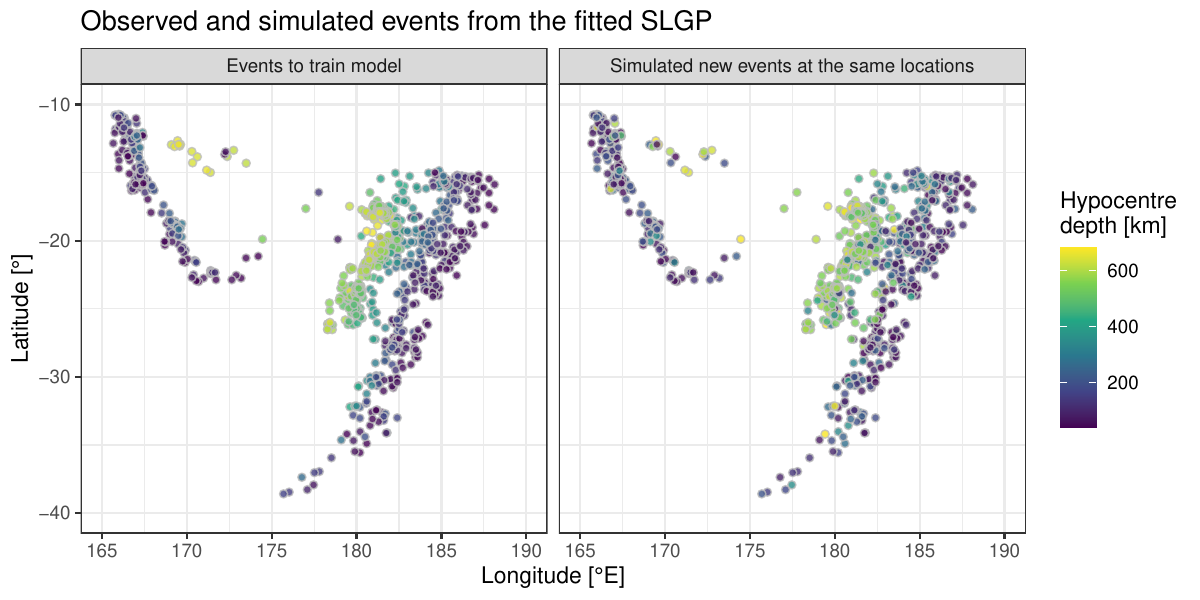}
\caption{Observed earthquake depths (left) and one realization simulated from the fitted SLGP at the same spatial locations (right).}
\label{fig:simulation2D}
\end{figure}

\section{A numerical study of estimation quality and cost}
\label{section:benchmark}

To assess the quality of SLGP-based density field estimation, we go down to the 1D continuous case and define a reference field using the MAP estimate, which serves as the ground truth. We then generate synthetic data from this reference field and compare different estimation setups to evaluate their impact on accuracy. We explore three main experimental factors, and for each setting we replicate the experiment 25 times with several seeds to account for variability:

\begin{itemize}
    \item \emph{Number of basis functions.} This experiment evaluates the impact of the number of Random Fourier Features. A larger number of basis functions is expected to improve accuracy but to increase computational cost. This setup allows us to study the approximation error induced by using a finite rank and to assess the trade-off between speed and expressiveness.

    \item \emph{Length-scale misspecification.} Here we test the sensitivity of SLGP estimation to the specification of the length-scale hyper-parameter. We train models with length-scales that are too small (0.01, 0.05), well specified (0.15), or too large (0.25, 0.5). This analysis highlights the consequences of hyper-parameter selection.

    \item \emph{Sampling scheme.} This experiment investigates how the spatial structure of the data affects the quality of estimation. We vary the way the index \longi is sampled: uniformly at random over the domain; restricted to a regular grid of 5, 11 or 21 longitudes, so that observations pile up on a few locations; and uniformly at random with two excluded longitude bands, mimicking corridors left uncovered by the seismic network. These variations reveal the method's ability to generalise in poorly observed regions.
\end{itemize}

This analysis provides practical insights into how SLGP estimation behaves under different conditions, guiding model selection for real applications.

For each tested configuration, we compute the estimation error, comparing the inferred density field to the reference field. This allows us to measure the effect of different modelling choices. To assess the similarity between estimated and true distributions, we compute several integrated distances based on both the probability density functions (PDFs) and the cumulative distribution functions (CDFs). 

The considered distances are defined for a distribution $P_1$ that admits a pdf $f_1$ and cdf $F_1$ and a distribution $P_2$ that admits a pdf $f_2$ and cdf $F_2$ as:
\begin{itemize}
    \item Hellinger Distance $d_H(f_1, f_2) = \sqrt{\frac{1}{2} \int \left( \sqrt{f_1(t)} - \sqrt{f_2(t)} \right)^2 dt}$
    \item Total Variation Distance $TV(f_1, f_2) = \frac{1}{2} \int \left| f_1(t) - f_2(t) \right| \,dt$ 
    \item Kullback–Leibler Divergence $KL(f_1, f_2) = \int f_1(t) \log \left( \frac{f_1(t)}{f_2(t)} \right) dt $
    \item Cramér–von Mises Distance:  $CvM(F_1, F_2) = \int \left( F_1(t) - F_2(t) \right)^2 f_1(t) \, dt$
\end{itemize}

These metrics are evaluated across a grid of values and provide different perspectives on the quality of the density estimation.

\subsection{Impact of the number of basis functions}

We first evaluate the effect of the number of RFF on SLGP estimation, keeping the length-scale fixed at the well-specified value and the design strategy as uniform. For each configuration, we generate synthetic datasets from the reference field and fit SLGP models using MAP estimation. The resulting estimated densities are then compared to the reference field by integrating the distances defined previously and the results are displayed in Figure~\ref{fig:benchmark1}.

\begin{figure}[H]
\centering
    \includegraphics[width=\linewidth]{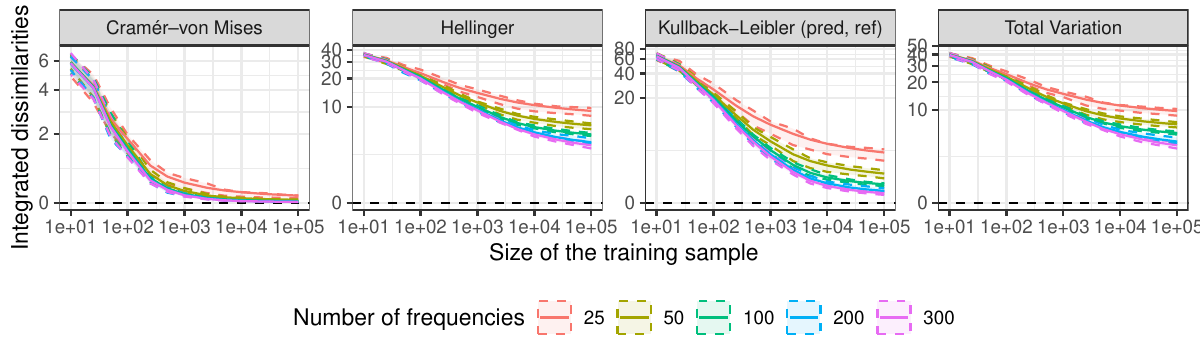}
\caption{Effect of the number of basis functions on SLGP density estimation. Integrated distances between the MAP estimates and the reference field are shown across varying training sample sizes: mean (solid lines) and 25th–75th percentile range (shaded ribbon).}\label{fig:benchmark1}
\end{figure}

As expected, increasing the number of basis functions improves estimation accuracy. Beyond a threshold ($n_{\text{Freq}} = 200$), additional features provide negligible gains. Also, the Kullback–Leibler divergence and Cramér–von Mises distance converge rapidly toward zero, while the Hellinger distance and Total Variation distance improve more gradually. This is due to sensitivities of these metrics to local versus global discrepancies between densities.

\subsection{Impact of the length-scale selection}

Using the well-specified number of basis functions ($200$) and a uniform sampling of \longi, we next investigate how the choice of length-scale affects MAP-based SLGP estimation, with the results being displayed in Figure~\ref{fig:benchmark2}.

\begin{figure}[H]
\centering
    \includegraphics[width=\linewidth]{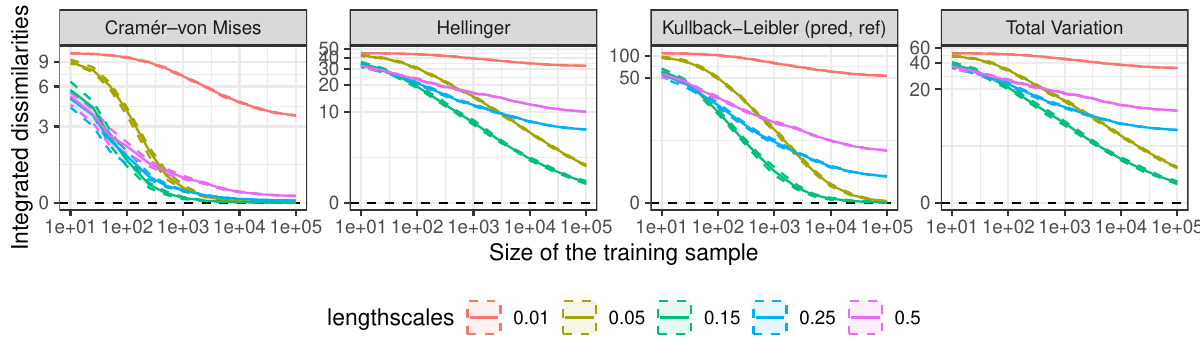}
\caption{Effect of the lengthscale on SLGP density estimation using MAP. Integrated distances between the MAP estimates and the reference field are shown across varying training sample sizes: mean (solid lines) and 25th–75th percentile range (shaded ribbon).}
\label{fig:benchmark2}
\end{figure}

The analysis indicates that the best performance across all metrics is obtained using the true or near-true lengthscale. Short lengthscales lead to overly localized predictions, failing to capture broader trends, whereas long lengthscales oversmooth the distributions, missing fine-scale features such as multi-modality. The well-specified lengthscale provides a balanced compromise, reproducing both global and local structure in the reference field.

\subsection{Impact of the Sampling Scheme}

Finally, we assess the influence of the sampling design on SLGP performance. We fix the number of basis functions and the length-scale to well-specified values and compare MAP estimates to the reference field across different \longi selection schemes:
\begin{itemize}
\item uniform on a regular grid of 5 longitudes: $\{165, 171.25, 177.5, 183.75, 190\}$,
\item uniform on a regular grid of 11 longitudes: $\{165, 167.5, \dots, 190\}$,
\item uniform on a regular grid of 21 longitudes: $\{165, 166.25, \dots, 190\}$,
\item uniform on the full segment $[165, 190]$,
\item and uniform with holes, i.e. uniform on $[165, 170] \cup [175, 184] \cup [186, 190]$.
\end{itemize}

We visualise the integrated dissimilarities for varying sample sizes in Figure~\ref{fig:benchmark3}.

\begin{figure}[H]
\centering
    \includegraphics[width=\linewidth]{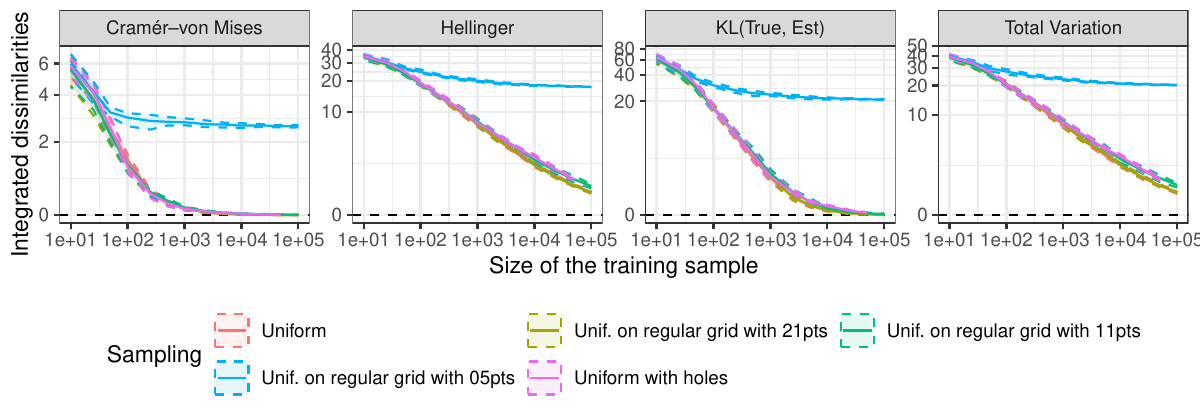}
\caption{Impact of selection scheme for \longi in SLGP density estimation using MAP. Integrated distances between the MAP estimates and the reference field are shown across varying training sample sizes: mean (solid lines) and 25th–75th percentile range (shaded ribbon).}\label{fig:benchmark3}
\end{figure}

Estimation performance depends on the sampling scheme used for \longi. The ``Uniform on a regular grid with 5 points'' design consistently leads to the weakest performance, characterized by a systematic estimation bias. In comparison, the remaining designs achieve similar levels of accuracy, with the ``Uniform with holes'' scheme showing only a modest loss in efficiency.

To better understand these differences, we can look at localised errors in addition to the integrated ones, and examine how the estimation error varies as a function of \longi. We display in Appendix~\ref{app:figures} the Kullback-Leibler divergence, the Hellinger distance, the Cramér-von Mises criterion and the total variation distance between the estimated probability density function at a given \longi and the corresponding reference density in Figures~\ref{fig:benchmark3a}, \ref{fig:benchmark3b}, \ref{fig:benchmark3c} and \ref{fig:benchmark3d}. 

These results illustrate the impact of localisation: regions with missing or sparse data exhibit higher errors, and dense coverage in some areas does not always compensate for missing information elsewhere. This analysis suggests a practical guideline: if the goal is to learn the full field rather than targeting specific locations, it is preferable to distribute sampling points evenly across the space rather than accumulating them at selected covariates. However if the goal is to refine knowledge at prescribed location, intensifying sampling in the area leads to better local predictions. Finally, these observations raise the question of the interplay between sampling density and the length-scales of the underlying process. Finally, these observations raise the question of the interplay between sampling density and the length-scales of the underlying process, which we return to in the conclusion.

\subsection{Computational cost}
\label{subsec:cost}
The experiments above measure how well the SLGP estimates the field. This one
measures what it costs. A fit has two phases, which \code{timing()} reports
separately: \emph{setup}, covering normalisation, the quadrature
pre-computation and the evaluation of the basis functions at the nodes, and
\emph{estimation}, covering the optimisation or the sampling itself. We cross
the sample size $n \in \{10^2, 10^3, 10^4\}$ and the rank
$p \in \{50, 100, 200, 500\}$ with the integral approximation scheme, at
\code{method = "MAP"}, and with the estimation method, at
\code{interpolateBasisFun = "WNN"}. Each configuration is replicated ten times,
with a different sample and a different draw of frequencies. The total cost of a MAP fit under the three schemes is shown in Figure~\ref{fig:timingScheme}.
\begin{figure}[H]
\centering
    \includegraphics[width=\linewidth]{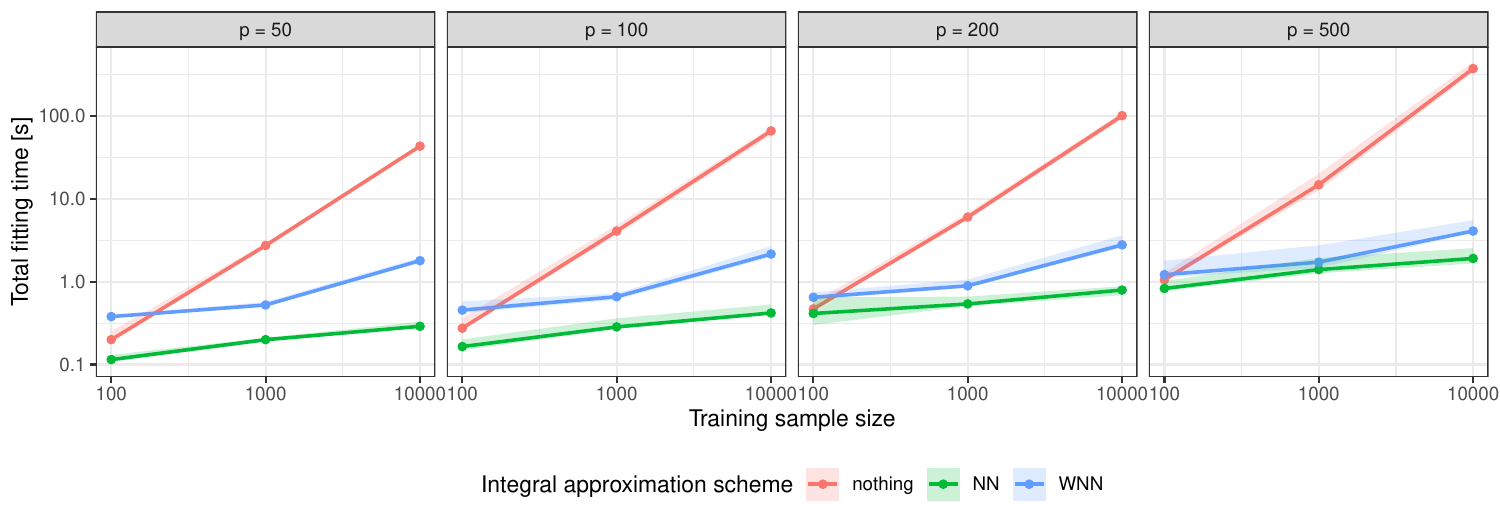}
\caption{Total time of a MAP fit against the sample size, for the three integral approximation schemes and four ranks. Lines are medians over ten replicates, ribbons the 10th-90th percentiles. Both axes are logarithmic.}
\label{fig:timingScheme}
\end{figure}
The exact scheme evaluates one normalising integral per distinct covariate
value, and under uniform sampling there are as many of these as observations. At
$n = 10^4$ a MAP fit takes between 44\,s ($p = 50$) and 376\,s ($p = 500$) with
the exact scheme, against 0.3 to 1.9\,s for \code{NN} and 1.8 to 4.1\,s for
\code{WNN}. The saving is not confined to the pre-computation: the matrix of
basis values the optimiser works with has one block of quadrature nodes per
distinct covariate value under the exact scheme, but at most \code{nDiscret}
blocks under \code{NN} and \code{WNN}, so every gradient evaluation is cheaper
in the same proportion.

The two interpolated schemes are not equivalent: \code{WNN} is 1.2
to 6 times slower than \code{NN}, because it combines every observation with the vertices of the grid cell that contains it, a step that grows with $n$, whereas \code{NN} aggregates observations through multiplicities. We nevertheless recommend \code{WNN}, whose interpolation is what keeps the approximate likelihood smooth in the covariate.

Turning to the estimation methods, at $p = 500$ the estimation phase of Laplace
costs 4.7 to 6.5 times that of MAP, the extra work being a single evaluation of
the Hessian at the mode in closed form. MCMC costs 2500 to 7500
times as much, that is from 11 minutes ($n = 10^2$) to 3.4 hours ($n = 10^4$)
for four chains of 2000 iterations run one after the other, against 1.5 to
11\,s for Laplace. Of course, running the chains in parallel divides the wall-clock cost by close to the number of chains. 

Additional views of these experiments (the share of the total time spent in estimation, and the growth of the estimation time with the rank) are given in Appendix~\ref{app:timing}.

\section{Conclusion and perspective}
In this work, we provided a practical introduction to Spatial Logistic Gaussian Process (SLGP) modeling, including a tutorial for implementation in \texttt{R}. We showed that SLGPs offer a flexible and expressive framework for modeling spatially dependent probability distributions, capable of capturing heterogeneity and multimodality while supporting both point estimation, via maximum a posteriori inference, and fully probabilistic inference, for instance through MCMC or Laplace approximations. Beyond accurate density estimation, SLGPs naturally yield a range of informative distributional summaries, including quantiles, moments, and other functionals of interest. Moreover, the framework extends seamlessly to discrete and ordinal outcomes, thereby broadening its applicability well beyond the continuous setting. All accompanying code is publicly available, ensuring full reproducibility and facilitating adoption in a wide range of applied contexts \citep{gautier_athenais_github_imple}.

Through systematic experiments, we illustrated the practical aspects of SLGP modeling, emphasizing the influence of hyperparameter choices (e.g., length-scale, number of basis functions) and sampling design on estimation quality. These insights provide concrete guidance for configuring SLGP models and for planning data collection strategies, particularly in contexts where observations are costly or unevenly distributed.

Looking forward, SLGPs offer substantial promise for advanced statistical inference tasks, including uncertainty quantification, goal-oriented modeling, and adaptive experimental design. In particular, extensions to stochastic inversion using SLGP-based Approximate Bayesian Computation (ABC) \citep{gautier_modelling_2023} provide a flexible approach to tackle high-dimensional, stochastic inverse problems. This framework opens avenues for principled, data-driven learning for complex systems.

Several directions for future research naturally follow from this work. A first important avenue concerns the interplay between sampling density and the effective length-scales of the SLGP. While our numerical experiments highlight the influence of design choices on estimation quality, a systematic theoretical and empirical investigation of this trade-off—particularly in the context of adaptive experimental design—remains to be conducted. Understanding how sampling density should scale with the smoothness of the underlying distributional field would provide principled guidance for data acquisition and design strategies.

Moreover, the SLGP framework is well suited to adaptive experimental design, as it yields full probabilistic descriptions of conditional distributions. Future work will focus on developing acquisition criteria tailored to distribution-valued outputs. First step in this direction are taken in
\citet{gautier_goal-oriented_2021, gautier_discrete_2026}, where acquisition criteria are built on the SLGP posterior. Criteria that target uncertainty on distributional functionals directly, such as quantiles or spatial integrals, remain to be developed.

Finally, another extension of the present framework is the modeling of multivariate responses, with a possible mix between discrete and continuous output. Incorporating dependence structures and exploring multiple-output SLGP would substantially broaden its applicability.

\section*{Acknowledgements}
This work was largely supported by the Swiss National Science Foundation (SNSF) under Grant No.~\textit{178858}. The author would like to thank David Ginsbourger for his proofreading of earlier versions of the manuscript and for numerous insightful suggestions regarding SLGP's study. The author is also particularly grateful to Yves Deville for his contributions during the early stages of the software development, as well as for methodological suggestions that helped shape the implementation choices presented in this work.

\section*{Use of AI-assisted tools}
Generative AI tools were used in the preparation of this manuscript for typesetting assistance and for grammar and style checking. They were not used to generate the methodology, the software implementation, the numerical experiments or their interpretation. The author takes full responsibility for the content of the paper and of the accompanying package.

\bibliographystyle{plain}
\bibliography{references.bib}  

@Manual{SLGP_R,
    title = {SLGP: Spatial Logistic Gaussian Process for Field Density Estimation},
    author = {Athénaïs Gautier},
    year = {2026},
    note = {R package version 2.0.0},
    url = {https://CRAN.R-project.org/package=SLGP}
}

@misc{sutherland_error_2015,
	title = {On the {Error} of {Random} {Fourier} {Features}},
	url = {http://arxiv.org/abs/1506.02785},
	doi = {10.48550/arXiv.1506.02785},
	urldate = {2024-02-22},
	publisher = {arXiv},
	author = {Sutherland, Danica J. and Schneider, Jeff},
	month = jun,
	year = {2015},
	note = {arXiv:1506.02785 [cs, stat]},
}

@Book{ggplot2,
    author = {Hadley Wickham},
    title = {ggplot2: Elegant Graphics for Data Analysis},
    publisher = {Springer-Verlag New York},
    year = {2016},
    isbn = {978-3-319-24277-4},
    url = {https://ggplot2.tidyverse.org},
}

@misc{stan_development_team_rstan_2024,
	title = {{RStan}: the {R} interface to {Stan}},
	url = {https://mc-stan.org/},
	author = {{Stan Development Team}},
	year = {2024},
}

@article{donner_efficient_2018,
	title = {Efficient {Bayesian} inference for a {Gaussian} process density model},
	journal = {arXiv preprint arXiv:1805.11494},
	author = {Donner, Christian and Opper, Manfred},
	year = {2018},
}

@inproceedings{rahimi_weighted_2009,
	title = {Weighted sums of random kitchen sinks: {Replacing} minimization with randomization in learning},
	booktitle = {Advances in neural information processing systems},
	author = {Rahimi, Ali and Recht, Benjamin},
	year = {2009},
	pages = {1313--1320},
}

@article{lenk_towards_1991,
	title = {Towards a practicable {Bayesian} nonparametric density estimator},
	volume = {78},
	number = {3},
	journal = {Biometrika},
	author = {Lenk, Peter J.},
	year = {1991},
	pages = {531--543},
}

@article{tokdar_towards_2007,
	title = {Towards a faster implementation of density estimation with logistic {Gaussian} process priors},
	volume = {16},
	number = {3},
	journal = {Journal of Computational and Graphical Statistics},
	author = {Tokdar, Surya T.},
	year = {2007},
	note = {Publisher: Taylor \& Francis},
	pages = {633--655},
}

@article{aitchison_statistical_1982,
	title = {The statistical analysis of compositional data},
	volume = {44},
	number = {2},
	journal = {Journal of the Royal Statistical Society: Series B (Methodological)},
	author = {Aitchison, John},
	year = {1982},
	note = {Publisher: Wiley Online Library},
	pages = {139--160},
}

@article{lenk_logistic_1988,
	title = {The logistic normal distribution for {Bayesian}, nonparametric, predictive densities},
	volume = {83},
	number = {402},
	journal = {Journal of the American Statistical Association},
	author = {Lenk, Peter J.},
	year = {1988},
	pages = {509--516},
}

@article{berger_objective_2001,
	title = {Objective {Bayesian} analysis of spatially correlated data},
	volume = {96},
	number = {456},
	journal = {Journal of the American Statistical Association},
	author = {Berger, James O. and De Oliveira, Victor and Sansó, Bruno},
	year = {2001},
	note = {Publisher: Taylor \& Francis},
	pages = {1361--1374},
}

@article{guntuboyina_nonparametric_2018,
	title = {Nonparametric shape-restricted regression},
	volume = {33},
	number = {4},
	journal = {Statistical Science},
	author = {Guntuboyina, Adityanand and Sen, Bodhisattva},
	year = {2018},
	note = {Publisher: JSTOR},
	pages = {568--594},
}

@article{papamakarios_neural_2019,
	title = {Neural density estimation and likelihood-free inference},
	journal = {arXiv preprint arXiv:1910.13233},
	author = {Papamakarios, George},
	year = {2019},
}

@article{hall_methods_1999,
	title = {Methods for estimating a conditional distribution function},
	volume = {94},
	number = {445},
	journal = {Journal of the American Statistical association},
	author = {Hall, Peter and Wolff, Rodney C. L. and Yao, Qiwei},
	year = {1999},
	note = {Publisher: Taylor \& Francis Group},
	pages = {154--163},
}

@article{walker_sampling_2007,
	title = {Sampling the {Dirichlet} mixture model with slices},
	volume = {36},
	number = {1},
	journal = {Communications in Statistics—Simulation and Computation®},
	author = {Walker, Stephen G.},
	year = {2007},
	note = {Publisher: Taylor \& Francis},
	pages = {45--54},
}

@article{papaspiliopoulos_retrospective_2008,
	title = {Retrospective {Markov} chain {Monte} {Carlo} methods for {Dirichlet} process hierarchical models},
	volume = {95},
	number = {1},
	journal = {Biometrika},
	author = {Papaspiliopoulos, Omiros and Roberts, Gareth O.},
	year = {2008},
	note = {Publisher: Oxford University Press},
	pages = {169--186},
}

@inproceedings{rahimi_random_2008,
	title = {Random features for large-scale kernel machines},
	booktitle = {Advances in neural information processing systems},
	author = {Rahimi, Ali and Recht, Benjamin},
	year = {2008},
	pages = {1177--1184},
}

@article{tokdar_posterior_2007,
	title = {Posterior consistency of logistic {Gaussian} process priors in density estimation},
	volume = {137},
	number = {1},
	journal = {Journal of statistical planning and inference},
	author = {Tokdar, Surya T. and Ghosh, Jayanta K.},
	year = {2007},
	note = {Publisher: Elsevier},
	pages = {34--42},
}

@article{papamakarios_masked_2017,
	title = {Masked autoregressive flow for density estimation},
	volume = {30},
	journal = {Advances in neural information processing systems},
	author = {Papamakarios, George and Pavlakou, Theo and Murray, Iain},
	year = {2017},
}

@inproceedings{dwivedi_log-concave_2018,
	title = {Log-concave sampling: {Metropolis}-{Hastings} algorithms are fast!},
	booktitle = {Conference on learning theory},
	publisher = {PMLR},
	author = {Dwivedi, Raaz and Chen, Yuansi and Wainwright, Martin J. and Yu, Bin},
	year = {2018},
	pages = {793--797},
}

@article{khinchin_korrelationstheorie_1934,
	title = {Korrelationstheorie der stationären stochastischen {Prozesse}},
	volume = {109},
	number = {1},
	journal = {Mathematische Annalen},
	author = {Khinchin, Alexander},
	year = {1934},
	note = {Publisher: Springer},
	pages = {604--615},
}

@article{egozcue_hilbert_2006,
	title = {Hilbert space of probability density functions based on {Aitchison} geometry},
	volume = {22},
	number = {4},
	journal = {Acta Mathematica Sinica},
	author = {Egozcue, Juan José and Díaz–Barrero, José Luis and Pawlowsky–Glahn, Vera},
	year = {2006},
	note = {Publisher: Springer},
	pages = {1175--1182},
}

@article{gautier_goal-oriented_2021,
	title = {Goal-oriented adaptive sampling under random field modelling of response probability distributions},
	volume = {71},
	copyright = {All rights reserved},
	url = {https://doi.org/10.1051/proc/202171108},
	doi = {10.1051/proc/202171108},
	journal = {ESAIM: Proceedings and Surveys},
	author = {Gautier, Athénaïs and Ginsbourger, David and Pirot, Guillaume},
	year = {2021},
	pages = {89--100},
}

@misc{gautier_athenais_github_imple,
	title = {Github repository for: Accompanying code of: Spatial Logistic Gaussian Process {SLGP}: Implementing a nonparametric approach to conditional density estimation on heterogeneous data },
	url = {https://github.com/AthenaisGautier/SLGPImplementation},
	publisher = {GitHub},
	author = {{Gautier, Athénaïs}},
	year = {2025},
	note = {},
}

@article{fan_estimation_1996,
	title = {Estimation of conditional densities and sensitivity measures in nonlinear dynamical systems},
	volume = {83},
	number = {1},
	journal = {Biometrika},
	author = {Fan, Jianqing and Yao, Qiwei and Tong, Howell},
	year = {1996},
	note = {Publisher: Oxford University Press},
	pages = {189--206},
}

@article{zhu_emulation_2020,
	title = {Emulation of stochastic simulators using generalized lambda models},
	journal = {arXiv preprint arXiv:2007.00996},
	author = {Zhu, Xujia and Sudret, Bruno},
	year = {2020},
}

@article{leonard_density_1978,
	title = {Density estimation, stochastic processes and prior information},
	volume = {40},
	number = {2},
	journal = {Journal of the Royal Statistical Society, Series B: Methodological},
	author = {Leonard, Tom},
	year = {1978},
	pages = {113--132},
}

@article{gautier_continuous_2021,
	title = {Continuous logistic {Gaussian} random measure fields for spatial distributional modelling},
	issn = {1572-9052},
	url = {https://doi.org/10.1007/s10463-025-00968-3},
	doi = {10.1007/s10463-025-00968-3},
	journal = {Annals of the Institute of Statistical Mathematics},
	author = {Gautier, Athénaïs and Ginsbourger, David},
	month = sep,
	year = {2026},
}

@article{fuglstad_constructing_2019,
	title = {Constructing priors that penalize the complexity of {Gaussian} random fields},
	volume = {114},
	number = {525},
	journal = {Journal of the American Statistical Association},
	author = {Fuglstad, Geir-Arne and Simpson, Daniel and Lindgren, Finn and Rue, Håvard},
	year = {2019},
	note = {Publisher: Taylor \& Francis},
	pages = {445--452},
}

@article{rothfuss_conditional_2019,
	title = {Conditional density estimation with neural networks: {Best} practices and benchmarks},
	journal = {arXiv preprint arXiv:1903.00954},
	author = {Rothfuss, Jonas and Ferreira, Fabio and Walther, Simon and Ulrich, Maxim},
	year = {2019},
}

@misc{rojas_conditional_2005,
	title = {Conditional density estimation using finite mixture models with an application to astrophysics},
	publisher = {Carnegie Mellon University, Pittsburgh, USA.[Online].},
	author = {Rojas, Alex L. and Genovese, Christopher R. and Miller, Christopher J. and Nichol, Robert and Wasserman, Larry},
	year = {2005},
}

@article{talska_compositional_2018,
	title = {Compositional regression with functional response},
	volume = {123},
	journal = {Computational Statistics \& Data Analysis},
	author = {Talská, R. and Menafoglio, Alessandra and Machalová, Jitka and Hron, Karel and Fišerová, E.},
	year = {2018},
	note = {Publisher: Elsevier},
	pages = {66--85},
}

@article{mclachlan_finite_2004,
  title={Finite mixture models},
  author={McLachlan, Geoffrey J and Lee, Sharon X and Rathnayake, Suren I},
  journal={Annual review of statistics and its application},
  volume={6},
  number={1},
  pages={355--378},
  year={2019},
  publisher={Annual Reviews}
}

@article{hoffman_NUTS_2014,
  title={The No-U-Turn sampler: adaptively setting path lengths in Hamiltonian Monte Carlo.},
  author={Hoffman, Matthew D and Gelman, Andrew and others},
  journal={J. Mach. Learn. Res.},
  volume={15},
  number={1},
  pages={1593--1623},
  year={2014}
}

@book{silverman_density_1986,
  title={Density estimation for statistics and data analysis},
  author={Silverman, Bernard W},
  year={2018},
  publisher={Routledge}
}

@article{ferguson_bayesian_1973,
  title={A Bayesian analysis of some nonparametric problems},
  author={Ferguson, Thomas S},
  journal={The annals of statistics},
  pages={209--230},
  year={1973},
  publisher={JSTOR}
}

@article{escobar_bayesian_1995,
  title={Bayesian density estimation and inference using mixtures},
  author={Escobar, Michael D and West, Mike},
  journal={Journal of the american statistical association},
  volume={90},
  number={430},
  pages={577--588},
  year={1995},
  publisher={Taylor \& Francis}
}

@article{tokdar_bayesian_2010,
	title = {Bayesian density regression with logistic {Gaussian} process and subspace projection},
	volume = {5},
	number = {2},
	journal = {Bayesian analysis},
	author = {Tokdar, Surya T. and Zhu, Yu M. and Ghosh, Jayanta K.},
	year = {2010},
	note = {Publisher: International Society for Bayesian Analysis},
	pages = {319--344},
}

@article{van_der_vaart_adaptive_2009,
	title = {Adaptive {Bayesian} estimation using a {Gaussian} random field with inverse gamma bandwidth},
	volume = {37},
	number = {5B},
	journal = {The Annals of Statistics},
	author = {van der Vaart, Aad W. and van Zanten, J. Harry},
	year = {2009},
	note = {Publisher: Institute of Mathematical Statistics},
	pages = {2655--2675},
}

@article{menafoglio_universal_2013,
	title = {A {Universal} {Kriging} predictor for spatially dependent functional data of a {Hilbert} {Space}},
	volume = {7},
	journal = {Electronic Journal of Statistics},
	author = {Menafoglio, Alessandra and Secchi, Piercesare and Dalla Rosa, Matilde},
	year = {2013},
	note = {Publisher: The Institute of Mathematical Statistics and the Bernoulli Society},
	pages = {2209--2240},
}

@article{jain_split-merge_2004,
	title = {A split-merge {Markov} chain {Monte} {Carlo} procedure for the {Dirichlet} process mixture model},
	volume = {13},
	number = {1},
	journal = {Journal of computational and Graphical Statistics},
	author = {Jain, Sonia and Neal, Radford M.},
	year = {2004},
	note = {Publisher: Taylor \& Francis},
	pages = {158--182},
}

@article{menafoglio_kriging_2014,
	title = {A kriging approach based on {Aitchison} geometry for the characterization of particle-size curves in heterogeneous aquifers},
	volume = {28},
	number = {7},
	journal = {Stochastic Environmental Research and Risk Assessment},
	author = {Menafoglio, Alessandra and Guadagnini, Alberto and Secchi, Piercesare},
	year = {2014},
	note = {Publisher: Springer},
	pages = {1835--1851},
}

@phdthesis{gautier_modelling_2023,
	type = {{PhD} {Thesis}},
	title = {Modelling and predicting distribution-valued fields with applications to inversion under uncertainty},
	copyright = {All rights reserved},
	school = {Institute of Mathematical Statistics and Actuarial Science, University of Bern},
	author = {Gautier, Athénaïs},
	year = {2023},
}

@misc{muzellec_learning_2022,
	title = {Learning {PSD}-valued functions using kernel sums-of-squares},
	url = {http://arxiv.org/abs/2111.11306},
	doi = {10.48550/arXiv.2111.11306},
	urldate = {2024-01-22},
	publisher = {arXiv},
	author = {Muzellec, Boris and Bach, Francis and Rudi, Alessandro},
	month = jan,
	year = {2022},
	note = {arXiv:2111.11306 [cs, stat]},
}

@inproceedings{murray_gaussian_2008,
	title = {The {Gaussian} {Process} {Density} {Sampler}},
	volume = {21},
	url = {https://papers.nips.cc/paper_files/paper/2008/hash/335f5352088d7d9bf74191e006d8e24c-Abstract.html},
	urldate = {2024-01-22},
	booktitle = {Advances in {Neural} {Information} {Processing} {Systems}},
	publisher = {Curran Associates, Inc.},
	author = {Murray, Iain and MacKay, David and Adams, Ryan P},
	year = {2008},
}

@inproceedings{rudi_psd_2021,
	title = {{PSD} {Representations} for {Effective} {Probability} {Models}},
	volume = {34},
	url = {https://proceedings.neurips.cc/paper/2021/hash/a1b63b36ba67b15d2f47da55cdb8018d-Abstract.html},
	urldate = {2024-01-22},
	booktitle = {Advances in {Neural} {Information} {Processing} {Systems}},
	publisher = {Curran Associates, Inc.},
	author = {Rudi, Alessandro and Ciliberto, Carlo},
	year = {2021},
	pages = {19411--19422},
}

@Manual{quantreg,
    title = {quantreg: Quantile Regression},
    author = {Roger Koenker},
    year = {2025},
    note = {R package version 6.1},
    url = {https://CRAN.R-project.org/package=quantreg},
}

@book{koenker2005quantile,
  title={Quantile regression},
  author={Koenker, Roger},
  volume={38},
  year={2005},
  publisher={Cambridge university press}
}

@phdthesis{zhu_surrogate_2023,
	title = {Surrogate modeling for stochastic simulators using statistical approaches},
	school = {ETH Zurich},
	author = {Zhu, Xujia},
	year = {2023},
}

@Manual{r_core_team,
  title        = {{R}: A Language and Environment for Statistical Computing},
  author       = {{R Core Team}},
  organization = {R Foundation for Statistical Computing},
  address      = {Vienna, Austria},
  year         = {2025},
  url          = {https://www.R-project.org/},
}

@Article{isacks_seismology_1968,
  title   = {Seismology and the New Global Tectonics},
  author  = {Isacks, Bryan and Oliver, Jack and Sykes, Lynn R.},
  journal = {Journal of Geophysical Research},
  year    = {1968},
  volume  = {73},
  number  = {18},
  pages   = {5855--5899},
  doi     = {10.1029/JB073i018p05855},
}

@Book{frohlich_deep_2006,
  title     = {Deep Earthquakes},
  author    = {Frohlich, Cliff},
  publisher = {Cambridge University Press},
  address   = {Cambridge},
  year      = {2006},
}

@Article{hayfield_np_2008,
  title   = {Nonparametric Econometrics: The {np} Package},
  author  = {Hayfield, Tristen and Racine, Jeffrey S.},
  journal = {Journal of Statistical Software},
  year    = {2008},
  volume  = {27},
  number  = {5},
  pages   = {1--32},
  doi     = {10.18637/jss.v027.i05},
}

@Article{hyndman_estimating_1996,
  title   = {Estimating and Visualizing Conditional Densities},
  author  = {Hyndman, Rob J. and Bashtannyk, David M. and Grunwald, Gary K.},
  journal = {Journal of Computational and Graphical Statistics},
  year    = {1996},
  volume  = {5},
  number  = {4},
  pages   = {315--336},
  doi     = {10.1080/10618600.1996.10474715},
}

@Article{rigby_gamlss_2005,
  title   = {Generalized Additive Models for Location, Scale and Shape},
  author  = {Rigby, Robert A. and Stasinopoulos, D. Mikis},
  journal = {Journal of the Royal Statistical Society C (Applied Statistics)},
  year    = {2005},
  volume  = {54},
  number  = {3},
  pages   = {507--554},
  doi     = {10.1111/j.1467-9876.2005.00510.x},
}

@Article{umlauf_bamlss_2021,
  title   = {{bamlss}: A {L}ego Toolbox for Flexible {B}ayesian Regression (and Beyond)},
  author  = {Umlauf, Nikolaus and Klein, Nadja and Simon, Thorsten and Zeileis, Achim},
  journal = {Journal of Statistical Software},
  year    = {2021},
  volume  = {100},
  number  = {4},
  pages   = {1--53},
  doi     = {10.18637/jss.v100.i04},
}

@Article{gautier_discrete_2026,
  author  = {Ath\'ena\"is Gautier},
  title   = {Spatial Logistic {G}aussian Processes Based Surrogates for
             Spatially Dependent Discrete Outputs: Modelling, Uncertainty
             Quantification, and Data Acquisition},
  journal = {Statistical Papers},
  year    = {2026},
  note    = {Accepted for the mODa 14 special issue},
}

\newpage
\appendix
\setcounter{figure}{0}
\renewcommand{\thefigure}{A-\arabic{figure}}
\section{Appendix: Proof of the log-concavity result}
\label{Appendix:convexity}
\begin{proof}[Proof of Theorem~\ref{th_log-concavity}]
To prove this statement, we compute the gradient and Hessian of the negative log likelihood. The negative log-likelihood is:
    
	\begin{align}
		\label{eq:negloglikelihood}
		\ell( \bs \epsilon \vert \mathbf{t} ) &= 
		 \sum_{j=1}^n \left( - \sum_{i=1}^p f_i(\xX_j, t_j)  \epsilon_i + \log \left( \displaystyle \int_\xT e^{\sum_{i=1}^p f_{i}(\xX_j, u) \epsilon_i}\,du \right)  \right)
	\end{align}
	
	Its gradients write:
	
	\begin{align}
		\label{eq:gradnegloglikelihood}
		\dfrac{\partial \ell( \bs \epsilon \vert \mathbf{t} )}{\partial \epsilon_{i}} &= 
		- \sum_{j=1}^n f_i(\xX_j, t_j) +
		\sum\limits_{j=1}^n  \displaystyle \int_\xT f_{i}(\xX_j, u) \dfrac{ e^{\bs \epsilon^T  \bs F({\xX}_j, u) } }{\int_\xT e^{\bs \epsilon^T  \bs F({\xX}_j, v)} \,dv} \,du 
	\end{align}
	The second order derivatives are:
	
	\begin{equation}
		\label{eq:hessnegloglikelihood}
		\begin{array}{c}
			\dfrac{\partial^2 \ell( \bs \epsilon \vert \mathbf{t} )}{\partial \epsilon_{i}\partial \epsilon_{i'}} = 
			\sum\limits_{j=1}^n  \displaystyle \int_\xT f_{i}(\xX_j, u)  f_{i'}(\xX_j, u)  \dfrac{ e^{\bs \epsilon^T F(\xX_j, u) } }{\int_\xT e^{\bs \epsilon^T F(\xX_j, v)} \,dv} \,du \\
			-  \sum\limits_{j=1}^n  \left(\displaystyle \int_\xT f_{i}(\xX_j, u) \dfrac{ e^{\bs \epsilon^T F(\xX_j, u) } }{\int_\xT e^{\bs \epsilon^T F(\xX_j, v)} \,dv} \,du \right)  \left( \displaystyle \int_\xT f_{i'}(\xX_j, u) \dfrac{ e^{\bs \epsilon^T F(\xX_j, u) } }{\int_\xT e^{\bs \epsilon^T F(\xX_j, v)} \,dv} \,du \right) 
		\end{array}
	\end{equation}

    First, observe that only the first term in Equation~\ref{eq:gradnegloglikelihood} depends on the values of the $t_i$s, and that all the other terms can be computed only once per distinct value of the $\xX_i$s. Let us assume there are $K \leq n$ unique values of the $\xX_i$s. We denote $\tilde \xX_1, ..., \tilde \xX_K$ these values, and assume that $\tilde \xX_k$ appears $n_k$ times in $ \{\xX_i\}_{1 \leq i \leq n}$.
    Then: 

	\begin{align}
		\label{eq:gradnegloglikelihood2}
		\dfrac{\partial \ell( \bs \epsilon \vert \mathbf{t} )}{\partial \epsilon_{i}} &= 
		- \sum_{j=1}^n f_i(\xX_j, t_j) +
		\sum\limits_{k=1}^K n_k \displaystyle \int_\xT f_{i}(\tilde \xX_k, u) \dfrac{ e^{\bs \epsilon^T  F(\tilde \xX_k, u) } }{\int_\xT e^{\bs \epsilon^T  F(\tilde \xX_k, v)} \,dv} \,du 
	\end{align}
    and
    \begin{equation}
		\label{eq:hessnegloglikelihood2}
		\begin{array}{c}
			\dfrac{\partial^2 \ell( \bs \epsilon \vert \mathbf{t} )}{\partial \epsilon_{i}\partial \epsilon_{i'}} = 
			\sum\limits_{k=1}^K n_k  \displaystyle \int_\xT f_{i}(\tilde \xX_k, u)  f_{i'}(\tilde \xX_k, u)  \dfrac{ e^{\bs \epsilon^T F(\tilde \xX_k, u) } }{\int_\xT e^{\bs \epsilon^T F(\tilde \xX_k, v)} \,dv} \,du \\
			-  \sum\limits_{k=1}^K n_k  \left(\displaystyle \int_\xT f_{i}(\tilde \xX_k, u) \dfrac{ e^{\bs \epsilon^T F(\tilde \xX_k, u) } }{\int_\xT e^{\bs \epsilon^T F(\tilde \xX_k, v)} \,dv} \,du \right)  \left( \displaystyle \int_\xT f_{i'}(\tilde \xX_k, u) \dfrac{ e^{\bs \epsilon^T F(\tilde \xX_k, u) } }{\int_\xT e^{\bs \epsilon^T F(\tilde \xX_k, v)} \,dv} \,du \right) 
		\end{array}
	\end{equation}

	For a given $\bs \epsilon$, let us introduce $Y_1, ... Y_K$, $K$ independent random variables where each of the $Y_k$ has a probability density of $\dfrac{ e^{\bs \epsilon^T F(\tilde \xX_k, u) } }{\int_\xT e^{\bs \epsilon^T F(\tilde \xX_k, v)} \,dv} $. Then:
	\begin{equation}
		\begin{split}
			\label{eq:hessnegloglikelihoodfollowup}	\dfrac{\partial^2 \ell( \bs \epsilon \vert \mathbf{t} )}{\partial \epsilon_{i}\partial \epsilon_{i'}} & = 
			\sum\limits_{k=1}^K n_k  \left( \mathbb{E} \left[ f_{i}(\tilde \xX_k, Y_k)  f_{i'}(\tilde \xX_k, Y_{k})  \right] -  \mathbb{E} \left[ f_{i}(\tilde \xX_k, Y_k)  \right]  \mathbb{E} \left[ f_{i'}(\tilde \xX_k, Y_{k})  \right]  \right) \\
			& = \sum\limits_{k=1}^K n_k \text{Cov} \left( f_{i}(\tilde \xX_k, Y_k), f_{i'}(\tilde \xX_k, Y_{k}) \right)
		\end{split}
	\end{equation}
	Since the Hessian matrix is a (sum of) covariance matrices, it inherits their symmetry and positive-semidefiniteness. This suffices to prove that the negative log-likelihood is convex, and therefore that the likelihood is log-concave.

    Moreover, the Gaussian prior contributes $\frac12\Vert \bs \epsilon\Vert^2$, which makes the posterior strictly log-concave and ensures uniqueness in an optimisation setting (MAP).
	\end{proof}

\clearpage
\section{On the normalising integral of the SLGP}

\label{Appendix:approxIntegral}
We recall that our goal is to provide a reasonably fast and numerically stable approximation of the negative log likelihood in a SLGP model.

\begin{align}
    \ell(\boldsymbol{\epsilon} \vert  D) &= -  \sigma \boldsymbol{\epsilon}^{\top} \sum^{n}_{i=1} \bs F(\xX_i, t_i) +  \sum^{n}_{i=1} \left(\log \int_0^1 \exp\{\sigma\boldsymbol{\epsilon}^{\top} \bs F(\xX_i, u)\} \,du \right)
\end{align}

Throughout this appendix $n$ denotes the number of observations and $m$ the number of quadrature nodes $(u_j)_{j=1}^m$ of a regular grid of $\xT = [0, 1]$. First, we note that when setting $M_i:= \max_{u} \sigma\boldsymbol{\epsilon}^{\top} \bs F(\xX_i, u)$, we have the equality:

\begin{align}
    \ell(\boldsymbol{\epsilon} \vert  D) &= - \sigma \boldsymbol{\epsilon}^{\top} \sum^{n}_{i=1} \bs F(\xX_i, t_i) +  \sum_{i=1}^n M_i + \sum^{n}_{i=1} \left(\log \int_0^1 \exp\{\sigma\boldsymbol{\epsilon}^{\top} \bs F(\xX_i, u) - M_i\} \,du \right)
\end{align}

This simple numerical manipulations allows for greater numerical stability, ensuring that the exponentials will not go out of range, and that the Riemann's sum approximation of the integrals can be performed without issues.

\begin{align}
    \ell(\boldsymbol{\epsilon} \vert  D) & \approx - \sigma \boldsymbol{\epsilon}^{\top} \sum^{n}_{i=1} \bs F(\xX_i, t_i) +  \sum_{i=1}^n M_i + \sum^{n}_{i=1} \log \left( \frac{1}{m} \sum^{m}_{j=1} \exp\{\sigma \boldsymbol{\epsilon}^{\top} \bs F(\xX_i, u_j) -M_i \} \right)
\end{align}   

While numerically stable, this term remains fairly expensive to evaluate, as it involves approximating one integral (and the maximum values $M_i$) per distinct value of the predictor $\xX$. However, when the family of basis functions consist of smooth elements, for any value of $\bs \epsilon$, the term $\int_\xT e^{\bs \epsilon^T  F({\xX}, u)} \,du $ will evolve smoothly with $\xX$, and as such, one can enjoy further approximations in the model to greatly alleviate numerical costs. We already mentioned in the main body of the paper that we propose three main approaches, we further detail them here, complete with equations and illustrations.

\paragraph{Approximation by nearest neighbour on a regular grid}
To avoid computing up to $n$ integrals, we can introduce a grid of $D \times \xT$ and denote $\tilde{\xX}_i$ the closest neighbour of $\xX_i$ on this grid and $\tilde{u}_i$ that of $t_i$.

\begin{align}
\ell(\boldsymbol{\epsilon} \vert  D) \approx \ell_{NN}(\boldsymbol{\epsilon} \vert  D)  \approx  - \sigma \boldsymbol{\epsilon}^{\top} \sum^{n}_{i=1} F(\tilde{x}_i,  \tilde{u}_i)+  \sum^{n}_{i=1}\log \left( \frac{1}{m} \sum^{m}_{j=1} \exp\{\sigma \boldsymbol{\epsilon}^{\top}F(\tilde{x}_i, u_j) \} \right)
\end{align}

This enables one to bound the maximum number of integrals to compute. However, it yields a piecewise constant approximation of the negative log-likelihood landscape, which can be too rough.

\paragraph{Weighted nearest neighbours} Yet another suitable approximation is to use a weighted combination of several neighbours.

For a point $[\xX_i, t_i]$, we denote $\left[ \check{\xX}_{ij}, \check{t}_{ij} \right]_{j=1}^{2^{\dimX+1}}$ the $2^{\dimX+1}$ vertices of the smallest enclosing hypercube within a regular grid. Considering weight factors $w_{ij}$ (with $\sum_j w_{ij} = 1$), we get:

\begin{align}
\ell(\boldsymbol{\epsilon} \vert  D) \approx  \ell_{WNN}(\bs{\epsilon} \vert D) \approx - \sum^{n}_{i=1} \log \left( \sum^{2^{\dimX+1}}_{j=1} w_{ij}  \frac{\exp\{\sigma\bs{\epsilon}^{\top} \bs F(\check{\xX}_{ij},  \check{t}_{ij})\}}{ \frac{1}{m} \sum^{m}_{l=1} \exp\{\sigma \bs{\epsilon}^{\top} \bs F(\check{\xX}_{ij}, u_l) \} }\right)
\end{align}
\clearpage
\begin{figure}[H]
\centering
    \includegraphics[width=0.95\linewidth]{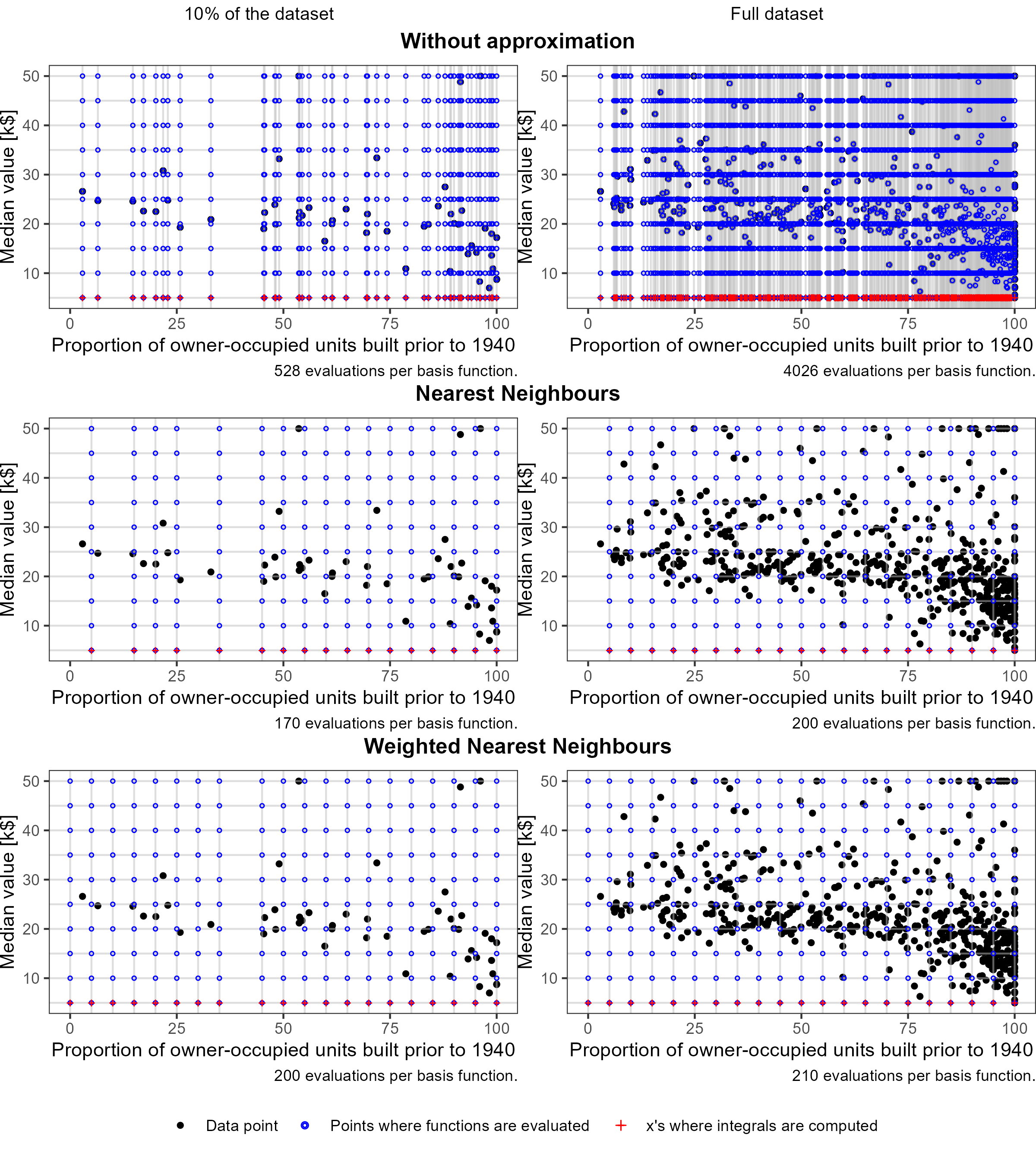}
\caption{Visualisation of three strategies for integral approximation applied to the running example, distinguished by sample sizes. The top panels approximates separately the integral for each predictor variable $\xX$. In the middle panels, the `Nearest Neighbour' approach assigns the value of the closest grid node to each data point. The bottom panels demonstrate the `Weighted Nearest Neighbour' method, calculating the integral value at a data point as a weighted sum of grid node values within the smallest enclosing hypercube}
\label{fig:main}
\end{figure}

\section{Hyper-parameters: variance heuristic and length-scale selection}
\label{app:lengthscale_optim}

\subsection{The variance heuristic}

Figure~\ref{fig:Heuristic_var} shows prior realisations of SLGPs for four values of $\sigma$, expressed through the resulting range $\max_{t} Z_{\xX, t} - \min_{t} Z_{\xX, t}$ of the underlying GP, and for three degrees of smoothness. Small ranges give fields that are almost uniform and carry little structure; large ones give sharply peaked fields that are numerically fragile. A range of about 5 is the compromise retained by the heuristic of Section~\ref{subsec:hyperparam}.
\vspace{-12pt}
\begin{figure}[H]
    \centering
    \includegraphics[width=0.95\linewidth]{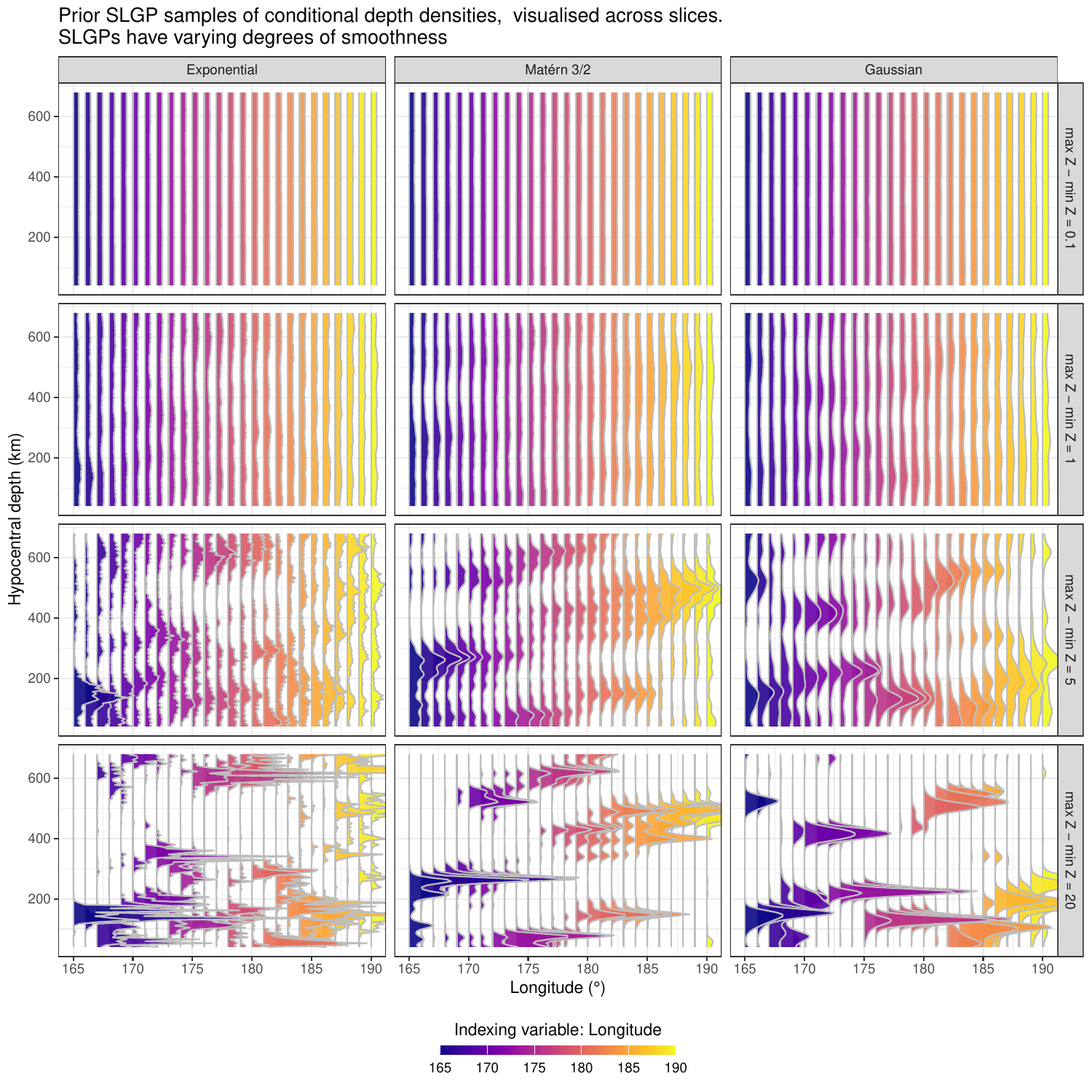}
\caption{Four realisations of SLGPs, with varying ranges of values of the underlying GP.}
\label{fig:Heuristic_var}
\end{figure}

\subsection{Length-scale selection}

To assess the impact of length-scale selection on SLGP estimation, we follow the MAP-based strategy described in Section~\ref{subsec:hyperparam}. We use an Inverse-Gamma prior on each length-scale parameter, perform posterior evaluations and compare optimization results. The model is trained on 75\% of the data and evaluated on the remaining 25\%.

Figure~\ref{fig:optimlandscape} displays the resulting negative log-posterior landscapes.
\begin{figure}[H]
\centering 
    \includegraphics[width=0.7\linewidth]{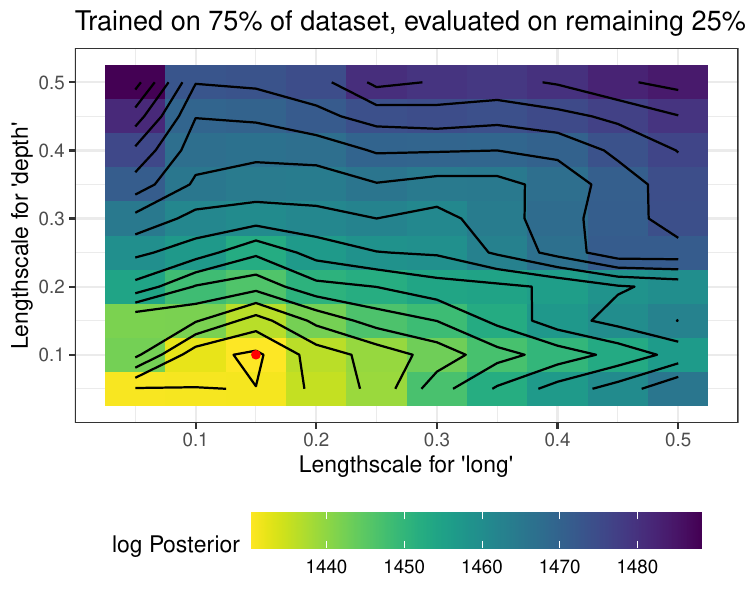}
\caption{Negative log posterior density (up to a constant) over the length-scale grid (\longi, \depth), for a model trained on 75\% of the data and evaluated on the remaining 25\%. The red dot indicates the optimal length-scales.}
\label{fig:optimlandscape}
\end{figure}

This landscape is fairly consistent with our proposed heuristic of setting all length-scales to 15\% of the ranges.

Now, for a qualitative interpretation, we fit and compare in Figures~\ref{fig:plotRibbonsLen1} and~\ref{fig:plotRibbonsLen2} SLGP models with four representative choices of length-scales:
\begin{itemize}
    \item The heuristic length-scale: 15\% of the \longi range and 15\% of the \depth range.
    \item The long length-scale: 50\% of the \longi range and 50\% of the \depth range.
    \item The optimised length-scale: 15\% of the \longi range and 10\% of the \depth range.
    \item The short length-scale: 1\% of the \longi range and 1\% of the \depth range.
\end{itemize}

\begin{figure}[H]
\centering 
    \includegraphics[width=\linewidth]{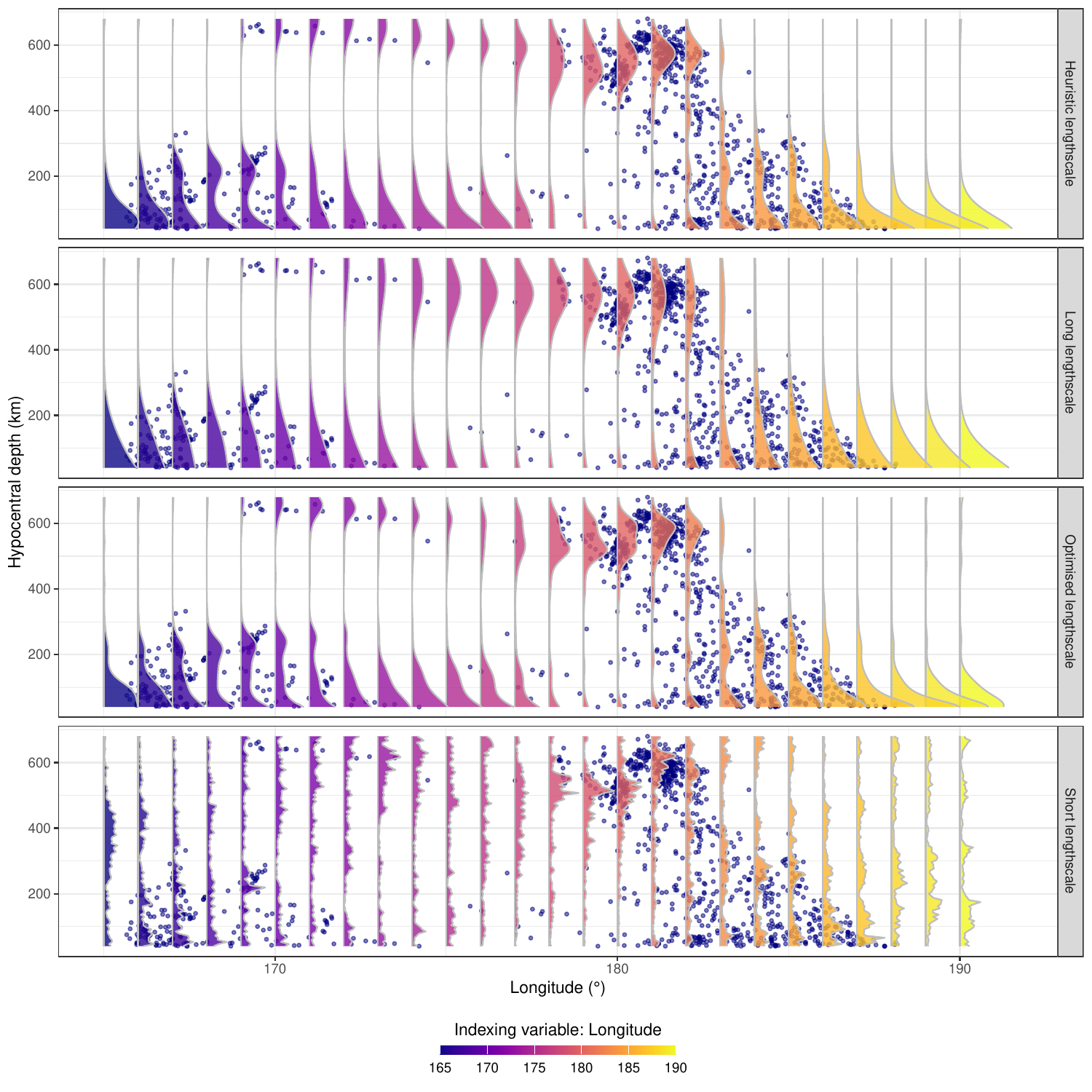}
\caption{Fitted conditional depth densities across \longi under the four length-scale settings.}
\label{fig:plotRibbonsLen1}
\end{figure}

\begin{figure}[H]
\centering 
    \includegraphics[width=\linewidth]{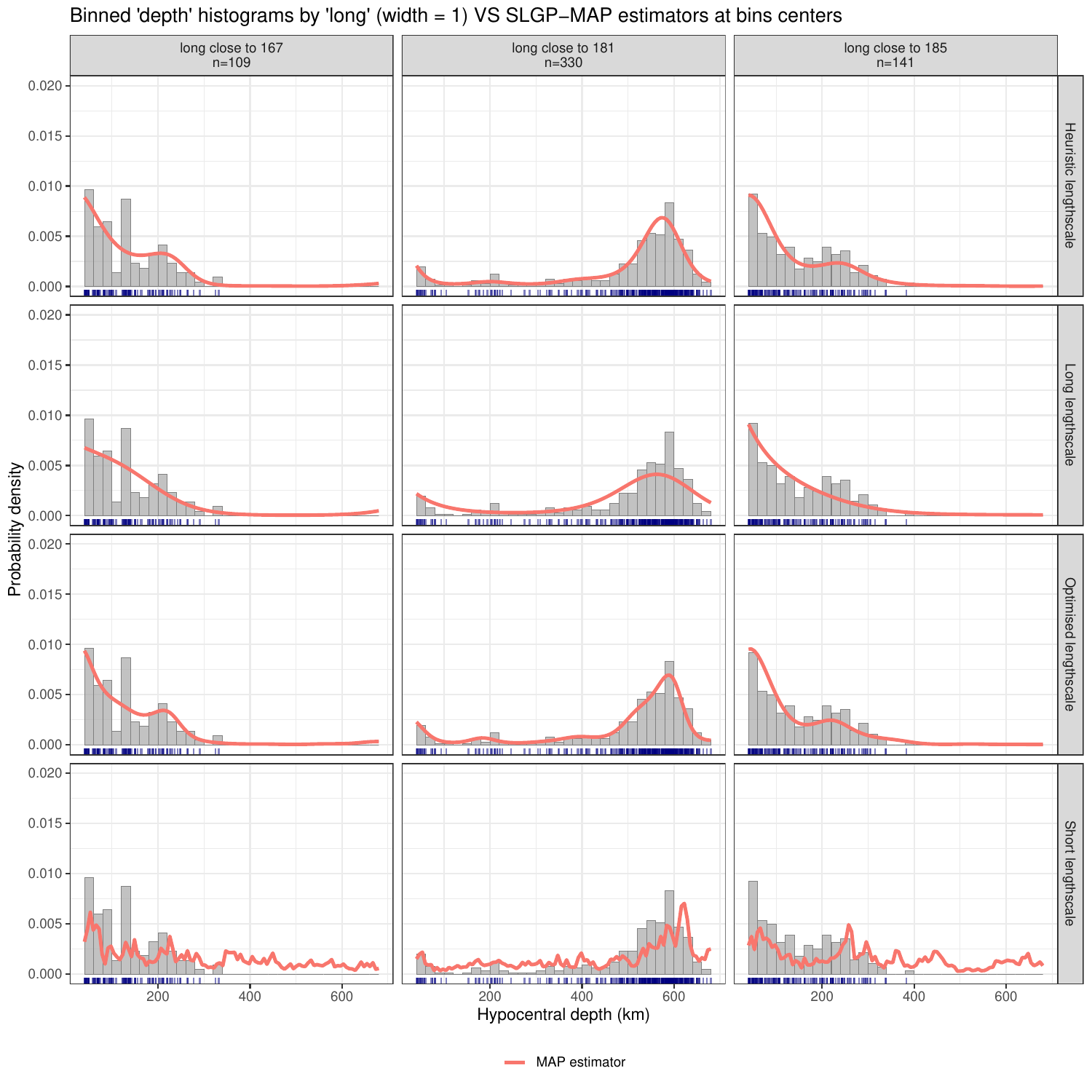}
\caption{Fitted conditional depth densities under the four length-scale settings, against empirical histograms at three \longi slices.}
\label{fig:plotRibbonsLen2}
\end{figure}
Both the heuristic and the optimised length-scales yield a compromise between these two failure modes. For simplicity, we continue with the length-scales given by the heuristic, set to 15\% of the ranges.
\newpage
\section{Local estimation errors under the different sampling schemes}
\label{app:figures}

The figures display the Kullback-Leibler divergence, the Hellinger distance, the Cramér-von Mises criterion and the total variation distance between the estimated and the reference conditional densities as a function of \longi. The qualitative reading is the same in all four cases: the error concentrates where the index has been sparsely or never sampled, and the grid designs leave a visible ripple between their design points.

\begin{figure}[H]
\centering
    \includegraphics[width=0.92\linewidth]{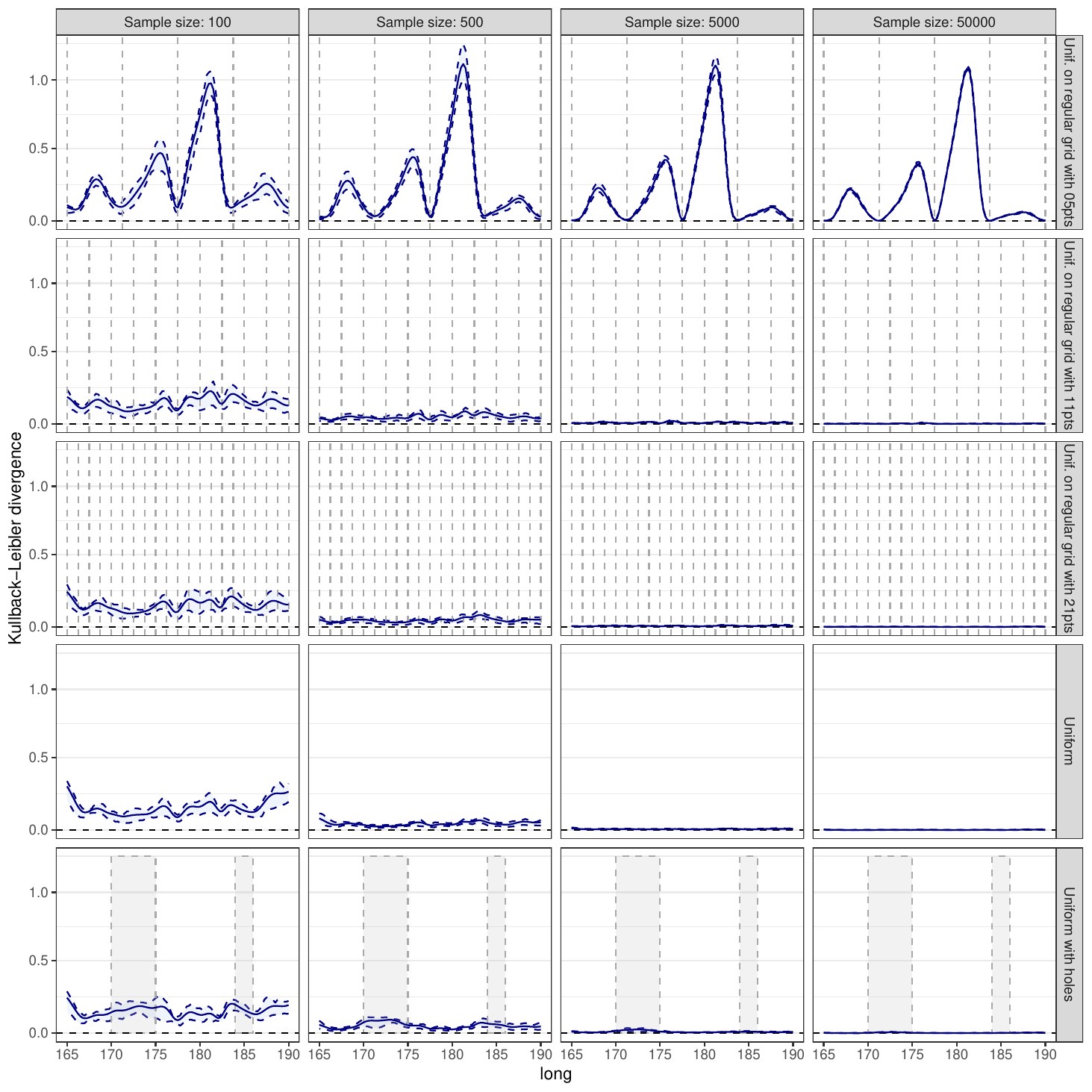}
\caption{Impact of the \longi selection scheme on SLGP density estimation using MAP. Local Kullback–Leibler divergences between the estimated densities and the reference field are shown across varying training sample sizes for different schemes: mean (solid lines) and 25th–75th percentile range (shaded ribbon).}
\label{fig:benchmark3a}
\end{figure}

\begin{figure}[H]
\centering
    \includegraphics[width=0.92\linewidth]{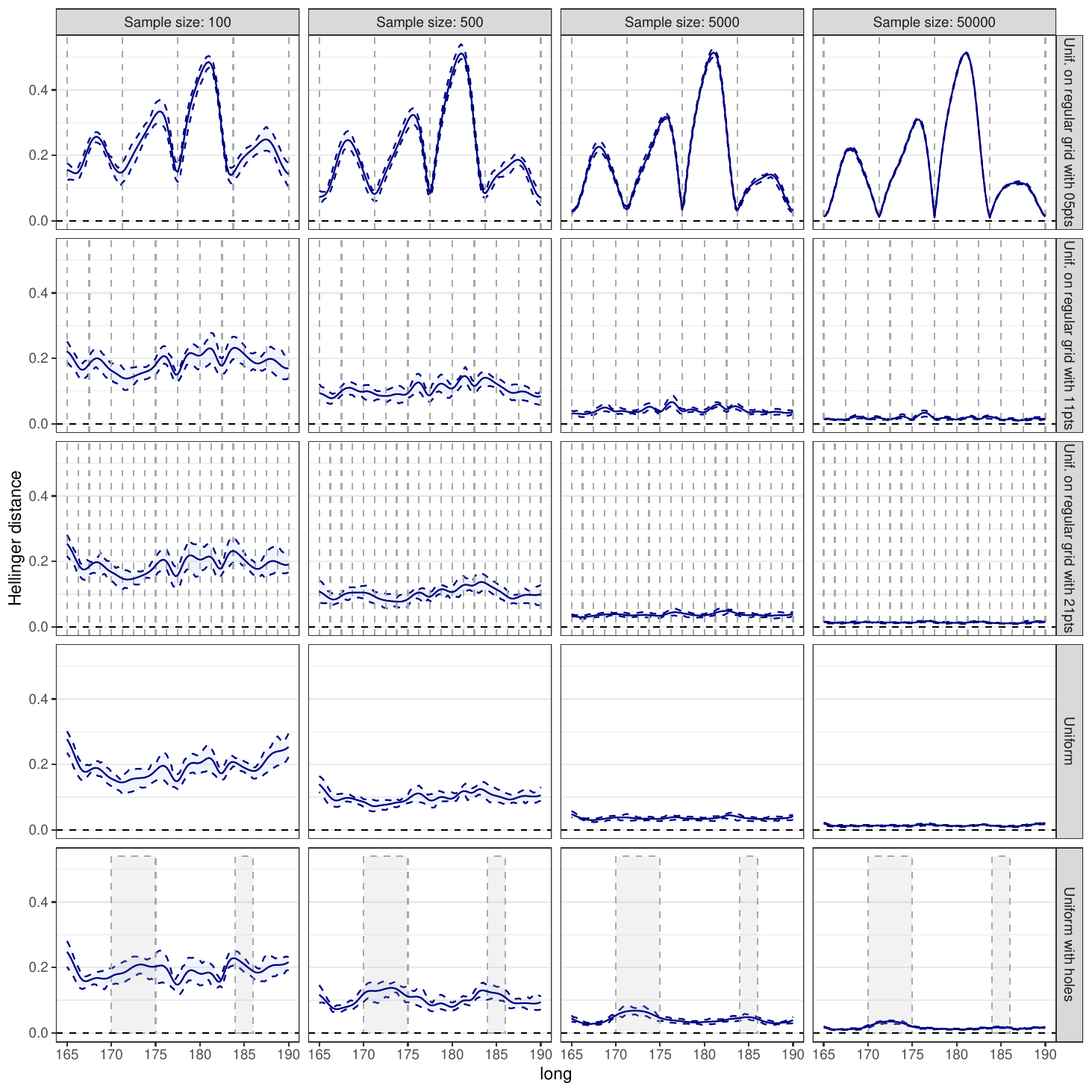}
\caption{Impact of the \longi selection scheme on SLGP density estimation using MAP. Local Hellinger distances between the estimated densities and the reference field, across varying training sample sizes: mean (solid lines) and 25th-75th percentile range (shaded ribbon).}
\label{fig:benchmark3b}
\end{figure}

\begin{figure}[H]
\centering
    \includegraphics[width=0.92\linewidth]{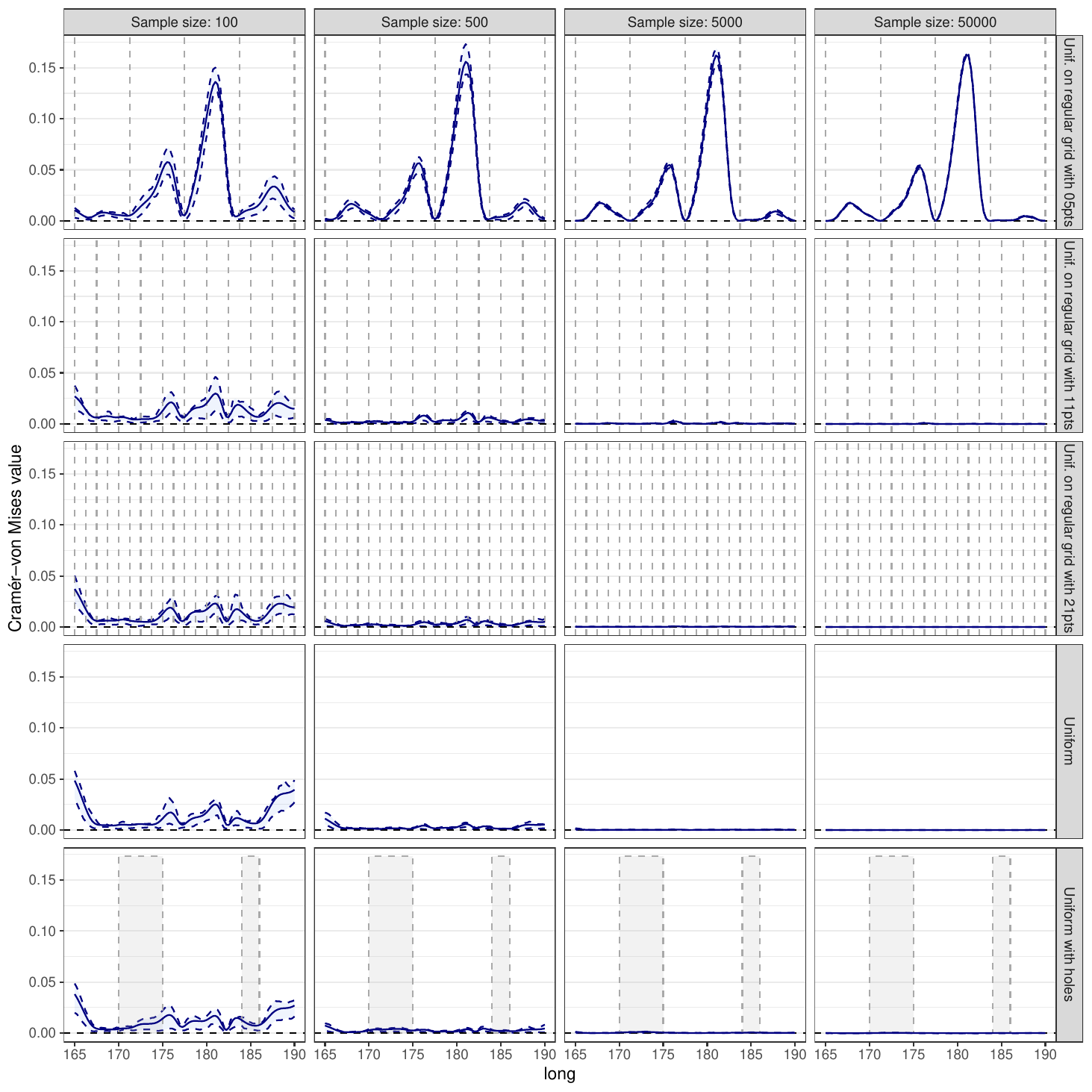}
\caption{Impact of the \longi selection scheme on SLGP density estimation using MAP. Local Cramér-von Mises criteria between the estimated densities and the reference field, across varying training sample sizes: mean (solid lines) and 25th-75th percentile range (shaded ribbon).}
\label{fig:benchmark3c}
\end{figure}

\begin{figure}[H]
\centering
    \includegraphics[width=0.92\linewidth]{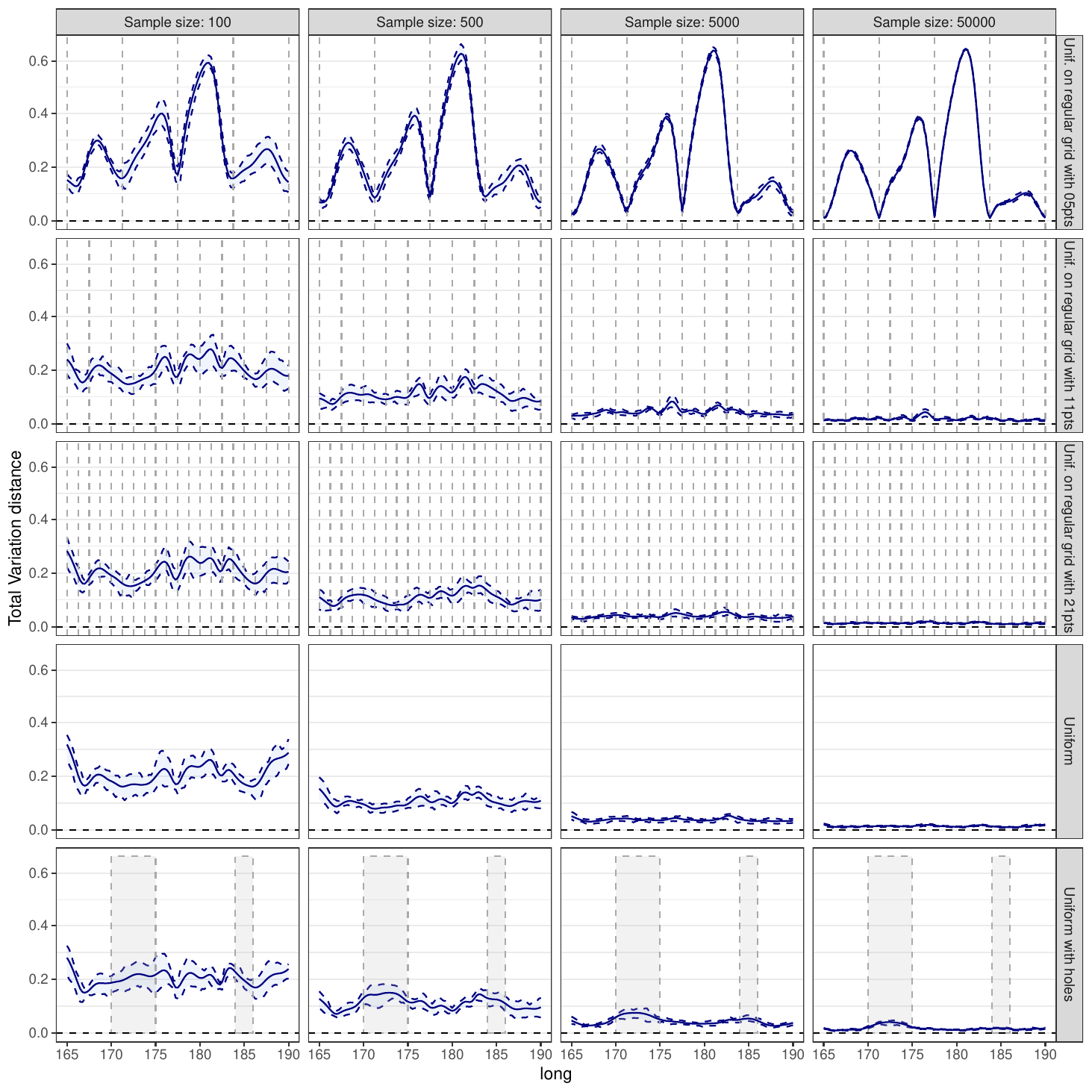}
\caption{Impact of the \longi selection scheme on SLGP density estimation using MAP. Local total variation distances between the estimated densities and the reference field, across varying training sample sizes: mean (solid lines) and 25th-75th percentile range (shaded ribbon).}
\label{fig:benchmark3d}
\end{figure}

\newpage
\section{Further views of the computational cost}
\label{app:timing}
Figure~\ref{fig:timingMethod} compares the three estimation methods at a fixed
integral scheme, and Figure~\ref{fig:timingRatio} the share of the total time
each of them spends in the estimation phase. A share of one means that the
pre-computation is negligible, which is the case for MCMC throughout, one half
means that setup and estimation weigh the same. For MAP the share lies between
0.08 and 0.44, so the pre-computation (and hence the choice of integral
scheme) is the dominant cost of a point estimate.

\begin{figure}[H]
\centering
    \includegraphics[width=0.95\linewidth]{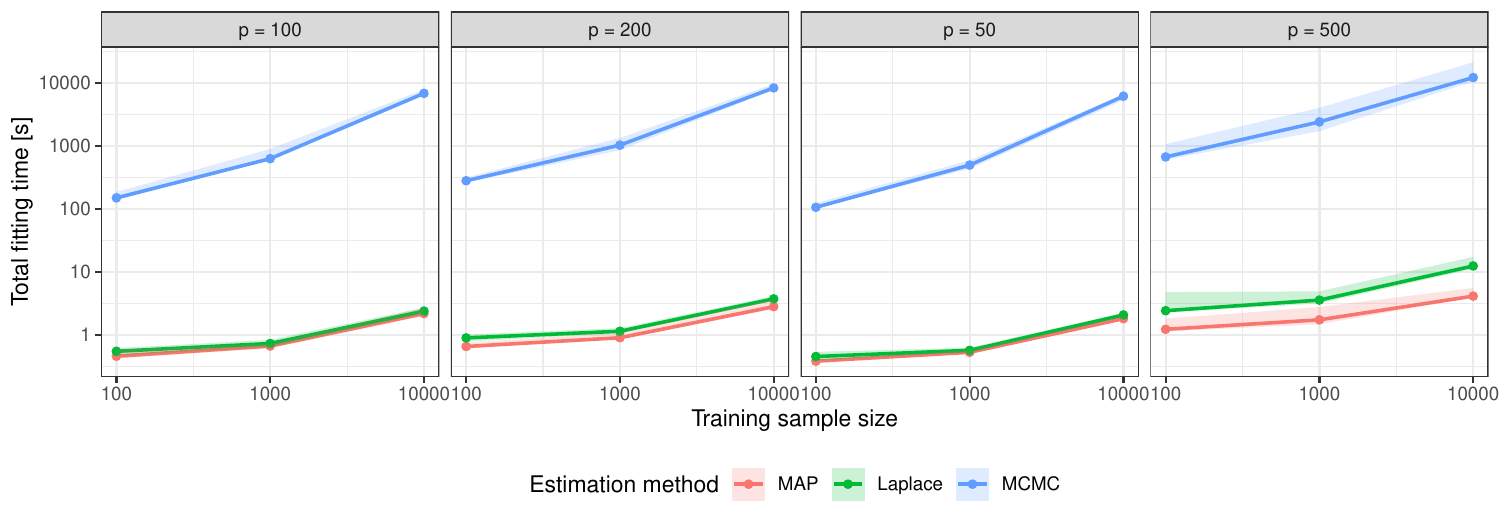}
\caption{Total fitting time of MAP, Laplace and MCMC against the sample size, at four ranks, for the \code{WNN} scheme. Medians and 10th-90th percentiles over ten replicates, logarithmic axes.}
\label{fig:timingMethod}
\end{figure}

\begin{figure}[H]
\centering
    \includegraphics[width=0.95\linewidth]{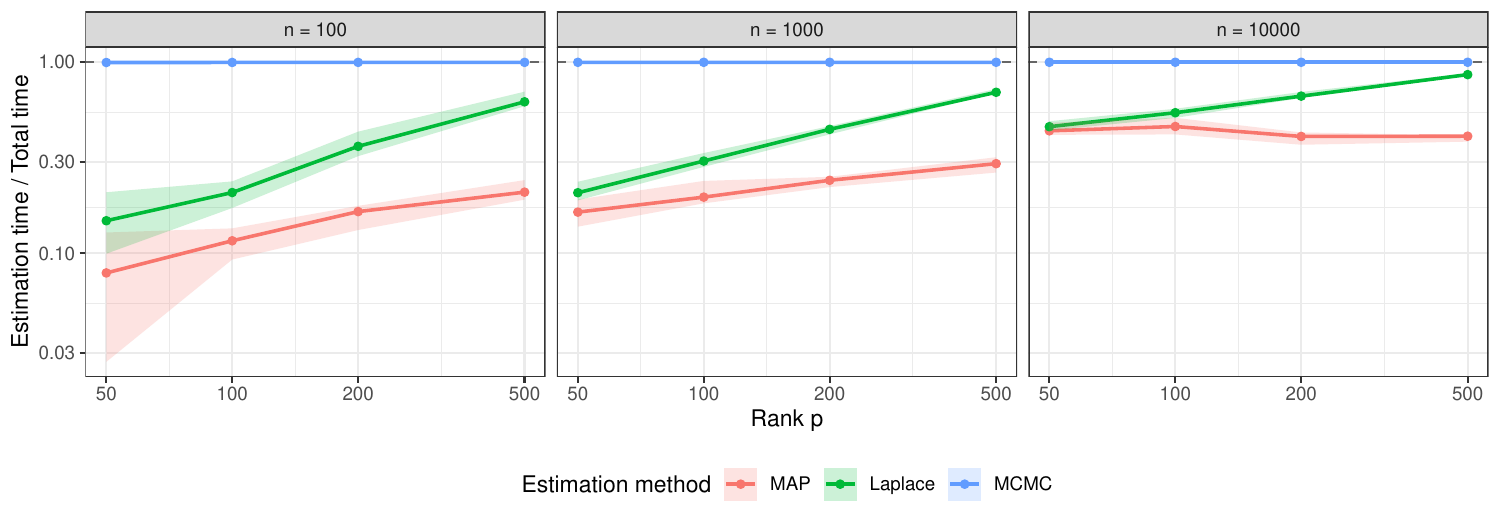}
\caption{Share of the total fitting time spent in the estimation phase, against
the rank, for the three methods.}
\label{fig:timingRatio}
\end{figure}
On the log-log scale the estimation time grows with slope 0.3 to 1.0 in $p$ for
MAP and MCMC, and 1.1 to 1.5 for Laplace, which forms the $p \times p$ Hessian
once, none approaches 2, so no operation of cubic cost dominates over this range of ranks.

\end{document}